\tikzset{vertex/.style = {shape=circle,draw,fill=black,inner sep=0pt,minimum size=5pt}}
\tikzset{edge/.style = {->,> = latex', bend right}}
\tikzset{
	super thick/.style={line width=3pt}
}
\tikzstyle{knot}=[preaction={super thick, white, draw}]
\tikzset{
    quadruple/.style args={[#1] in [#2] in [#3] in [#4]}{
        #1,preaction={preaction={preaction={draw,#4},draw,#3}, draw,#2}
    }
}
\tikzstyle{shaded}=[fill=red!10!blue!20!gray!30!white]
\tikzstyle{unshaded}=[fill=white]
\tikzstyle{empty box}=[circle, draw, thick, fill=white, opaque, inner sep=2mm]
\tikzstyle{annular}=[scale=.7, inner sep=1mm, baseline]
\tikzstyle{rectangular}=[scale=.75, inner sep=1mm, baseline=-.1cm]
\tikzstyle{mid>}=[decoration={markings, mark=at position 0.53 with {\arrow{>}}}, postaction={decorate}]
\tikzstyle{mid<}=[decoration={markings, mark=at position 0.5 with {\arrow{<}}}, postaction={decorate}]
\tikzstyle{over}=[double, draw=white, super thick, double=]
\tikzstyle{box} = [rectangle,draw,rounded corners=5pt,very thick]
\title{Compactification of quasi-local algebras on the lattice}
\author{Jun Ikeda}
\date{\today}
\begin{document}

\maketitle

\begin{abstract}
    We introduce a compactification construction for abstract quasi-local C*-algebras over countable metric spaces equipped with an isometric group action which is functorial with respect to bounded spread isomorphisms. 
    In $1$D, the construction recovers Ocneanu’s Tube algebra for fusion spin chains, and provides a canonical bridge between infinite-volume observables and observables with periodic boundary conditions.
    We exploit this connection to derive an obstruction for the implementability of such topological symmetries as Kramers-Wannier type dualities on symmetric subalgebras. 
\end{abstract}

\tableofcontents

\section{Introduction}
Locality is an organizing principle of quantum field theory. 
In algebraic quantum field theory (AQFT), this principle is encoded in the structure of a net of local algebras \cite{Haa96}, which assigns the algebra of physical observables to each region, satisfying axioms that reflect locality (\autoref{def:quasilocal-staralgebra}).
Concretely, it is required that (\ref{it:quasilocal-staralgebra-preservescolimits}) observables in bounded regions are ``enough data,'' (\ref{it:quasilocal-staralgebra-localcommutativity}) observables in distant regions commute, and (\ref{it:quasilocal-staralgebra-algebraichaagduality}) (weak) algebraic Haag duality is satisfied (\cite{Haa96, NS97}).
The quasi-local algebra \cite{Naa17} is then obtained as the inductive limit of local algebras supported on bounded regions. 
\footnote{In this work, we use the words net of algebra and quasi-local algebra interchangeably.}

A bounded-spread isomorphism (locality-preserving map) between quasi-local algebras formalizes the idea of unitary operation that moves information only a finite distance \cite{SW04}. 
It maps local observables in a region to observables supported in its uniformly bounded neighborhood (\autoref{def:quasilocal-staralgebra-boundedspread}). 
An isomorphism with this property (and whose inverse also has finite spread) is called a bounded-spread isomorphism, and an automorphism with this property on a tensor product quasi-local algebra is the standard definition of a quantum cellular automaton (QCA) \cite{FH20}.
It is worth mentioning that there is another interpretation of such a map; it is a symmetry of the moduli space of local Hamiltonians (i.e., Hamiltonians with all terms supported on uniformly bounded regions \cite{ZCZW19, LDOV21, AFM20, EF23}). 
This gives a more meaningful notion of equivalence of physical theories than using arbitrary isomorphisms of quasi-local algebras.  

Quasi-local algebras and bounded-spread isomorphisms between them arise naturally when considering symmetric subalgebras of a tensor-product quasi-local algebra: for the usual concrete spin chain $A =\colim_I \bigotimes_{i\in I}M_d(\mathbb{C})$, an action of a finite group $G$ (or, more generally, an action encoded by a Hopf algebra or by an MPO realisation of a fusion category $\mathcal{C}$) singles out the invariant subalgebra $A^{G}$ (resp.\ $A^{\mathcal{C}}$) of operators commuting with the action \cite{PhysRevLett93070601}. 
These symmetric subalgebras are the physically relevant local observables for $G$- or $\mathcal{C}$-symmetric Hamiltonians.
Dualities are exactly the bounded spread isomorphisms between the symmetric subalgebras.
Crucially, such dualities need not extend to QCAs of the ambient tensor-product algebras (for example, Kramers-Wannier duality \cite{KW41}), and therefore are hidden in a sense.

A particularly rich class of examples in $(1+1)$-dimension, \textit{fusion spin chain}, arises from unitary fusion categories. 
It generalizes the boundary algebras of toric code \cite{KITAEV20032, Naa11}, quantum double \cite{Naa15}, and Levin-Wen model \cite{LevinWen05, kawagoe2024levinwengaugetheoryentanglement}.
Fix a unitary fusion category $\caC$ and a chosen generating object $X$.
Informally, one places a copy of the ``spin'' labeled by $X$ at each lattice site on a $1$D lattice, with observables on an interval $I$ defined to be the endomorphism algebra of the monoidal product of all spins $\End_\caC (X^{\otimes (\# I)})$ (\autoref{def:fusionspinchainalgebra}). 
Taking the inductive limit, we obtain a quasi-local algebra of a fusion spin chain.

In \cite{jones2024dhrbimodulesquasilocalalgebras}, Jones gave an invariant of abstract quasi-local algebras with respect to bounded-spread isomorphisms: the braided tensor category of DHR bimodules $_\caA \DHR_\caA$. 
It is an operator-algebraic object that encodes an emergent bulk topological field theory: the anyon types, fusion rules, and braiding \cite{evans2025operatoralgebraicapproachfusion}. 
Specializing to fusion spin chains, it is braided equivalent to the Drinfeld center of the underlying fusion category. 
Indeed, as fusion spin chains are the boundary algebras of Levin-Wen, the DHR category reflects the bulk topological order holographically.
It is recognized that given a unitary modular tensor category (UMTC) $\caZ$, not all braided categorical symmetries are implemented on-site in an arbitrary bulk microscopic lattice model whose universality class is described by $\caZ$ \cite{tu2025anomaliesglobalsymmetrieslattice}. 
Restricting to the boundary algebra, this motivates the following problem:
\begin{question}\label{question:bulktqft-symmetry-implementability}
    Given a fusion spin chain, which symmetries of the emergent bulk topological theory (braided autoequivalences of the Drinfeld center) can be implemented by a quantum cellular automaton on the chain?
\end{question}

In the physics literature, instead of taking the inductive limit of algebras of local observables, it is common to assume periodic boundary conditions on the underlying geometry. 
Such a ``compactified'' correspondence to a quasi-local algebra of a fusion spin chain is Ocneanu's Tube algebra \cite{Mug03} (\autoref{def:fusioncategory-induction}).
To benefit from analysis in the thermodynamic limit and under periodic boundary conditions, it is important to precisely understand the relationship between the quasi-local algebra and its compactified version. 
So far, it is not clear how to compactify an abstract quasi-local algebra.

\begin{question}\label{question:boundary-quasilocal-recovery}
    Given a quasi-local algebra, is there a natural notion of compactification? 
    How much information is preserved or lost after compactification?
\end{question}

In this work, we give partial answers to these two problems by proposing a functorial construction of the \textit{compactified algebra} with respect to a group of isometries $G$ on the metric space (\autoref{def:DTcptalgebra}). 
The idea is to start from a quasi-local algebra over a countable space $L$ and compactify it to the fundamental domain such that its local behaviour is preserved.
\footnote{Strictly speaking, the fundamental domain is not always compact in all dimensions.}
Note that bounded-spread isomorphisms do not preserve the class of quasi-local algebras defined by unitary fusion categories.
In other words, a non-fusion categorical quasi-local algebra can be bounded-spread isomorphic to a fusion categorical one.
Our construction, therefore, does not rely on the data of unitary fusion category; instead, we define an algebra by assuming stronger locality assumptions on quasi-local algebras and extracting local generators and local relations.
We also mention that our definition of the compactified algebra does not depend on the dimension of the underlying lattice. 
We thus expect that this will be a component of future studies of general $(d+1)$-dimensional AQFT. 

Our first result, which addresses \autoref{question:boundary-quasilocal-recovery}, is that an infinite sequence of compactified algebras can partially recover the quasi-local algebras over an infinite space (\autoref{thm:asymptotic-cptalgebra-functor}).
For example, in the case of a fusion spin chain in $(1+1)$-dimension, this is an infinite sequence of Tube algebras with increasing circumference, and it, in a sense, completely recovers the isomorphism class of quasi-local algebra (\autoref{cor:fusionspinchainalgebra-asymptotic-cptalgebra}).
\footnote{More precisely, our functorial compactification is faithful.}
This is more subtle than one might expect because, given two bounded regions $\Lambda \subseteq \Delta$, there is no natural embedding of $\Tube_\caC (X^{\otimes (\# \partial \Lambda)})$, the Tube algebra assigned to the boundary $\partial \Lambda$, into $\Tube_\caC (X^{\otimes (\# \partial \Delta)})$, and therefore we cannot simply take the inductive limit $\colim_{k} \Tube_\caC (X^{\otimes k})$.

Our second result is a partial answer to \autoref{question:bulktqft-symmetry-implementability}.
Specializing the compactification to fusion spin chains, for each bounded spread automorphism $\alpha$, we obtain $\Tube^k (\alpha)$, an automorphism of the Tube algebra (\autoref{cor:fusionspinchainalgebra-DTcptalgebra-boundedspread}).
By investigating the relationship between the symmetry of the bulk TQFT (braided autoequivalence of $\caZ (\caC)$) induced by $\alpha$ and a functor induced by $\Tube^k (\alpha^{-1})^*$, we obtain a certain stabilization property of such autoequivalences (\autoref{cor:fusionspinchainalgebra-obstruction}).
\begin{customcor}{4.9}
    Let $\caA = \caA(\caC, X)$ be a fusion spin chain algebra with $\caC$ a unitary fusion category and $X$ a strong tensor generator with $n \ge 1$.
    Let $\alpha: \caA \to \caA$ a $\ZZ$-equivariant bounded-spread isomorphism with spread $s$.
    Then, for any $k > \max \{14 n + 4 s, 6 n + 12 s\}$, $I(X^{\otimes k}) \in \caZ (\caC)$ is fixed by $\DHR(\alpha)$ up to isomorphism and $\Inv (\caZ (\caC))$ orbit.
\end{customcor}
For example, we prove obstructions to the implementability of Kramers-Wannier type dualities of abelian anyon theories \cite{KW41} on lattice models, such as $e-m$ swap in $\ZZ / 2 \ZZ$-gauge theory (\autoref{ex:fusionspinchainalgebra-dhr-picard-rep-abelian}).
\begin{customex}{4.10}[$\ZZ / 2 \ZZ$ case]
    Let $a, b \in \NN$ and $a \neq b, d:= a + b$, $H = \diag (I_{a \times a}, -I_{b \times b}) \in M_{d} (\CC)$.
    For each finite interval $I \subseteq \ZZ$, let
    \begin{align}
        \caA_I := \setbuilder{M \in (M_d (\CC))^{\otimes |I|}}{H^{\otimes |I|} M \paren{H^{\otimes |I|}}^{-1} = M }
    \end{align}
    be the symmetric subalgebra.
    Then, $e-m$ swap in the emergent TQFT is not induced by quantum cellular automata on $\caA := \colim_{I} \caA_I$. 
    % Let $\caA_{I} = M_d (\CC)^{\otimes |I|}$
\end{customex}
We also show an example of the criterion applied on a nonabelian theory, namely, $D_4$ (\autoref{ex:fusionspinchainalgebra-dhr-picard-rep-nonabelian}).

In Section \ref{sec:preliminary}, we review the definition and properties of quasi-local *-algebras, fusion spin chain algebras, and related concepts. 
In Section \ref{subsec:localgenerationpresentation}, we define two additional conditions on quasi-local algebras: \textit{local generation} and \textit{local presentation.} 
Under these assumptions, we define in Section \ref{subsec:compactifiedalgebra} the \textit{compactified algebra.} 
We first define a parameter-dependent version and use the infinite sequence and categorical colimit technique to remove the dependence. 
In Section \ref{subsec:fusionspinchaincompactified}, we apply the general compactification developed in the previous section to fusion spin chains and verified that it matches the usual Tube algebra. 
We then study its implications for the implementability of the symmetries of bulk TQFT on the lattice model in Section \ref{subsec:obstruction}.

\section*{Acknowledgements}
The author is deeply grateful to the Dale and Suzanne Burger Fellowship and Caltech SURF Program 2025 for their generous support of the summer research. 

The author also thanks Corey Jones for invaluable mentorship and guidance, and his students for helpful instruction and sharing their expertise.

Furthermore, the author wishes to thank Nori Ohata, Akihiko Iyoda, Yoshihiro Takahashi, Shinji Kaino, the Abe Ryo Foundation, Ineko Shimizu, Reo Yanagi, Taisei Nakata, and Takashi Nakayama for their support regarding relevant experiences.

\section{Preliminary notations, definitions, and properties}\label{sec:preliminary}
Throughout the paper, let $(L, d)$ denote a countable metric space with bounded geometry (i.e., for all $D \ge 0$ there exists $N_D \ge 0$ such that for any ball $B$ of diameter at most $D$, $\#B \le N_D$). 
In particular, the metric topology is discrete.
Let $B_r (x) = \{y \in L \mid d(x, y) < r\}$.

For $U \subseteq L$, let
\begin{align}
    \frP   (U) =& \Open(U) = \{V \subseteq U\}  \\
    \frF   (U) =& \{V \in \frP (U); \# V < \infty\} \\
    \frB   (U) =& \{B_r (x) \subseteq U; x \in U, r > 0\} \\
    \frP_D (U) =& \frF_{D} (U) = \{V \in \frP (U); \diam (V) \le D\} \\ 
    \frB_D (U) =& \{B_r (x) \in \frB (U); 2 r \le D\}. 
\end{align}

The base field of algebras is always taken to be $\CC$ unless otherwise specified.
Let $\stAlg$ denote the category of *-algebras over $\CC$. 
Given a *-algebra $\caA$, let $\stAlg (\subseteq \caA)$ denote the category of the poset of *-subalgebras of $\caA$ with morphisms being the inclusions of *-subalgebras.

Given two algebras $R \subseteq S$ and a two-sided ideal $J \subseteq S$, let $(J)^c_R := J \cap R$ denote the contraction of $J$ to a two-sided ideal of $R$.
If $I \subseteq R$, let $(I)^e_S := S I S$ denote the extension of $I$ to a two-sided ideal of $S$.

\subsection{Quasi-Local *-Algebras}
Recall the definition of quasi-local algebra. 
It is common to call the precosheaf a net of local algebras and its inductive limit a quasi-local algebra.
In this paper, we call the whole structure a quasi-local algebra.
\begin{definition}[\cite{Naa17}]\label{def:quasilocal-staralgebra}
    A \underline{quasi-local *-algebra} $\caA$ over $L$ consists of a *-algebra $\caA_L$ and a precosheaf over $L$
    \begin{align}
        \caA_{(\cdot)}: \frP (L) &\to \stAlg (\subseteq \caA_L) \\
        U &\mapsto \caA_U \\
        (V \subseteq U) &\mapsto (\iota_{V \subseteq U}: \caA_V \hookrightarrow \caA_U)
    \end{align}
    such that 
    \begin{enumerate}
        \item \label{it:quasilocal-staralgebra-preservescolimits}
            \underline{finite determination}:
            it preserves colimits of the form $U = \colim_{V \in \frF (U)} V$ for all $U \in \frP (L)$.
            That is, $\caA_U$ is the smallest *-subalgebra of $\caA$ containing $\caA_{V}$ for all $V \in \frF (U)$.
            \begin{align}
                \caA_U = \caA_{\bigcup_{V \in \frF (U)} V} = \caA_{\colim_{V \in \frF (U)} V} = \colim_{V \in \frF (U)} \caA_{V} = \bigcup_{V \in \frF (U)} \caA_{V}.
            \end{align}
        \item \label{it:quasilocal-staralgebra-localcommutativity}
            \underline{$\delta$-local commutativity}
            \footnote{
                This is stronger than the one in \cite{Naa17}, which reads $\ker (\caA_U * \caA_V \to \caA_{U \cup V}) \supseteq [\caA_U, \caA_V]$.
                Our extra assumption excludes the pathological example where $\caA_U$ and $\caA_V$ are two mutually commuting algebras but $\caA_U \otimes \caA_V$ does not inject into $\caA_{U \cup V}$ for sufficiently distant $U, V$.
                Intuitively, observables in distant regions should be \textit{unentangled}.
            }:
            there exists $\delta \ge 0$ such that for any $U, V \in \frP (L)$ with $d(U, V) > \delta$, $\ker (\caA_U * \caA_V \to \caA_{U \cup V}) = [\caA_U, \caA_V]$.
        \item \label{it:quasilocal-staralgebra-algebraichaagduality}
            \underline{$(R, \gamma)$-(weak) algebraic Haag duality \cite{Haa96, NS97}}: 
            there exists $R, \gamma \ge 0$ such that for every $F \in \caB (L)$ with $\diam(F) \ge R$, $Z_{\caA}(\caA_{F^c}) \subseteq \caA_{F^{+\gamma}}$ where $Z_{\caA}(\cdot)$ denotes the centralizer in $\caA$.
    \end{enumerate}
\end{definition}
\begin{remark}\label{rem:quasilocal-cstaralgebra}
    We mainly consider quasi-local *-algebras, but with modifications, the definitions and some of the results can be adapted to quasi-local C*-algebras, which are more commonly used \cite{Naa17, jones2024dhrbimodulesquasilocalalgebras}. 
    One advantage of working in the C*-setting is that it contains local observables with tails that are spread over space.  
    The definition reads as follows.
    Let $\CstAlg$ denote the category of C*-algebras over $\CC$, and let $\CstAlg(\subseteq\caA)$ denote the poset of C*-subalgebras of $\caA$ (with inclusions as morphisms).
    A \underline{quasi-local C*-algebra} $\caA$ over $L$ consists of a C*-algebra $\caA_L$ and a precosheaf over $L$ 
    \begin{align}
        \caA_{(\cdot)}: \frP (L) &\to \CstAlg (\subseteq \caA_L) \\
        U &\mapsto \caA_U \\
        (V \subseteq U) &\mapsto (\iota_{V \subseteq U}: \caA_V \hookrightarrow \caA_U)
    \end{align}
    such that
    \begin{enumerate}
        \item \label{it:quasilocal-cstaralgebra-preservescolimits}
            \underline{finite determination}:
            it preserves colimits of the form $U = \colim_{V \in \frF (U)} V$ for all $U \in \frP (L)$.
            That is, $\caA_U$ is the smallest C*-subalgebra of $\caA$ containing $\caA_{V}$ for all $V \in \frF (U)$.
            \begin{align}
                \caA_U = \caA_{\bigcup_{V \in \frF (U)} V} = \caA_{\colim_{V \in \frF (U)} V} = \colim_{V \in \frF (U)} \caA_{V} = \overline{\bigcup_{V \in \frF (U)} \caA_{V}}^{\norm{\cdot}}.
            \end{align}
        \item \label{it:quasilocal-cstaralgebra-localcommutativity}
            \underline{$\delta$-local commutativity}:
            there exists $\delta \ge 0$ such that for any $U, V \in \frP (L)$ with $d(U, V) > \delta$, $\ker (\caA_U * \caA_V \to \caA_{U \cup V}) = [\caA_U, \caA_V]$.
        \item \label{it:quasilocal-cstaralgebra-algebraichaagduality}
            \underline{$(R, \gamma)$-(weak) algebraic Haag duality}: 
            there exists $R, \gamma \ge 0$ such that for every $F \in \caB (L)$ with $\diam(F) \ge R$, $Z_{\caA}(\caA_{F^c}) \subseteq \caA_{F^{+\gamma}}$ where $Z_{\caA}(\cdot)$ denotes the centralizer in $\caA$.
    \end{enumerate}
\end{remark}

Establishing the definition of algebras, we now consider morphisms between quasi-local *-algebras to form a category. 
Arbitrary *-algebra homomorphisms between quasi-local *-algebras do not necessarily make sense physically, as any locality-preserving symmetry or discrete time evolution should not propagate information an infinite distance in a single time step.
This observation motivates the following definition. 

\begin{definition}[\cite{SW04}]\label{def:quasilocal-staralgebra-boundedspread}
    Let $\caA, \caB$ be quasi-local *-algebras over $L$ and $\alpha: \caA_L \to \caB_L$ be a *-algebra homomorphism.
    $\alpha$ is a \underline{bounded-spread *-algebra homomorphism} if there exists $s \ge 0$ such that for every $U \in \frF (L)$, 
    \begin{align} \label{eq:quasilocal-staralgebra-boundedspread-condition}
        \alpha|_{\caA_U}: \caA_U \to \caB_{U^{+s}}, \quad U^{+s} := \{ x \in L \mid d(x, U) \le s \}.
    \end{align}
    In other words, $\alpha_{(\cdot)}: \caA_{(\cdot)} \to \caB_{(\cdot)^{+s}}$ defines a natural transformation by restriction $\alpha_U = \alpha|_{\caA_U}$.

    $\alpha$ is a \underline{bounded-spread isomorphism} if $\alpha$ is an isomorphism and if both $\alpha, \alpha^{-1}$ have bounded spread.
    Furthermore, if $\caA = \caB$, $\alpha$ is a \underline{quantum cellular automaton (QCA)}.
\end{definition}

If $\alpha$ is an inner automorphism (i.e., $\alpha(\cdot) = \paren*{\prod_i v_i} (\cdot) \paren*{\prod_i v_i^*}$ for some unitaries $v_i \in \caA_{U_i}$ and disjoint family $\{U_i\}$), then this is often called a finite-depth quantum circuit (FDQC). 
It is shown that the subgroup of FDQC in the group of QCA induces trivial braided autoequivalences of the bulk TQFT \cite{jones2024dhrbimodulesquasilocalalgebras}.

Often, we consider algebras of observables and discrete time evolution that ``look the same everywhere.''
For example, when $L = \ZZ$, many examples of interest have translation covariance. 

\begin{definition}\label{def:quasilocal-staralgebra-gcovariant}
    Let $\caA$ be a quasi-local *-algebra over $L$. 
    Given a subgroup $G$ of isometry group $\Iso(L, d)$, $\caA$ is \underline{$G$-covariant} if there is a group homomorphism $\phi: G \to \Aut (\caA_L)$ such that $\phi_g (\caA_U) = \caA_{gU}$ for any $g \in G$ and $U \in \frP (L)$.

    Let $\alpha: \caA \to \caB$ be a bounded-spread *-algebra homomorphism of spread $s \ge 0$ between two $G$-covariant quasi-local *-algebras over $L$ $(\caA, \phi^\caA), (\caB, \phi^\caB)$.
    $\alpha$ is \underline{$G$-equivariant} if for any $g \in G, U \in \frP(L)$, the following diagram commutes:
    \begin{equation}\begin{tikzcd}
        \caA_U \arrow[r, "\alpha_U"] \arrow[d, "\phi^\caA_g"] & \caB_{U^{+s}} \arrow[d, "\phi^\caB_g"] \\
        \caA_{gU} \arrow[r, "\alpha_{gU}"'] & \caB_{(gU)^{+s}}
    \end{tikzcd}\end{equation}
\end{definition}

$G$-covariance is an additional structure and not a property of $\caA$. 
For example, given a QCA $\alpha$ on $\caA$ with bounded-spread $s=0$, pre-composition or post-composition with $\phi_g$ yields a new $G$-covariance structure $\phi'_g$.

With the quasi-local algebras and bounded-spread homomorphisms defined, we form a category. 

\begin{definition}\label{def:quasilocal-staralgebra-cat-gcovariantcat}
    Let $\QLstAlg_{L}$ denote the category of quasi-local *-algebras over $L$ with objects: quasi-local *-algebras $\caA$ over $L$, and morphisms: bounded-spread *-algebra homomorphisms $\alpha$.

    Let $\QLstAlg_{L, G}$ denote the category of $G$-covariant quasi-local *-algebras over $L$ with objects: $G$-covariant quasi-local *-algebras $(\caA, \phi)$ over $L$, morphisms: $G$-equivariant bounded-spread *-algebra homomorphisms $\alpha$.
\end{definition}

It was realized in \cite{jones2024dhrbimodulesquasilocalalgebras} that DHR theory of superselection sectors formulated in the setting of quasi-local C*-algebras forms a braided tensor category.
Furthermore, a bounded-spread isomorphism induces a braided equivalence between the corresponding DHR categories.

For later adoption to *-algebra instead of C*-algebra, we define a pre-DHR bimodule, which is a pre-Hilbert module with the same localization property.
In other words, we remove the norm complete condition. 
For more detail, see \cite{Arveson1976Invitation}.
In our application to fusion spin chains, these two categories are equivalent. 

\begin{definition}[\cite{jones2024dhrbimodulesquasilocalalgebras}]\label{def:quasilocal-staralgebra-dhr}
    Let $\caA$ be a quasi-local C*-algebra (*-algebra) over $L$.
    A \underline{(pre-)DHR bimodule} is a (pre-)Hilbert $(\caA, \caA)$-bimodule $M$ such that for any sufficiently large $U \in \frF (L)$, there is a right projective basis $\{b^U_i\}$ such that for all $a \in \caA_{U^c}, b^U_i \triangleleft a = a \triangleright b^U_i$.
    Let $_\caA \DHR_\caA$ ($_\caA \pDHR_\caA$) denote the full subcategory of the category of (pre-)Hilbert $(\caA, \caA)$-bimodules whose objects are (pre-)DHR bimodules and morphisms are adjointable bimodule maps.

    $_\caA \DHR_\caA$ has a canonical braiding. 
    If $\alpha: \caA \to \caB$ is a bounded-spread isomorphism, then $\DHR(\alpha) = (\alpha^{-1})^*: _\caA \DHR_\caA \to _\caB \DHR_\caB$ is a braided autoequivalence.
    $(\alpha^{-1})^*$ denotes the pullback functor of (pre-)Hilbert bimodules along $\alpha^{-1}$: for a (pre-)Hilbert $(\caA, \caA)$-bimodule $M$ with $\braket{\cdot}{\cdot}: M \times M \to \caA$, 
    $(\alpha^{-1})^* (M)$ has $M$ as the underlying $\CC$-vector space with the actions and the bilinear form defined by
    \begin{align}
        b \triangleright_{\alpha^{-1}} m &:= \alpha^{-1} (b) \triangleright m \\ 
        m \triangleleft_{\alpha^{-1}} b &:= m \triangleleft \alpha^{-1} (b) \\
        \braket{m}{m'}_{\alpha^{-1}} &:= \alpha (\braket{m}{m'})
    \end{align}
\end{definition}

Observe that we need the invertibility to pull back the bilinear form. 
This is why we mainly consider the core (maximal groupoid) of our category $\QLstAlg_L$. 

\subsection{Properties of Fusion Spin Chains}
Recall that a unitary fusion category is a finitely semisimple rigid unitary tensor category in which the monoidal unit is simple \cite{EGNO15}. 
It is a categorification of simple algebras, and it generalizes representation categories of a Hopf algebra, while still qualifying to be ``symmetries'' in a broad sense \cite{FMT22}. 
Fusion spin chain is a particularly well-studied class of quasi-local algebra built from the input data of a unitary fusion category and a generating object, and it covers many important physical models.

\begin{definition}[\cite{jones2024dhrbimodulesquasilocalalgebras}]\label{def:fusionspinchainalgebra}
    Let $\caC$ be a unitary fusion category and $X \in \Ob(\caC)$ be a strong tensor generating object (i.e., there exists $n \ge 1$ such that for all $a \in \Irr(\caC)$ the multiplicity $\dim \Hom_\caC (a, X^{\otimes n})$ is at least $1$).
    Then, the \underline{fusion spin chain algebra} $\caA (\caC, X)$ is the quasi-local *-algebra over $\ZZ$ defined by
    \begin{align}
        \caA_I := \End_{\caC} (X^{\otimes \# I}), \quad I \in \frB (\ZZ)
    \end{align}
    where $\frB (\ZZ)$ is the set of finite intervals.
    There is an obvious inclusion $\End_{\caC} (X^{\otimes \# I}) \hookrightarrow \End_{\caC} (X^{\otimes \# J})$ for $I \subseteq J$.
    For $I = I_1 \sqcup \cdots \sqcup I_l \in \frF (\ZZ)$ with each $I_i \in \frB (\ZZ)$ and $d(I_i, I_j) > 1$, we define $\caA_I := \bigotimes_{i=1}^l \caA_{I_i}$.
    Thus for any $I, J \in \frF (\ZZ)$ with $d(I, J) > 1 =: \delta$, $\ker (\caA_I * \caA_J \to \caA_{I \cup J}) = [\caA_I, \caA_J]$. 
    $\caA$ satisfies algebraic Haag duality with $R = n$ and $\gamma = 0$.
\end{definition}

Observe that fusion spin chain algebras have an obvious translation covariance. 

\begin{proposition}\label{prop:fusionspinchainalgebra-zzcovariant}
    Let $\caA = \caA (\caC, X)$ be a fusion spin chain algebra over $\ZZ$ with $(X, n)$ a strong tensor generating object.
    Then, $\caA$ is $\ZZ$-covariant with the group homomorphism $\phi: \ZZ \to \Aut (\caA_L)$ defined by obvious translation isomorphism $\phi_g: \caA_I \to \caA_{I+g}$.
\end{proposition}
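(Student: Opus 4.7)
The plan is to define $\phi_g$ first on local algebras, verify compatibility with the structure maps, and then extend by the finite determination property.

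First, for any single finite interval $I \in \frB(\ZZ)$, the algebra $\caA_I = \End_\caC(X^{\otimes \# I})$ depends only on the cardinality $\# I$. Since translation by $g \in \ZZ$ is a bijection on $\ZZ$ preserving cardinality, we have a canonical identification $\caA_I = \End_\caC(X^{\otimes \# I}) = \End_\caC(X^{\otimes \# (I+g)}) = \caA_{I+g}$, and we take $\phi_g|_{\caA_I}$ to be this identity map. For a general finite $I = I_1 \sqcup \cdots \sqcup I_l \in \frF(\ZZ)$ with each $I_i \in \frB(\ZZ)$ pairwise at distance $> 1$, we have $\caA_I = \bigotimes_{i=1}^l \caA_{I_i}$, and translation preserves both the decomposition and pairwise distances, so we define $\phi_g|_{\caA_I} := \bigotimes_i \phi_g|_{\caA_{I_i}}$.

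Next, I would verify that for $I \subseteq J$ in $\frF(\ZZ)$, the square
\begin{equation*}
\begin{tikzcd}
\caA_I \arrow[r, hook] \arrow[d, "\phi_g|_{\caA_I}"'] & \caA_J \arrow[d, "\phi_g|_{\caA_J}"] \\
\caA_{I+g} \arrow[r, hook] & \caA_{J+g}
\end{tikzcd}
\end{equation*}
commutes. This reduces, by the tensor decomposition, to the case of nested intervals, where the inclusion $\End_\caC(X^{\otimes \# I}) \hookrightarrow \End_\caC(X^{\otimes \# J})$ is given by padding with identity morphisms on the extra factors; since both $\phi_g$'s are the identity on the underlying endomorphism algebra, the square commutes tautologically. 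Once this holds, the family $\{\phi_g|_{\caA_V}\}_{V \in \frF(\ZZ)}$ is compatible with the diagram defining the colimit, and by finite determination (\ref{it:quasilocal-staralgebra-preservescolimits}) there is a unique *-algebra homomorphism $\phi_g: \caA_L \to \caA_L$ extending it.

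The group homomorphism property $\phi_0 = \id$ and $\phi_{g+h} = \phi_g \circ \phi_h$ follows immediately from the pointwise definition on each $\caA_I$, since translation by $g+h$ equals translation by $g$ followed by translation by $h$ at the level of subsets, and both sides are the identity on $\End_\caC(X^{\otimes \# I})$. In particular $\phi_{-g}$ is inverse to $\phi_g$, so each $\phi_g \in \Aut(\caA_L)$. For the covariance condition $\phi_g(\caA_U) = \caA_{gU}$, one checks it first for $U \in \frF(L)$ directly from the construction, and then extends to arbitrary $U \in \frP(L)$ using finite determination applied to both sides: $\phi_g(\caA_U) = \phi_g(\bigcup_{V \in \frF(U)} \caA_V) = \bigcup_{V \in \frF(U)} \caA_{gV} = \caA_{gU}$.

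There is no substantive obstacle here: the whole statement is essentially a compatibility check, and the only point that deserves care is ensuring that the inclusion maps in the fusion spin chain (both the interval-extension maps and the tensor-product identifications used for disjoint finite subsets) really are preserved by the translation. This reduces to the observation that they are determined entirely by the combinatorics of $\# I$ together with the tensor structure of $\caC$, neither of which sees the location of $I$ inside $\ZZ$.
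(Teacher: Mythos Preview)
Your proof is correct. The paper itself states this proposition without proof, treating it as self-evident (the statement already says ``obvious translation isomorphism''), so there is no paper proof to compare against. Your write-up is a careful expansion of exactly the details one would check: defining $\phi_g$ on finite local pieces via the tautological identification $\End_\caC(X^{\otimes \# I}) = \End_\caC(X^{\otimes \#(I+g)})$, verifying naturality with respect to the inclusion maps, extending by finite determination, and then reading off the group law and covariance. Nothing is missing.
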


\begin{example}\label{ex:fusionspinchainalgebra-tc-levinwen}
    For example, the boundary algebra of a half plane in Kitaev's quantum double model \cite{KITAEV20032, Naa11, Naa15} is a fusion spin chain with $X = \bigoplus_{a \in \Irr(\caC)} a$ and $\caC = \Rep(G)$ for a finite group $G$. 
    More generally, Levin-Wen's string net model \cite{LevinWen05, kawagoe2024levinwengaugetheoryentanglement} has a fusion spin chain as the boundary algebra of a half plane with $X = \bigoplus_{a \in \Irr(\caC)} a$.
\end{example}

Recall the following standard isomorphism.

\begin{proposition}\label{prop:fusioncategory-factorization}
    Let $\caC$ be a unitary fusion category and $X, Y, Z, Z' \in \Ob(\caC)$.
    Then, if for all $s \in \Irr (\caC)$ $m_{Z,s} = \dim \caC (Z \to s) \ge 1$,
    \begin{align}
        \caC (Z \to Z') \otimes_{\End_\caC (Z)} \caC (X \to Y \otimes Z \otimes W) \cong \caC (X \to Y \otimes Z' \otimes W),
    \end{align}
    as $(\End_\caC (Y) \otimes \End_\caC (Z') \otimes \End_\caC (W), \End_\caC (X))$-bimodules.
\end{proposition}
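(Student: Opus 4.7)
The plan is to write down the natural composition map, verify it is well defined and bimodule-linear, and prove it is an isomorphism by decomposing $Z$ and $Z'$ into simples and invoking the standard Morita identity $\Hom_\CC(V, V') \otimes_{\End_\CC(V)} V \cong V'$ for $V \neq 0$.

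\textbf{Step 1 (the map).} Define
\[
\Phi \colon \caC(Z \to Z') \otimes \caC(X \to Y \otimes Z \otimes W) \to \caC(X \to Y \otimes Z' \otimes W), \qquad f \otimes g \mapsto (1_Y \otimes f \otimes 1_W) \circ g.
\]
For $h \in \End_\caC(Z)$, associativity of composition and functoriality of $\otimes$ yield $\Phi(f \circ h \otimes g) = \Phi(f \otimes (1_Y \otimes h \otimes 1_W) \circ g)$, so $\Phi$ is $\End_\caC(Z)$-balanced and descends to the tensor product over $\End_\caC(Z)$. The bimodule structures on both sides (left $\End_\caC(Y) \otimes \End_\caC(Z') \otimes \End_\caC(W)$-action by post-composition, right $\End_\caC(X)$-action by pre-composition) are preserved by construction.

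\textbf{Step 2 (reduction to simples).} Using that $\caC$ is semisimple, decompose $Z \cong \bigoplus_{s \in \Irr(\caC)} s \otimes V_s$ and $Z' \cong \bigoplus_s s \otimes V_s'$ with multiplicity spaces $V_s = \caC(s \to Z)$ and $V_s' = \caC(s \to Z')$. Schur's lemma gives
\[
\caC(Z \to Z') \cong \bigoplus_s \Hom_\CC(V_s, V_s'), \qquad \End_\caC(Z) \cong \bigoplus_s \End_\CC(V_s),
\]
and additivity of Hom together with the tensor structure yields
\[
\caC(X \to Y \otimes Z \otimes W) \cong \bigoplus_s \caC(X \to Y \otimes s \otimes W) \otimes_\CC V_s,
\]
with the analogous identity for $Z'$. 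Under these identifications $\Phi$ is block-diagonal in $s$, and each block is the map
\[
\Hom_\CC(V_s, V_s') \otimes_{\End_\CC(V_s)} \bigl( \caC(X \to Y \otimes s \otimes W) \otimes_\CC V_s \bigr) \longrightarrow \caC(X \to Y \otimes s \otimes W) \otimes_\CC V_s'
\]
induced by the evaluation $\phi \otimes v \mapsto \phi(v)$ on the $V_s$-factor.

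\textbf{Step 3 (Morita identity).} For any nonzero finite-dimensional $V$ over $\CC$, the evaluation $\Hom_\CC(V, V') \otimes_{\End_\CC(V)} V \to V'$ is a $\CC$-linear isomorphism — this is the classical Morita equivalence of $\CC$ with $M_{\dim V}(\CC)$. The hypothesis $m_{Z,s} = \dim V_s \ge 1$ for every $s \in \Irr(\caC)$ guarantees $V_s \neq 0$ for all $s$, so every block is an isomorphism. Summing over $s$ and re-identifying the target with $\caC(X \to Y \otimes Z' \otimes W)$ gives the claimed bimodule isomorphism.

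\textbf{Main obstacle.} The hard content is a standard semisimple-category calculation, so the main care required is bookkeeping: one has to verify that the block-diagonal form of $\Phi$ is not merely an isomorphism of $\CC$-vector spaces but of $(\End_\caC(Y) \otimes \End_\caC(Z') \otimes \End_\caC(W), \End_\caC(X))$-bimodules, so that the auxiliary factor $\caC(X \to Y \otimes s \otimes W)$ commutes past the Morita contraction in a way compatible with the outer actions. The role of the hypothesis is transparent and sharp: if $m_{Z,s_0} = 0$ while $m_{Z',s_0} > 0$, then $\caC(Z \to Z')$ has no $s_0$-block and the $s_0$-component of the target cannot lie in the image of $\Phi$.
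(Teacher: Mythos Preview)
Your proof is correct. The paper does not actually supply a proof of this proposition: it is introduced with ``Recall the following standard isomorphism'' and stated without argument, so there is nothing to compare against. Your approach---writing down the composition map, reducing to simple summands via semisimplicity, and invoking the Morita identity $\Hom_\CC(V,V') \otimes_{\End_\CC(V)} V \cong V'$ for $V \neq 0$---is exactly the standard verification one would expect, and your observation that the hypothesis $m_{Z,s} \ge 1$ is sharp is correct.
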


Fusion spin chain algebras satisfy the following cosheaf-like property: any local algebra supported on a region can be factored into a tensor product of algebras supported on uniformly bounded regions, and the relations come from the intersection.
This follows directly from \autoref{prop:fusioncategory-factorization} and an isomorphism defined by a $6j$-symbol.

Later, we will generalize this property so that it becomes preserved by bounded-spread isomorphisms for the construction of the general compactified algebras.

\begin{corollary}\label{cor:fusionspinchainalgebra-factorization}
    Let $\caA = \caA (\caC, X)$ be a fusion spin chain algebra over $\ZZ$ with $(X, n)$ a strong tensor generating object.
    Then, for any $I, J \in \frF (\ZZ)$ with $\# (I \cap J) \ge n$, the multiplication maps 
    \begin{align}
        m_{I, J}: \caA_I \otimes_{\caA_{I \cap J}} \caA_J &\to \caA_{I \cup J} \\
        m_{J, I}: \caA_J \otimes_{\caA_{I \cap J}} \caA_I &\to \caA_{I \cup J}
    \end{align}
    are isomorphisms of $(\caA_I, \caA_J)$-bimodules and $(\caA_J, \caA_I)$-bimodules, respectively. 
    \begin{equation}
        \tikzmath{
            \draw[thick] (-.6,-1) -- (-.6,1);
            \draw[thick] (-.2,-1) -- (-.2,1);
            \draw[thick] (.2,-1) -- (.2,1);
            \draw[thick] (.6,-1) -- (.6,1);
            \roundNbox{fill=white}{(0,0)}{.3}{.6}{.6}{$\scriptstyle \varphi$}
        }
        =
        \sum_{i} 
        \tikzmath{
            \draw[thick] (-.6,-1) -- (-.6,1);
            \draw[thick] (-.2,-1) -- (-.2,1);
            \draw[thick] (.2,-1) -- (.2,1);
            \draw[thick] (.6,-1) -- (.6,1);
            \roundNbox{fill=white}{(-.2,.5)}{.3}{.3}{.3}{$a_I^i$}
            \roundNbox{fill=white}{(.2,-.5)}{.3}{.3}{.3}{$a_J^i$}
        }
        =
        \sum_{j}
        \tikzmath{
            \draw[thick] (-.6,-1) -- (-.6,1);
            \draw[thick] (-.2,-1) -- (-.2,1);
            \draw[thick] (.2,-1) -- (.2,1);
            \draw[thick] (.6,-1) -- (.6,1);
            \roundNbox{fill=white}{(.2,.5)}{.3}{.3}{.3}{$b_J^j$}
            \roundNbox{fill=white}{(-.2,-.5)}{.3}{.3}{.3}{$b_I^j$}
        }
    \end{equation}
\end{corollary}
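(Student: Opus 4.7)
The plan is to reduce to a single-interval overlap and then invoke \autoref{prop:fusioncategory-factorization} a small number of times, exploiting the fact that $Z := X^{\otimes \#(I \cap J)}$ contains every simple as a subobject whenever $\#(I \cap J) \ge n$.

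First I would reduce to the single-interval-overlap case. Any $I, J \in \frF(\ZZ)$ decompose into connected components; by the definition of $\caA$ on disjoint unions at distance $>1$, $\caA_I$ and $\caA_J$ factor as tensor products over these components, and the components of $I \setminus J$ and of $J \setminus I$ pass through $\otimes_{\caA_{I\cap J}}$ as free tensor factors. This reduces to $I = L \sqcup M$, $J = M \sqcup R$, where $M = I \cap J$ is a single interval of length $m \ge n$. Setting $Y = X^{\otimes |L|}$, $Z = X^{\otimes m}$, $W = X^{\otimes |R|}$, the local algebras become $\caA_I = \End_\caC(Y \otimes Z)$, $\caA_J = \End_\caC(Z \otimes W)$, $\caA_{I \cap J} = \End_\caC(Z)$, and $\caA_{I \cup J} = \End_\caC(Y \otimes Z \otimes W)$, while $m_{I,J}$ becomes $a \otimes b \mapsto (a \otimes 1_W)(1_Y \otimes b)$.

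By strong tensor generation, $\dim_\CC \caC(s \to Z) \ge 1$ for every $s \in \Irr(\caC)$, so \autoref{prop:fusioncategory-factorization} applies with pivot $Z$. The key consequence is the ``generator identity'' obtained by specializing to $Y = W = \emptyset$: composition induces an isomorphism
\begin{equation*}
    \caC(Z \to B) \otimes_{\End_\caC(Z)} \caC(A \to Z) \xrightarrow{\sim} \caC(A \to B)
\end{equation*}
for any $A, B \in \caC$. Applied with $A = B = Y Z W$ this gives $\End_\caC(YZW) \cong \caC(Z \to YZW) \otimes_{\End_\caC(Z)} \caC(YZW \to Z)$, and one further use of \autoref{prop:fusioncategory-factorization} (together with its version obtained by dualizing both hom-spaces) decomposes $\caC(Z \to YZW) \cong \caC(Z \to YZ) \otimes_{\End_\caC(Z)} \caC(Z \to ZW)$ and $\caC(YZW \to Z) \cong \caC(YZ \to Z) \otimes_{\End_\caC(Z)} \caC(ZW \to Z)$. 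Reassociating the resulting four-fold amalgamated product and collapsing the middle two factors by another application of the generator identity recovers $\End_\caC(YZ) \otimes_{\End_\caC(Z)} \End_\caC(ZW)$, and unwinding the chain of isomorphisms identifies the composite with $m_{I,J}$.

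The main obstacle is the bookkeeping of the several distinct $\End_\caC(Z)$-module structures in play: the natural pre/post-composition actions from the generator-identity decomposition versus the actions induced by the embeddings $c \mapsto 1_Y \otimes c$ and $c \mapsto c \otimes 1_W$, which govern the outer $\otimes_{\caA_{I \cap J}}$. The graphical $6j$-presentation displayed in the statement makes the final identification essentially tautological: inserting a resolution of the middle $Z$ through its simple summands is exactly the relation defining $\otimes_{\End_\caC(Z)}$, so any $\varphi \in \End_\caC(YZW)$ factors uniquely (up to that relation) as $\sum_i a_I^i \otimes a_J^i$, and symmetrically as $\sum_j b_J^j \otimes b_I^j$, yielding $m_{I,J}$ and $m_{J,I}$ respectively.
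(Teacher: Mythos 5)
Your reduction to a single-interval overlap is where the argument breaks. Splitting $I \cup J$ into connected components does split the amalgamated tensor product factorwise, but the hypothesis $\#(I\cap J)\ge n$ does not descend to the components: a single component $K$ of $I\cup J$ can have $\#(I\cap J\cap K)<n$ even when the total count is $\ge n$, and then the per-component overlap $Z = X^{\otimes \#(I\cap J\cap K)}$ fails the condition $\dim\caC(Z\to s)\ge 1$ required by \autoref{prop:fusioncategory-factorization}. This is a genuine failure of the conclusion, not merely a missing lemma. Take $\caC = \Rep(\ZZ/4\ZZ)$ and $X = 1_\caC \oplus \chi$ with $\chi$ a faithful character, so $n=3$, and set $I = \{1,2,3\}\cup\{N,N{+}1,N{+}2\}$, $J = \{1,2,3\}\cup\{N{+}1,N{+}2,N{+}3\}$ with $N\gg 0$. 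Then $\#(I\cap J)=5\ge n$, but on the component $\{N,\dots,N{+}3\}$ the overlap has length $2<n$, and a dimension count gives $\dim\bigl(\End_\caC(X^{\otimes 3})\otimes_{\End_\caC(X^{\otimes 2})}\End_\caC(X^{\otimes 3})\bigr)=68 < 72 = \dim\End_\caC(X^{\otimes 4})$, so $m_{I,J}$ is not even surjective. The statement must be read, as it is used everywhere else in the paper, with $I, J$ intervals; then $I\cap J$ is automatically a single interval of length $\ge n$ and your reduction step is vacuous.

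For the interval case, extracting the generator identity $\caC(Z\to B)\otimes_{\End_\caC(Z)}\caC(A\to Z)\cong\caC(A\to B)$ from \autoref{prop:fusioncategory-factorization} and decomposing $\End_\caC(YZW)$ into four amalgamated factors is the right plan. But the ``tautological'' collapse at the end is doing real work: the four factors are amalgamated over $\End_\caC(Z)$ in several inequivalent senses --- precomposition on a source copy of $Z$, postcomposition on a target copy, and the two subalgebra embeddings $f\mapsto 1_Y\otimes f$ and $f\mapsto f\otimes 1_W$ that define the balanced product in the corollary --- and it is not automatic that reassociation and collapse land on $\End_\caC(YZ)\otimes_{\End_\caC(Z)}\End_\caC(ZW)$ with the correct balancing, nor that the resulting composite is the multiplication map $m_{I,J}$. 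You flag this as bookkeeping, but it is exactly what the $6j$-symbol isomorphism the paper alludes to is needed for, and a complete proof cannot leave it implicit.
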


Here, we use the graphical notation consistent with \cite{henriques2015categorifiedtracemoduletensor}.

Now, we are ready to define a Tube algebra of a fusion spin chain, which will be our prototype of compactification of algebras. 
Among equivalent definitions, a simple one uses a lax monoidal functor left adjoint to the forgetful functor $\caZ (\caC) \to \caC$.

\begin{definition}[\cite{Ocneanu1994Chirality}]\label{def:fusioncategory-induction}
    Let $\caC$ be a unitary fusion category and $\caZ (\caC)$ be its Drinfeld center.
    Then, there is an induction (trace) functor $I: \caC \to \caZ (\caC)$ left adjoint to $F: \caZ (\caC) \to \caC$ defined by $I(X) = \int^{Y \in \caC} Y \otimes X \otimes Y^\vee \cong \bigoplus_{s \in \Irr(\caC)} s \otimes X \otimes s^\vee$ with the canonical half braiding.

    Let $X \in \Ob(\caC)$. 
    Then, the \underline{Tube algebra} associated with $X$ is a finite dimensional C*-algebra defined by 
    \begin{align}
        \Tube_\caC (X) 
        :=& \End_{\caZ(\caC)} (I(X)) \\ 
        \cong& \caC (X \to F(I(X))) \\
        \cong& \bigoplus_{s \in \Irr (\caC)} \caC (X \to s \otimes X \otimes s^\vee)
    \end{align}
    Using the last expression, an element is represented by
    \begin{equation}
        \sum_{\color{red}{s} \normalcolor \in \Irr(\caC)}
        \tikzmath{
            \draw[thick] (0,0) -- (0,2);
            \draw[thick] (2,0) -- (2,2);
            \filldraw[thick, fill=white] (1,2) ellipse (1 and .15);
            \draw[thick, blue] (1,-.15) -- (1,1.85);
            \halfDottedEllipse{(0,0)}{1}{.15}
            \halfDottedEllipse[thick, red]{(0,2/2)}{1}{.15}
            \roundNboxEllipse[]{(1,2/2)}{1}{.15}{-120}{-60}{.5}{$\varphi$};
        }
    \end{equation}
    and the multiplication is given by tube stacking and resolving the simple strands by fusion relations.
\end{definition}

Again, our notation is consistent with \cite{henriques2015categorifiedtracemoduletensor}.

We recall a well-known tensor equivalence, which is useful in computing the Drinfeld center of a given unitary fusion category.
Since a Tube algebra is a bialgebra, $\Rep (\Tube_\caC (X^{\otimes n}))$ is a monoidal category with Day convolution product \cite{Day1970Closed}.

\begin{corollary}[\cite{Mug03, das2014drinfeldcenterplanaralgebra, Jones_2017}]\label{cor:fusioncategory-reptube-drinfeldcenter}
    Let $\caC$ be a unitary fusion category and $\caZ (\caC)$ be its Drinfeld center.
    Let $(X, n)$ be a strong generating object.
    Then, for any $k \ge n$, there is a canonical monoidal equivalence $\Rep (\Tube_\caC (X^{\otimes k})) \cong \caZ (\caC)$, where $\Rep (\Tube_\caC (X^{\otimes k}))$ is a category of right $\Tube_\caC (X^{\otimes k})$ modules equipped with Day convolution product. 
\end{corollary}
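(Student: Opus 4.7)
The plan is to establish the monoidal equivalence in two stages: first a $\CC$-linear (abelian) equivalence via Morita theory, and then an upgrade to a monoidal equivalence by matching tensor structures. The distinguished object throughout is $I(X^{\otimes k}) \in \caZ(\caC)$, whose endomorphism algebra is by definition $\Tube_\caC(X^{\otimes k})$.

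First I would verify that $I(X^{\otimes k})$ is a progenerator of $\caZ(\caC)$. Since $X$ is a strong tensor generator with parameter $n$, one first checks that for every $k \ge n$ the object $X^{\otimes k}$ still contains every simple of $\caC$ as a summand and is therefore a progenerator of $\caC$. Then for any simple $Z \in \caZ(\caC)$, the adjunction $I \dashv F$ supplies
\[
\Hom_{\caZ(\caC)}(I(X^{\otimes k}), Z) \cong \Hom_\caC(X^{\otimes k}, F(Z)),
\]
and the right side is nonzero because $F(Z) \neq 0$ in $\caC$ and $X^{\otimes k}$ dominates every simple. As $\caZ(\caC)$ is fusion, hence semisimple, this forces every simple of $\caZ(\caC)$ to occur as a summand of $I(X^{\otimes k})$, so it is a progenerator. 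Classical Morita theory then yields a $\CC$-linear equivalence
\[
\caZ(\caC) \xrightarrow{\sim} \Rep(\End_{\caZ(\caC)}(I(X^{\otimes k}))) = \Rep(\Tube_\caC(X^{\otimes k})),
\]
given by $Z \mapsto \Hom_{\caZ(\caC)}(I(X^{\otimes k}), Z)$ with its natural right module structure.

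Next I would promote this to a monoidal equivalence. The Tube algebra carries a canonical coproduct defined diagrammatically by horizontal stacking of tubes — equivalently, assembled from the oplax monoidal structure of $I$ together with the inclusion of $I(X^{\otimes k})$ as a summand of $I(X^{\otimes 2k})$ guaranteed by the progenerator property of the previous step — making it a connected weak bialgebra whose category of right modules with Day convolution is the category we wish to identify with $\caZ(\caC)$. Under the Morita functor I would match the convolution of the modules corresponding to $Z_1, Z_2 \in \caZ(\caC)$ with the module corresponding to the tensor product $Z_1 \otimes Z_2$ in $\caZ(\caC)$, using the fact that the coproduct on tubes implements exactly the half-braided tensor product on the module side.

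The main obstacle is this last step of monoidal compatibility. The underlying abelian equivalence falls out of Morita theory almost for free, but verifying that the coproduct on $\Tube_\caC(X^{\otimes k})$ coincides with the tensor structure of $\caZ(\caC)$ under the Morita bijection is a genuine $6j$-symbol computation that tracks half-braidings through the induction functor $I$ and checks the associativity and unit constraints on Hom-spaces. Rather than grind through this, I would invoke the detailed graphical and planar-algebraic arguments of \cite{Mug03, das2014drinfeldcenterplanaralgebra, Jones_2017}, observing that the specific value of $k$ enters only through the progenerator property and so does not otherwise affect the equivalence.
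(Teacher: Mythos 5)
The paper offers no proof of this corollary; it is stated purely as a citation to \cite{Mug03, das2014drinfeldcenterplanaralgebra, Jones_2017}, so there is no in-paper argument to compare against. Your sketch correctly identifies the skeleton of the standard argument that those references supply: (i) $I(X^{\otimes k})$ is a progenerator of $\caZ(\caC)$ via the adjunction $I \dashv F$ together with strong generation, (ii) Morita theory then gives the abelian equivalence $Z \mapsto \Hom_{\caZ(\caC)}(I(X^{\otimes k}), Z)$, and (iii) the monoidal compatibility is the substantive part that one defers to the references, which is also what the paper implicitly does. Two small remarks. First, your claim that $X^{\otimes k}$ dominates every simple for all $k \ge n$ (not merely $k = n$) does need a one-line argument: $\Hom_\caC(c, X^{\otimes k}) \cong \Hom_\caC(c \otimes (X^{\otimes (k-n)})^\vee, X^{\otimes n})$, and since $c \otimes (X^{\otimes(k-n)})^\vee$ is a nonzero object, each of its simple constituents maps nontrivially into $X^{\otimes n}$. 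Second, describing the coproduct as "horizontal stacking" is a bit loose — the (weak) bialgebroid structure on $\Tube_\caC(X^{\otimes k})$ giving the Day convolution product on $\Rep$ is better seen as coming from the oplax monoidal structure of $I$ and the splitting $X^{\otimes k} \hookrightarrow X^{\otimes 2k}$ — but this does not affect the validity of the argument, and the monoidal coherence is exactly what you (correctly) route to the cited papers.
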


The next two statements describe the relationship between the Tube algebra and sufficiently small local algebras inside a quasi-local algebra.
First, we prove that such local algebras embed into the Tube algebra.
The idea is that a sufficiently small interval is contained in the boundary of a bounded region. 
We prove by a clean graphical calculus that reflects this idea.

\begin{lemma}\label{lem:fusioncategory-end-embed-tube-Xk}
    Let $\caC$ be a unitary fusion category and $\caZ (\caC)$ be its Drinfeld center. 
    Let $I \in \frB_{k-1} (\ZZ)$ be a finite interval.
    Then, there is a canonical embedding $\End_\caC (X^{\otimes \#(I)}) \hookrightarrow \Tube_\caC (X^{\otimes k})$ for any $X \in \Ob(\caC)$.
\end{lemma}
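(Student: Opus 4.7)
The plan is to construct the embedding as a composition
\[
\End_\caC(X^{\otimes \#I}) \hookrightarrow \End_\caC(X^{\otimes k}) \hookrightarrow \Tube_\caC(X^{\otimes k}).
\]
Since $I \in \frB_{k-1}(\ZZ)$ has diameter at most $k-1$, we have $\#I \le k$, so the first arrow can be taken as the obvious unital algebra map $\varphi \mapsto \varphi \otimes \id_{X^{\otimes (k - \#I)}}$ that fills the remaining strands with identities. The second arrow is inclusion as the $s = \mathbf{1}$ isotypic summand of the block decomposition of the Tube algebra. This realizes the informal picture mentioned before the statement: a short interval $I$ fits inside a circle of $k$ sites and its operators can be placed there trivially along the extra strands.

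For the second arrow, using the isomorphism
\[
\Tube_\caC(X^{\otimes k}) \cong \bigoplus_{s \in \Irr(\caC)} \caC\bigl(X^{\otimes k} \to s \otimes X^{\otimes k} \otimes s^\vee\bigr)
\]
from \autoref{def:fusioncategory-induction}, the summand at $s = \mathbf{1}$ is canonically $\End_\caC(X^{\otimes k})$. Graphically, this amounts to erasing the blue $s$-strand in the tube diagram, leaving a plain endomorphism cylinder. I would verify that this is a subalgebra by checking the tube-stacking rule directly: when two tubes with $s = t = \mathbf{1}$ are stacked, the composite simple strand is $\mathbf{1} \otimes \mathbf{1} = \mathbf{1}$, the half-braiding with $\mathbf{1}$ is trivial, and the diagrammatic resolution reduces to vertical composition of endomorphisms of $X^{\otimes k}$. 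Hence the $s = \mathbf{1}$ summand is not only a subspace but a unital subalgebra isomorphic to $\End_\caC(X^{\otimes k})$.

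Injectivity of the first arrow follows from the standard rigidity fact that in a unitary fusion category, tensoring with a nonzero object is faithful on hom spaces (use the duality $(-)\otimes Y \dashv (-)\otimes Y^\vee$ and split by evaluation/coevaluation of $X^{\otimes (k-\#I)}$). Injectivity of the composite then follows, yielding the claimed canonical embedding. The only non-cosmetic step is the graphical identification of the $s = \mathbf{1}$ block as a subalgebra, which is a short calculation in the diagrammatics of \cite{henriques2015categorifiedtracemoduletensor}, and the argument is independent of the position of $I$ inside the chosen circle of $k$ sites.
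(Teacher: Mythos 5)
There is a genuine gap: your embedding is independent of the position of $I$, but the lemma (and the rest of the paper) needs an embedding that tracks where $I$ sits modulo $k$.

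You explicitly observe that your composite
\[
\End_\caC(X^{\otimes \#I}) \xrightarrow{\;\varphi \mapsto \varphi\otimes\id\;} \End_\caC(X^{\otimes k}) \xrightarrow{\;s=\mathbf 1\text{ block}\;} \Tube_\caC(X^{\otimes k})
\]
is ``independent of the position of $I$,'' and always lands in the trivial summand $s = \mathbf 1$. But the paper's proof splits into two cases according to whether $[I] = I/k\ZZ$ contains the seam $\{k,1\}$ or not, and in both cases places $\varphi$ at the specific positions corresponding to $[I]$ around the tube. This position dependence is not cosmetic: it is precisely what is used later in the proof of \autoref{thm:fusionspinchainalgebra-DTcptalgebra}, where the family of maps $j_V : \caA_V \to \Tube_\caC(X^{\otimes k})$ must satisfy $j_W|_{\caA_U} = j_U$ whenever $U \subseteq W$. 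Your single position-agnostic embedding cannot satisfy this compatibility — e.g.\ for $U = \{1,2\}$, $V = \{2,3\}$, $W = \{1,2,3\}$ you would place $\caA_V$ at strand positions $\{1,2\}$ rather than $\{2,3\}$, so the restriction of $j_W$ to $\caA_V$ would disagree with $j_V$.

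Moreover, the case where $[I]$ wraps around the seam $\{k,1\}$ is the substantive half of the lemma and cannot be reduced to the first arrow at all: an interval of length $\#I \le k$ whose image straddles the boundary of the fundamental domain must be wrapped around the back of the tube, and semisimplicity is then used to resolve the wrapping into a sum over simple strands $s\in\Irr(\caC)$. The image of such an $I$ genuinely has nonzero components in the nontrivial $s$-blocks — it does not land in $\End_\caC(X^{\otimes k}) \cong (s = \mathbf 1\text{ block})$. This wrapping case is precisely the mechanism that makes $\Tube_\caC(X^{\otimes k})$ a richer target than $\End_\caC(X^{\otimes k})$ for the compactification, and your proof does not engage with it. To repair the argument you would need (i) to replace ``$\varphi\otimes\id$ from the left'' with ``$\id^{\otimes a}\otimes\varphi\otimes\id^{\otimes b}$'' at the positions given by $[I]$ when $[I]$ is contractible in $\ZZ/k\ZZ$, and (ii) to handle the wrapping case separately by the semisimple resolution of the wrapped strand, as in the paper's second diagram.
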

\begin{proof}
    Let 
    \begin{align}
        \tikzmath{
            \draw[thick, dashed] (-.5,0) -- (0.5, 0);
        }
        &= \id_{1_{\caC}} \\ 
        \tikzmath{
            \draw[thick, blue] (-.5,0) -- (0.5, 0);
        }
        &= \id_X \\ 
        \tikzmath{
            \draw[thick, red] (-.5,0) -- (0.5, 0);
        }
        &= \id_s, \quad s \in \Irr (\caC)
    \end{align}
    If $[I] = I / k \ZZ$ does not contain $\{k, 1\}$, 
    \begin{equation}
        \tikzmath{
            \draw[blue, thick] (-.6,-1) -- (-.6,1);
            \draw[blue, thick] (-.2,-1) -- (-.2,1);
            \draw[blue, thick] (.2,-1) -- (.2,1);
            \draw[blue, thick] (.6,-1) -- (.6,1);
            \roundNbox{fill=white}{(0,0)}{.3}{.2}{.2}{$\varphi$}
        }
        \;\mapsto\;
        \tikzmath{
            \draw[thick] (0,0) -- (0,2);
            \draw[thick] (2,0) -- (2,2);
            \filldraw[thick, fill=white] (1,2) ellipse (1 and .15);
            \draw[thick, blue] (.4,-.13) -- (0.4,1.87);
            \draw[thick, blue] (.8,-.15) -- (0.8,1.85);
            \draw[thick, blue] (1.2,-.15) -- (1.2,1.85);
            \draw[thick, blue] (1.6,-.13) -- (1.6,1.87);
            \halfDottedEllipse{(0,0)}{1}{.15}
            \halfDottedEllipse[thick, dashed]{(0,1)}{1}{.15}
            \roundNboxEllipse[]{(1,1)}{1}{.15}{-120}{-60}{.5}{$\varphi$};
        }
    \end{equation}
    If $[I]$ contains $\{k, 1\}$, 
    \begin{equation}
        \tikzmath{
            \draw[thick, blue] (-.6,-1) -- (-.6,1);
            \draw[thick, blue] (-.2,-1) -- (-.2,1);
            \draw[thick, blue] (.2,-1) -- (.2,1);
            \draw[thick, blue] (.6,-1) -- (.6,1);
            \roundNbox{fill=white}{(0,0)}{.3}{.2}{.2}{$\varphi$}
        }
        \;\mapsto\;
        \tikzmath{
            \draw[thick] (0,0) -- (0,2);
            \draw[thick] (2,0) -- (2,2);
            \filldraw[thick, fill=white] (1,2) ellipse (1 and .15);
            \roundNboxEllipse[dotted]{(1,1)}{1}{.15}{180}{0}{.5}{$\reflectbox{$\varphi$}$};
            \draw[thick, blue] (.4,-.13) -- (0.4,1.87);
            \draw[thick, blue] (.8,-.15) -- (0.8,1.85);
            \draw[thick, blue] (1.2,-.15) -- (1.2,1.85);
            \draw[thick, blue] (1.6,-.13) -- (1.6,1.87);
            \halfDottedEllipse{(0,0)}{1}{.15}
            \roundNboxEllipse[]{(1,1)}{1}{.15}{-180}{-120}{.5}{};
            \roundNboxEllipse[]{(1,1)}{1}{.15}{-60}{0}{.5}{};

        }
        \; = \; \sum_{\color{red}{s} \normalcolor \in \Irr (\caC)}
        \tikzmath{
            \draw[thick] (0,0) -- (0,2);
            \draw[thick] (2,0) -- (2,2);
            \filldraw[thick, fill=white] (1,2) ellipse (1 and .15);
            \draw[thick, blue] (.4,-.13) -- (0.4,1.87);
            \draw[thick, blue] (.8,-.15) -- (0.8,1.85);
            \draw[thick, blue] (1.2,-.15) -- (1.2,1.85);
            \draw[thick, blue] (1.6,-.13) -- (1.6,1.87);
            \halfDottedEllipse{(0,0)}{1}{.15}
            \draw[thick, dotted, red] ($ (0,1) + 2*(1,0)$) arc(0:180:{1} and {.15});
        	\draw[red, thick] (0,1) arc(-180:-140:{1} and {.15});
        	\draw[red, thick] (2,1) arc(0:-40:{1} and {.15});
            \roundNboxEllipse[]{(1,1)}{1}{.15}{-150}{-115}{.5}{};
            \roundNboxEllipse[]{(1,1)}{1}{.15}{-65}{-30}{.5}{};

        }
    \end{equation}
    where we used semisimplicity and suppressed the summation of pure tensors.
\end{proof}

The lemma above provides a clear intuiton for the simple strand around the tube: it glues the two ends. 
In fact, by semisimplicity, Yoneda lemma, and $6j$-symbol isomorphism, any $\phi \in \caC (Y \otimes Z \to Y \otimes Z)$ decomposes into
\begin{align}
    \tikzmath{
        \draw[thick] (-.4,-1) -- (-.4,1);
        \draw[thick] (.4,-1) -- (.4,1);
        \roundNbox{fill=white}{(0,0)}{.3}{.4}{.4}{$\scriptstyle \varphi$}
    } 
    = 
    \sum_{\color{red}{s} \normalcolor \in \Irr(\caC)}
    \tikzmath{
        \draw[thick] (-.4,-1) -- (-.4,-.5);
        \draw[thick] (-.4,.5) -- (-.4,1);
        \draw[thick] (.4,-1) -- (.4,-.5);
        \draw[thick] (.4,.5) -- (.4,1);
        \draw[thick, red] (0, -.5) -- (0,.5);
        \roundNbox{fill=white}{(0,.5)}{.3}{.4}{.4}{}
        \roundNbox{fill=white}{(0,-.5)}{.3}{.4}{.4}{}
    } 
    = 
    \sum_{\color{red}{s} \normalcolor \in \Irr(\caC)}
    \tikzmath{
        \draw[thick] (-.4,-1) -- (-.4,1);
        \draw[thick] (.4,-1) -- (.4,1);
        \draw[thick, red] (-.4, .5) -- (.4,-.5);
        \roundNbox{fill=white}{(-.4,.5)}{.3}{.2}{.2}{}
        \roundNbox{fill=white}{(.4,-.5)}{.3}{.2}{.2}{}
    } 
\end{align}

Next, we prove an isomorphism analogous to \autoref{cor:fusionspinchainalgebra-factorization}. 
With local algebra embeddings given in \autoref{lem:fusioncategory-end-embed-tube-Xk}, we can factor the Tube algebra into a tensor product of local subalgebras.

\begin{lemma}\label{lem:fusionspinchainalgebra-tube-factorization}
    Let $\caA = \caA (\caC, X)$ be a fusion spin chain algebra over $\ZZ$ with $(X, n)$ a strong tensor generating object.
    Then, for any $k \gg n$, choose any good cover of $\ZZ / k \ZZ$ by $I_1, I_2, ..., I_l$ such that $\# (I_i \cap I_{i+1 \mod l}) \ge n$ and $I_i \cap I_j = \emptyset$ for all $i, j$ such that $|i-j \mod l| > 1$. Then,  
    \begin{align}
        &\Tube_\caC (X^{\otimes k}) \\
        \cong& \left(\caA_{I_1} \otimesunder{\caA_{I_1 \cap I_2}} \cdots \otimesunder{\caA_{I_{l-2} \cap I_{l-1}}} \caA_{I_{l-1}}\right) \otimesunder{\caA_{I_{l-1} \cap I_l} \otimes \caA_{I_1 \cap I_0}^{\op}} \caA_{I_l},
    \end{align}
    where $I_0 := I_l - k$ and $\caA_{I_1 \cap I_0}$ acts on $\caA_{I_1}$ from the left by $g \acts f = (g \otimes \id_{X^{\# (I_1 \setminus I_0)}}) \circ f$ and on $\caA_{I_l}$ from the right by $f \acted g = f \circ (\id_{X^{\# (I_0 \setminus I_1)}} \otimes \phi_{+k} (g))$.
\end{lemma}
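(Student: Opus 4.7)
The plan is to split the argument into two stages. First, collapse the iterated tensor product over the ``interior'' seams $I_i \cap I_{i+1}$ for $i = 1, \ldots, l-1$ into the endomorphism algebra of a single long interval, and recognise the remaining wrap-around factor as a Hochschild coinvariant. Second, identify this coinvariant with the Tube algebra via a coend/Morita reduction.

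Stage 1 (open-line reduction + wrap-around coinvariants). Iteratively apply \autoref{cor:fusionspinchainalgebra-factorization} to the interior seams---valid since each overlap has cardinality $\ge n$---and telescope the resulting bimodule isomorphisms to obtain
\[
\caA_{I_1} \otimesunder{\caA_{I_1 \cap I_2}} \cdots \otimesunder{\caA_{I_{l-1} \cap I_l}} \caA_{I_l} \;\cong\; \caA_{I_1 \cup \cdots \cup I_l} \;=\; \End_\caC(X^{\otimes N}),
\]
where $N = k + e$ with $e := \#(I_1 \cap I_0)$. The first $e$ strands of $X^{\otimes N}$ correspond to $I_1 \cap I_0 \subseteq I_1$ and the last $e$ to $(I_1 \cap I_0) + k \subseteq I_l$. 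The remaining tensor factor $\otimes_{\caA_{I_1 \cap I_0}^{\op}}$ is the coequaliser that identifies, via the translation $\phi_{+k}$, the left action of $g \in \caA_{I_1 \cap I_0}$ on the first $e$ strands with its right action on the last $e$ strands. Under the Stage-1 identification, these become pre- and post-composition actions of $\End_\caC(X^{\otimes e})$ on $\End_\caC(X^{\otimes e} \otimes X^{\otimes k})$, so the RHS is the Hochschild coinvariant quotient
\[
\End_\caC(X^{\otimes N}) \big/ \bigl\langle (g \otimes \id_{X^{\otimes k}}) \circ f - f \circ (\id_{X^{\otimes k}} \otimes g) \bigr\rangle.
\]

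Stage 2 (match with the Tube algebra). Using rigidity, rewrite
\[
\Tube_\caC(X^{\otimes k}) \;\cong\; \bigoplus_{s \in \Irr(\caC)} \caC(X^{\otimes k} \otimes s \to s \otimes X^{\otimes k}) \;\cong\; \int^{Y \in \caC} \caC(X^{\otimes k} \otimes Y \to Y \otimes X^{\otimes k}).
\]
Since $e \ge n$ makes $X^{\otimes e}$ a strong generator, the standard Yoneda/semisimple-absorption argument collapses the coend to the coinvariants of $\caC(X^{\otimes k} \otimes X^{\otimes e} \to X^{\otimes e} \otimes X^{\otimes k}) = \End_\caC(X^{\otimes N})$ under exactly the bimodule structure from Stage 1. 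Algebra structures are compatible, because both multiplications are induced by composition in $\End_\caC(X^{\otimes N})$, which on the Tube side corresponds to tube-stacking under the embeddings of \autoref{lem:fusioncategory-end-embed-tube-Xk}.

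The hard step is the coend-to-coinvariant reduction in Stage 2: one must verify that $\int^{Y \in \caC} \caC(X^{\otimes k} \otimes Y \to Y \otimes X^{\otimes k})$ agrees with the $H_0$-coinvariants at the single object $Y = X^{\otimes e}$. This follows from semisimplicity together with the strong-generator property (every simple is a summand of $X^{\otimes e}$), which forces the coend over all of $\caC$ and the coend over the one-object full subcategory on $X^{\otimes e}$ to coincide; the latter is exactly the commutator quotient above. A related subtle point is making the algebra structure on the RHS precise: the bimodule tensor products should be interpreted as algebraic pushouts (amalgamated free products), which for the interior seams is guaranteed by \autoref{cor:fusionspinchainalgebra-factorization}, and for the wrap-around follows from the universal property once the induced map into $\Tube_\caC(X^{\otimes k})$ is shown to be an isomorphism.
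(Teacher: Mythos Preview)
Your two–stage argument is correct and is the algebraic unwinding of the paper's graphical proof. The paper argues pictorially in the other direction: it starts from a tube element, uses semisimplicity and \autoref{cor:fusionspinchainalgebra-factorization} to break the single box into a diagonal stack of small boxes supported on the $I_i$, and then absorbs the simple strand $s$ into the two endpoint boxes using that $X^{\otimes n}$ contains every simple (this is exactly your Stage~2 coend reduction, phrased as ``the last overlap $X^{\otimes e}$ eats the dummy variable''). Your presentation has the advantage of making explicit \emph{why} the strong-generator hypothesis $e\ge n$ is needed at the wrap-around: it is precisely the Morita/cofinality statement that lets the coend over all of $\caC$ be computed at the single object $X^{\otimes e}$.

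One correction to your final paragraph: the isomorphism in the lemma is only asserted (and only needed downstream, e.g.\ in the proof of \autoref{thm:fusionspinchainalgebra-DTcptalgebra}) as an isomorphism of vector spaces, or at best of bimodules over the outer overlap algebras. The symbols $\otimes_{\caA_{I_i\cap I_{i+1}}}$ are ordinary balanced tensor products of bimodules, not amalgamated free products, and the right-hand side carries no natural algebra structure beyond what it inherits through the isomorphism itself. So your concern about matching multiplications is moot here; the Tube multiplication is accessed instead through the embeddings of \autoref{lem:fusioncategory-end-embed-tube-Xk}, not through the factorised description.
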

\begin{proof}
    Let 
    \begin{align}
        \tikzmath{
            \draw[very thick, blue] (-.5,0) -- (0.5, 0);
        }
        &= \id_{X ^ {\otimes n}} \\ 
        \tikzmath{
            \draw[thick, red] (-.5,0) -- (0.5, 0);
        }
        &= \id_s, \quad s \in \Irr (\caC)
    \end{align}
    \begin{align}
        \sum_{\color{red}{s} \normalcolor \in \Irr (\caC)}
        \tikzmath{
            \draw[thick] (0,0) -- (0,2);
            \draw[thick] (2,0) -- (2,2);
            \filldraw[thick, fill=white] (1,2) ellipse (1 and .15);
            \draw[very thick, blue] (.4,-.13) -- (0.4,1.87);
            \draw[very thick, blue] (.8,-.15) -- (0.8,1.85);
            \draw[very thick, blue] (1.2,-.15) -- (1.2,1.85);
            \draw[very thick, blue] (1.6,-.13) -- (1.6,1.87);
            \halfDottedEllipse{(0,0)}{1}{.15}
            \halfDottedEllipse[thick, red]{(0,1)}{1}{.15}
            \roundNboxEllipse[]{(1,1)}{1}{.15}{-140}{-40}{.5}{};
        }
        \; = \;
        \sum_{\color{red}{s} \normalcolor \in \Irr (\caC)}
        \tikzmath{
            \draw[thick] (0,0) -- (0,2);
            \draw[thick] (2,0) -- (2,2);
            \filldraw[thick, fill=white] (1,2) ellipse (1 and .15);
            \draw[very thick, blue] (.4,-.13) -- (0.4,1.87);
            \draw[very thick, blue] (.8,-.15) -- (0.8,1.85);
            \draw[very thick, blue] (1.2,-.15) -- (1.2,1.85);
            \draw[very thick, blue] (1.6,-.13) -- (1.6,1.87);
            \halfDottedEllipse{(0,0)}{1}{.15}
            \draw[thick, red] ($ (0,1.7)$) arc(-180:-130:{1} and {.15});
            \draw[thick, red] ($ (2,0.3)$) arc(0:-50:{1} and {.15});
            \draw[thick, red, dotted] (0,1.7) -- (2,0.3);
            \roundNboxEllipse[]{(1,1.7)}{1}{.15}{-140}{-115}{.25}{};
            \roundNboxEllipse[]{(1,1.35)}{1}{.15}{-140}{-95}{.25}{};
            \roundNboxEllipse[]{(1,1)}{1}{.15}{-110}{-70}{.25}{};
            \roundNboxEllipse[]{(1,0.65)}{1}{.15}{-85}{-40}{.25}{};
            \roundNboxEllipse[]{(1,0.3)}{1}{.15}{-65}{-40}{.25}{};
        } 
        \; = \;
        \tikzmath{
            \draw[thick] (0,0) -- (0,2);
            \draw[thick] (2,0) -- (2,2);
            \filldraw[thick, fill=white] (1,2) ellipse (1 and .15);
            \draw[very thick, blue] (.4,-.13) -- (0.4,1.87);
            \draw[very thick, blue] (.8,-.15) -- (0.8,1.85);
            \draw[very thick, blue] (1.2,-.15) -- (1.2,1.85);
            \draw[very thick, blue] (1.6,-.13) -- (1.6,1.87);
            \halfDottedEllipse{(0,0)}{1}{.15};
            \draw[very thick, dotted] (0,1.7 + 0.125) -- (2,0.3 + 0.125);
            \draw[very thick, dotted] (0,1.7 - 0.125) -- (2,0.3 - 0.125);
            \roundNboxEllipse[]{(1,1.7)}{1}{.15}{-180}{-115}{.25}{};
            \roundNboxEllipse[]{(1,1.35)}{1}{.15}{-140}{-95}{.25}{};
            \roundNboxEllipse[]{(1,1)}{1}{.15}{-110}{-70}{.25}{};
            \roundNboxEllipse[]{(1,0.65)}{1}{.15}{-85}{-40}{.25}{};
            \roundNboxEllipse[]{(1,0.3)}{1}{.15}{-65}{-0}{.25}{};
        } 
    \end{align}
    where we suppressed the summation of pure tensors.
\end{proof}

Lastly, it was proved in \cite{jones2024dhrbimodulesquasilocalalgebras} that in the case of a fusion spin chain, the DHR category is braided tensor equivalent to $\caZ (\caC)$. 

\begin{lemma}\label{lem:fusionspinchainalgebra-dhr-drinfeldcenter}\cite{jones2024dhrbimodulesquasilocalalgebras}
    Let $\caA = \caA (\caC, X)$ be a fusion spin chain algebra over $\ZZ$ with $(X, n)$ a strong tensor generating object.
    Let $\caZ (\caC)$ be the Drinfeld center.
    Then, we have canonical braided autoequivalence $\caZ (\caC) \to _\caA \DHR_\caA, (Z, \sigma) \mapsto M(Z, \sigma)$.
\end{lemma}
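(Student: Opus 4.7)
The plan is to construct the functor $M(\cdot,\cdot)$ explicitly via a ``place $Z$ at the origin and let the half-braiding mediate passage'' recipe, verify the DHR locality condition, and then upgrade to a braided monoidal equivalence using the Tube-algebra presentation. Since this is a citation-supported lemma, I would aim for a self-contained sketch that makes the identification transparent.

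First, I would define $M(Z,\sigma)$ as a pre-Hilbert $(\caA,\caA)$-bimodule by its values on intervals $I_k = [-k,k] \cap \ZZ$ for $k \ge n$. Set
\begin{align}
    M(Z,\sigma)_{I_k} := \caC\!\left(X^{\otimes (2k+1)} \to X^{\otimes k} \otimes Z \otimes X^{\otimes k}\right),
\end{align}
with the $(\caA_{I_k},\caA_{I_k})$-bimodule structure given by pre- and post-composition on the $X^{\otimes (2k+1)}$ and $X^{\otimes k}\otimes Z \otimes X^{\otimes k}$ strands, respectively. For $k' > k$ these embed compatibly using the strong tensor generation of $X$, so the colimit $M(Z,\sigma) := \colim_k M(Z,\sigma)_{I_k}$ is a well-defined $\caA$-bimodule. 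The $\caA$-valued inner product is built from duals in $\caC$ (Frobenius-reciprocity/standard evaluation against $Z^\vee$), giving a pre-Hilbert bimodule in the sense of \autoref{def:quasilocal-staralgebra-dhr}.

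Second, I would verify the DHR locality condition. For $U \supseteq I_k$ large enough, the projective basis $\{b_i^U\}$ can be chosen as graphical ``caps'' on either side of $Z$ in the picture; the half-braiding $\sigma$ is exactly the datum needed to show that for $a \in \caA_{U^c}$ (supported either far to the left or far to the right of $0$), $b_i^U \triangleleft a$ and $a \triangleright b_i^U$ agree after passing $a$'s strands over/around the $Z$-strand. This is a diagrammatic check analogous to the computation in \autoref{lem:fusioncategory-end-embed-tube-Xk}. Functoriality in morphisms of $\caZ(\caC)$ is immediate by post-composition on $Z$, since such morphisms commute with $\sigma$ by definition. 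Monoidality $M((Z,\sigma)\otimes (Z',\sigma')) \cong M(Z,\sigma) \boxtimes_\caA M(Z',\sigma')$ follows by resolving the tensor $Z\otimes Z'$ in the middle slot with a relative tensor product over the strands between them, using the factorization in \autoref{cor:fusionspinchainalgebra-factorization}. The braiding compatibility then reduces to the standard fact that the DHR braiding, given by relocating the two bimodules to disjoint left/right intervals, reproduces the half-braiding composition formula defining the braiding of $\caZ(\caC)$.

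The main obstacle, and the step I would most like to import from \cite{jones2024dhrbimodulesquasilocalalgebras}, is essential surjectivity: showing every DHR bimodule $N$ is of the form $M(Z,\sigma)$ for some $(Z,\sigma) \in \caZ(\caC)$. My approach would go through the Tube algebra. Given $N \in {_\caA \pDHR_\caA}$ with DHR frame on $U$, the space of vectors localized on $U$ (those fixed by the adjoint action of $\caA_{U^c}$ in the sense of the DHR condition) carries a natural right action of $\End_{\caA\text{-}\caA}(\caA_U \triangleright N \triangleleft \caA_U)$-like algebra, which, by the factorization lemma \autoref{lem:fusionspinchainalgebra-tube-factorization}, is identified with $\Tube_\caC(X^{\otimes k})$ for appropriate $k$. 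Invoking \autoref{cor:fusioncategory-reptube-drinfeldcenter} gives $(Z,\sigma) \in \caZ(\caC)$ with $N \cong M(Z,\sigma)$, and the induced isomorphism is stable under enlarging $U$, hence canonical. Fullness and faithfulness then follow from the fact that bimodule maps are determined by their value on any single sufficiently large local frame, which again reduces to an equivariant $\Tube_\caC(X^{\otimes k})$-module map.
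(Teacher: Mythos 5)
This lemma is imported verbatim from \cite{jones2024dhrbimodulesquasilocalalgebras}; the paper supplies no proof. The only textual clue to the intended construction appears in the proof of \autoref{thm:fusionspinchainalgebra-DTIcptmod}, where the paper recalls
\[
M(\widetilde{Z}) \cong \colim_l \caC\!\left(X^{\otimes l} \otimes X^{\otimes(\# I)} \otimes X^{\otimes l} \to X^{\otimes l} \otimes X^{\otimes(\# I)} \otimes Z \otimes X^{\otimes l}\right),
\]
with right $\caA_{I^{+l}}$ action by pre-composition and left $\caA_{I^{+l}}$ action by post-composition \emph{conjugated by} $\sigma_{Z,X^{\otimes l}}$. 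Your sketch instead places $Z$ at the origin, in the middle of the interval, and as written has an off-by-one in strand count: the source $X^{\otimes(2k+1)}$ has $2k+1$ strands while the target $X^{\otimes k}\otimes Z\otimes X^{\otimes k}$ has only $2k$. The source and target must carry the same number of $X$-strands, with $Z$ appearing as a single additional tensor factor; as stated, the bimodule actions cannot be well-defined.

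More substantively, your rationale for where the half-braiding enters is misplaced. With $Z$ at one edge of the localization interval (as in the paper's recalled formula), the half-braiding is needed to make the \emph{left action itself} well-defined consistently across the colimit over $l$: post-composing an element of $\caA_{I^{+l}}$ requires passing strands over $Z$, which $\sigma$ accomplishes. Once the actions are defined, the DHR localization condition for $a\in\caA_{U^c}$ is comparatively straightforward, because such an $a$ is far from both $I$ and $Z$, and both $a\triangleright b_i^U$ and $b_i^U\triangleleft a$ reduce to tensoring with $a$ on strands where $b_i^U$ is the identity. Your middle-placement would additionally force the left action by ordinary elements of $\caA_{I_k}$ (not just complement elements) to be conjugated by $\sigma$ on both sides whenever they span the origin, making the already-delicate colimit compatibility check heavier without gain.

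Your functoriality, monoidality via \autoref{cor:fusionspinchainalgebra-factorization}, braiding compatibility, and essential surjectivity via $\Rep(\Tube_\caC(X^{\otimes k}))\cong\caZ(\caC)$ (\autoref{cor:fusioncategory-reptube-drinfeldcenter}) are reasonable outlines and consistent with the way the paper later uses this equivalence; they are the right strategic moves. But since the paper only cites this lemma, the honest fix is to correct the strand count, move $Z$ to an edge of the interval to match the construction the paper itself recalls, and reassign the role of the half-braiding from the DHR verification to the definition of the left module structure.
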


Combined with \autoref{def:quasilocal-staralgebra-dhr}, we obtain a braided autoequivalence of $\caZ (\caC)$ from a bounded-spread isomorphism. 

\section{Compactified Algebra by Local Presentation}\label{sec:compactification}
Now, we aim to give a general compactification of a quasi-local *-algebra which generalizes Tube algebras.  \autoref{lem:fusioncategory-end-embed-tube-Xk} and \autoref{lem:fusionspinchainalgebra-tube-factorization} suggest that the compactification should be generated by sufficiently small local subalgebras of the quasi-local algebra. 
 
\subsection{Local Generation and Presentation of Quasi-Local Algebras}\label{subsec:localgenerationpresentation}
We first generalize a locality property of fusion spin chain in \autoref{cor:fusionspinchainalgebra-factorization}, which is essential to make sense of the compactification. 
Namely, we define \textit{local generation} and \textit{local presentation.}

For a family $\Delta \subseteq \frF(L)$, let 
\begin{align}
    \bigast_\Delta \caA 
    &:= \bigast_\Delta \caA_{(\cdot)} := \bigast_{U_\alpha \in \Delta} \caA_{U_\alpha} \\
    (J)^c_\Delta &:= (J)^c_{\bigast_\Delta \caA} := J \cap \bigast_\Delta \caA \\
    (I)^e_\Delta &:= (I)^e_{\bigast_\Delta \caA} := \left(\bigast_\Delta \caA\right) I \left(\bigast_\Delta \caA\right) \\
    \sum_\Delta (\ker (\iota * \iota))^e_\Delta &:= \sum_\Delta (\ker (\iota_{(\cdot)} * \iota_{(\cdot)}))^e_\Delta \\
    &:= \sum_{U_\alpha, U_\beta \in \Delta} \left( \ker (\caA_{U_\alpha} * \caA_{U_\beta} \to \caA) \right)^{e}_{\bigast_\Delta \caA}.
\end{align}

\begin{definition}\label{def:quasilocal-staralgebra-locallygenerated}
    A quasi-local *-algebra $\caA$ over $L$ is \underline{strongly locally generated} if there exists $D \ge 0$ such that for every $U \in \frP (L)$, an open cover $\frF_{D} (U)$ makes the following sequence exact:
    \begin{equation}\begin{tikzcd}\label{eq:quasilocal-staralgebra-locallygenerated-strongsequence}
        \bigast_{\frF_D (U)} \caA \arrow[r, "\bigast \iota_{\alpha}"] & \caA_U \arrow[r] & 0,
    \end{tikzcd}\end{equation}
    where $\iota_{\alpha}: \caA_{U_{\alpha}} \hookrightarrow \caA_U$ is the inclusion.
    
    $\caA$ is \underline{(weakly) locally generated} if there exist $D, r \ge 0$ such that for every $U \in \frP (L)$, an open cover $\frF_{D} (U^{+r})$ makes the following sequence exact:
    \begin{equation}\begin{tikzcd}\label{eq:quasilocal-staralgebra-locallygenerated-weaksequence}
        \left(\bigast_{\frF_D (U^{+r})} \caA\right) \timesunder{\caA_{U^{+r}}} \caA_U \arrow[r, "\bigast \iota_{\alpha}"] & \caA_U \arrow[r] & 0,
    \end{tikzcd}\end{equation}
    where $\left(\bigast_{\frF_D (U^{+r})} \caA\right) \times_{\caA_{U^{+r}}} \caA_U$ is the fibered product.
    In other words, the map 
    \begin{align}
        \bigast_{\frF_D (U^{+r})} \caA \to \caA_{U^{+r}}
    \end{align}
    is surjective onto $\caA_U$ inside $\caA_{U^{+r}}$.
\end{definition}
Strong local generation is simpler and more intuitive; all elements are generated by elements supported on uniformly bounded regions.
However, it is not an appropriate property to consider in $\QLstAlg_L$ because it is not preserved under bounded-spread isomorphism. 
We need to allow elements in a controlled neighborhood to generate $\caA_U$.
As we will see later, weak local generation is indeed bounded-spread preserved.

Similarly, 

\begin{definition}\label{def:quasilocal-staralgebra-locallypresented}
    A quasi-local *-algebra $\caA$ over $L$ is \underline{strongly locally presented} if there exists $D \ge 0$ such that for every $U \in \frP (L)$, an open cover $\frF_{D} (U)$ makes the following sequence exact:
    \begin{equation}\begin{tikzcd}\label{eq:quasilocal-staralgebra-locallypresented-strongsequence}
        \sum_{\frF_D (U)} \left( \ker (\iota * \iota) \right)^{e}_{\frF_D (U)} \arrow[r] & \bigast_{\frF_D (U)} \caA \arrow[r, "\bigast \iota_{\alpha}"] & \caA_U \arrow[r] & 0.
    \end{tikzcd}\end{equation}

    $\caA$ is \underline{(weakly) locally presented} if there exist $T \ge D$ and $t \ge r \ge 0$ such that for every $U \in \frP (L)$, an open cover $\frF_{D} (U^{+r}), \frF_T (U^{+t})$ makes the following sequence exact:
    \begin{equation}\begin{tikzcd}\label{eq:quasilocal-staralgebra-locallypresented-weaksequence}
        \left(\sum_{\frF_T (U^{+t})} \left( \ker (\iota * \iota) \right)^{e}_{\frF_T (U^{+t})} \right)^c_{\frF_D (U^{+r})} \arrow[r]
        & \left(\bigast_{\frF_D (U^{+r})} \caA\right) \timesunder{\caA_{U^{+r}}} \caA_U \arrow[r, "\bigast \iota_{\alpha}"]
        & \caA_U \arrow[r] & 0,
    \end{tikzcd}\end{equation}
    where the first map is $a \mapsto (a, 0)$.
    This is well-defined since $a \in \sum_{\frF_T (U^{+t})} \left( \ker (\iota * \iota) \right)^{e}_{\frF_T (U^{+t})} $ maps to $0 \in \caA_U$ so that $(a, 0) \in \left(\bigast_{\frF_D (U^{+r})} \caA\right) \times_{\caA_{U^{+r}}} \caA_U$.
    In other words, 
    \begin{align}
        \sum_{\frF_T (U^{+t})} \left( \ker (\iota * \iota) \right)^{e}_{\frF_T (U^{+t})} \to \bigast_{\frF_T (U^{+t})} \caA
    \end{align} 
    is surjective onto $\ker \left(\bigast_{\frF_D (U^{+r})} \caA \to \caA_{U^{+r}} \right)$ inside $\bigast_{\frF_T (U^{+t})} \caA$.
\end{definition}

Observe that we let the relations between all pairs $V_1, V_2 \in \frF_T (U^{+t})$ generate the relations in $\caA_U$ so that $\diam (V_1 \cup V_2)$ is not uniformly bounded.
We need all such relations because, by definition, algebras over $\delta$-distant regions must commute, so $\ker (\bigast_{\frF_D (U^{+r})} \caA \to \caA_{U^{+r}})$ must contain $[\caA_{V_1}, \caA_{V_1}]$.
Recall, however, that we assumed, for all distant regions, there are no relations besides $[\caA_{V_1}, \caA_{V_2}]$. 
Thus, all nontrivial relations (that don't implement the distant commutativity) are generated by relations on regions with controlled diameters.

The condition of local presentation is analogous to cosheaf condition; however, assuming the exactness of the sequence for any cover trivializes the "entanglement" of the algebra. 

We show two elementary lemmas: that these properties are stable under increasing the parameters, and that they are preserved under bounded-spread isomorphism.
Although the notation is complicated, the idea is straightforward.
When increasing the parameters of local presentation, we always proceed by replacing each factor in a relation by generators on smaller regions. 

\begin{lemma}\label{lem:quasilocal-staralgebra-increaseparameters}
    Let $\caA$ be a quasi-local *-algebra over $L$.
    \begin{enumerate}
        \item \label{it:quasilocal-staralgebra-locallygenerated-increaseparameters}
        If $\caA$ is locally generated with parameters $D, r \ge 0$, then it is also locally generated with any $D' \ge D, r' \ge r$.
        \item \label{it:quasilocal-staralgebra-locallypresented-increaseparameters}
        If $\caA$ is locally presented with parameters $T \ge D \ge 0$ and $t \ge r \ge 0$, then it is also locally presented with any $D' \ge D, r' \ge r$, and $T' \ge \max \{D' + 2r, T\}, t' \ge t + r'$.
    \end{enumerate}
\end{lemma}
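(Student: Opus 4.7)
My plan is to transfer exactness from the original parameter data $(D, T, r, t)$ to the new data $(D', T', r', t')$ by using the nested inclusions of indexing sets $\frF_D(U^{+r}) \subseteq \frF_{D'}(U^{+r'})$ and $\frF_T(U^{+t}) \subseteq \frF_{T'}(U^{+t'})$ that the parameter hypotheses guarantee.

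For (1), the inclusion $\frF_D(U^{+r}) \subseteq \frF_{D'}(U^{+r'})$ induces a canonical morphism of free products compatible with the maps into $\caA_{U^{+r'}} \supseteq \caA_{U^{+r}}$. Since the composite already has image containing $\caA_U$ by the hypothesis at $(D,r)$, so does the map at the new parameters, and the fibered product with $\caA_U$ is well-defined and surjective.

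For (2), the surjectivity onto $\caA_U$ follows from (1); the work lies in the kernel. Given $a \in \ker(\bigast_{\frF_{D'}(U^{+r'})} \caA \to \caA_{U^{+r'}})$, viewed inside the larger free product $\bigast_{\frF_{T'}(U^{+t'})} \caA$ (using $\frF_{D'}(U^{+r'}) \subseteq \frF_{T'}(U^{+t'})$ from the parameter inequalities), I would show $a \in \sum_{\frF_{T'}(U^{+t'})} (\ker(\iota * \iota))^e$ in two stages. In the first stage, for every letter $w \in \caA_W$ of $a$ with $W \in \frF_{D'}(U^{+r'})$, invoke local generation at $(D,r)$ to obtain a lift $\tilde w \in \bigast_{\frF_D(W^{+r})} \caA$ with $\iota(\tilde w) = \iota_{W \subseteq W^{+r}}(w)$ in $\caA_{W^{+r}}$. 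Because $\diam W^{+r} \le D' + 2r \le T'$ (this is precisely the purpose of the bound $T' \ge D' + 2r$) and $W^{+r} \subseteq U^{+r'+r} \subseteq U^{+t'}$ (using $t' \ge t + r' \ge r + r'$, which follows from $t \ge r$), the region $W^{+r}$ is admissible, i.e., lies in $\frF_{T'}(U^{+t'})$. Hence the identifications $w \equiv \iota_{W \subseteq W^{+r}}(w)$ and, for each letter $w_\alpha \in \caA_{W_\alpha}$ in $\tilde w$, $w_\alpha \equiv \iota_{W_\alpha \subseteq W^{+r}}(w_\alpha)$, both hold modulo $\sum_{\frF_{T'}(U^{+t'})} (\ker(\iota * \iota))^e$, so $w \equiv \tilde w$. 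Applying this rewrite to every letter of $a$ yields $a \equiv a' \pmod{\sum_{\frF_{T'}(U^{+t'})} (\ker(\iota * \iota))^e}$ with $a' \in \bigast_{\frF_D(U^{+r'+r})} \caA$.

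In the second stage, since every generator of the relation ideal vanishes in $\caA_{U^{+t'}}$ and $a$ itself vanishes there, $a'$ must vanish in $\caA_{U^{+r'+r}}$. I would then apply local presentation at the original parameters $(D, T, r, t)$ to $U' := U^{+r'}$: this writes $\ker(\bigast_{\frF_D(U^{+r'+r})} \caA \to \caA_{U^{+r'+r}})$ as the contraction of $\sum_{\frF_T(U^{+r'+t})} (\ker(\iota * \iota))^e$, which embeds into $\sum_{\frF_{T'}(U^{+t'})} (\ker(\iota * \iota))^e$ via $T \le T'$ and $r' + t \le t'$. Combining both stages places $a$ in the desired ideal, and $a \in \bigast_{\frF_{D'}(U^{+r'})} \caA$ gives the required contraction. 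The main obstacle throughout is purely the parameter bookkeeping; the delicate point is that refining a diameter-$D'$ factor via local generation at $(D, r)$ creates a relation supported on a region of diameter $D' + 2r$, which is precisely why $T' \ge D' + 2r$ must be imposed for these relations to be admissible in the new cover.
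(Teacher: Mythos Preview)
Your proposal is correct and follows essentially the same approach as the paper: the two-stage argument for part (2)---first rewriting each diameter-$D'$ letter via local generation at $(D,r)$ into words over $\frF_D(W^{+r})$ (using $T' \ge D'+2r$ and $t' \ge r'+r$ to keep the intermediate relations admissible), then applying the original local presentation at $U^{+r'}$ to the rewritten element (using $T' \ge T$ and $t' \ge t+r'$)---matches the paper's proof step for step, including the parameter bookkeeping you flag as the delicate point.
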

\begin{proof}
    \begin{enumerate}
        \item 
        If $D' \ge D, r' \ge r$, the surjectivity follows from
        \begin{equation}\begin{tikzcd}
            \left(\bigast_{\frF_D (U^{+r})} \caA\right) \timesunder{\caA_{U^{+r}}} \caA_U \arrow[r, two heads] \arrow[d, hookrightarrow] & \caA_U \\
            \left(\bigast_{\frF_{D'} (U^{+r'})} \caA\right) \timesunder{\caA_{U^{+r'}}} \caA_U \arrow[ur]
        \end{tikzcd}\end{equation}

        \item
        If $D' \ge D, r' \ge r$, and $T' \ge \max \{D' + 2r, T\}, t' \ge t + r'$, the exactness at $\caA_U$ (\autoref{eq:quasilocal-staralgebra-locallypresented-weaksequence}) is as above.
        Now we show the exactess at $\left(\bigast_{\frF_{D'} (U^{+r'})} \caA\right) \times_{\caA_{U^{+r'}}} \caA_U$ (\autoref{eq:quasilocal-staralgebra-locallypresented-weaksequence}).
        In general, for any algebras and maps $R \to S' \hookleftarrow S$, $\ker (R \times_{S'} S \to S) \cong \ker (R \to S')$ via projection.
        Thus, it suffices to prove that
        \begin{align}
            \ker \left(\bigast_{\frF_{D'} (U^{+r'})} \caA \to \caA_{U^{+r'}}\right) \subseteq \sum_{\frF_{T'} (U^{+t'})} \left( \ker (\iota * \iota) \right)^{e}_{\frF_{T'} (U^{+t'})}.
        \end{align}
        Take any words $\sum_i w_i \in \ker \left(\bigast_{\frF_{D'} (U^{+r'})} \caA \to \caA_{U^{+r'}}\right)$, with each factor of $w_i = a_i^{V_{i,1}} \cdots a_i^{V_{i,l_i}}$ are $a_i^{V_{i,j}} \in \caA_{V_{i,j}}$ for $V_{i,j} \in \frF_{D'} (U^{+r'})$.
        Consider one of the factors $a_i^{V_{i,j}} \in \caA_{V_{i,j}}$.
        Since $\caA$ is locally generated with $(D, r)$, $\bigast_{\frF_{D} (V_{i,j}^{+r})} \caA \to \caA_{V_{i,j}^{+r}}$ is surjective onto $\caA_{V_{i,j}}$.
        Thus, there exists $\sum_k u_k \in \bigast_{\frF_{D} (V_{i,j}^{+r})} \caA$ such that
        \begin{align}\label{eq:quasilocal-staralgebra-locallypresented-increaseparameters-relation1}
            a_i^{V_{i,j}} - \sum_k u_k 
            \in& \left( \ker (\caA_{V_{i,j}^{+r}} \ast \caA_{V_{i,j}} \to \caA_{V_{i,j}^{+r}}) \right)^{e}_{(\{V_{i,j}^{+r}, V_{i,j}\} \cup \frF_D (V_{i,j}^{+r}))} \\
            +& \sum_{\substack{U_{\alpha} \in \\ \frF_{D} (V_{i,j}^{+r})}} \left( \ker (\caA_{U_{\alpha}} \ast \caA_{V_{i,j}^{+r}} \to \caA_{V_{i,j}^{+r}}) \right)^{e}_{\{(V_{i,j}^{+r}, V_{i,j}\} \cup \frF_D (V_{i,j}^{+r}))} \\
            \subseteq& \sum_{\frF_{T'} (U^{+t'})} \left( \ker (\iota * \iota) \right)^{e}_{\frF_{T'} (U^{+t'})},
        \end{align}
        Here, we used the fact that $V_{i,j}^{+r}, V_{i,j} \in \frF_{D' + 2r} (U^{+r'+r}) \subseteq \frF_{T'} (U^{+t'})$ and $\frF_{D} (V_{i,j}^{+r}) \subseteq \frF_{T'} (U^{+t'})$.
        Up to the ideal on the right-hand side, we can replace $a_i^{V_{i,j}}$ with $\sum_k u_k$.
        Collecting all such relations, we obtain
        \begin{align}
            \sum_i w_i' \in& \bigast_{\substack{V_{\mu} \in \\ \frF_{D'} (U^{+r'})}} \bigast_{\substack{U_{\alpha} \in \\ \frF_{D} (V_{\mu}^{+r})}} \caA_{U_{\alpha}} \\
            \subseteq& \bigast_{\frF_D (U^{+r+r'})} \caA
        \end{align}
        such that
        \begin{align}
            \sum_i w_i' - \sum_i w_i \in& \sum_{\frF_{T'} (U^{+t'})} \left( \ker (\iota * \iota) \right)^{e}_{\frF_{T'} (U^{+t'})}.
        \end{align}
        Since $\sum_i w_i' \mapsto 0 \in \caA_{U^{+r'+r}}$ and $\caA$ is locally presented with $(D, r)$, $(T, t)$, we have
        \begin{align}
            \sum_i w_i' \in& \ker \left(\bigast_{\frF_D (U^{+r+r'})} \caA \to \caA_{U^{+r'+r}}\right) \\
            \subseteq& \sum_{\frF_{T} (U^{+t+r'})} \left( \ker (\iota * \iota) \right)^{e}_{\frF_{T} (U^{+t+r'})} \\ 
            \subseteq& \sum_{\frF_{T'} (U^{+t'})} \left( \ker (\iota * \iota) \right)^{e}_{\frF_{T'} (U^{+t'})},
        \end{align}
        completing the proof.
    \end{enumerate}
\end{proof}

\begin{lemma}\label{lem:quasilocal-staralgebra-boundedspreadpreserved}
    Let $\alpha: \caA \to \caB$ be a bounded-spread isomorphism between quasi-local *-algebras over $L$ with spread $s$ and $\beta := \alpha^{-1}$.
    \begin{enumerate}
        \item If $\caA$ is locally generated, then so is $\caB$.
        \item If $\caA$ is locally presented, then so is $\caB$.
    \end{enumerate}
\end{lemma}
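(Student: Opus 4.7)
The plan is to transport witnesses of local generation and local presentation from $\caA$ to $\caB$ by applying $\alpha$ and $\beta := \alpha^{-1}$ factor-by-factor, paying a uniform price of $2s$ in diameter and neighborhood parameters, where $s$ is a common spread bound for both $\alpha$ and $\beta$.

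For (1), given $b \in \caB_U$, bounded spread places $\beta(b) \in \caA_{U^{+s}}$. Applying local generation of $\caA$ with parameters $(D, r)$ at the region $U^{+s}$ writes $\beta(b) = \sum_i u_i$ with $u_i \in \caA_{V_i}$ and $V_i \in \frF_D(U^{+s+r})$. Then $b = \alpha(\beta(b)) = \sum_i \alpha(u_i)$ with $\alpha(u_i) \in \caB_{V_i^{+s}}$ and $V_i^{+s} \in \frF_{D+2s}(U^{+r+2s})$, so $\caB$ is locally generated with parameters $(D+2s, r+2s)$.

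For (2), fix $U \in \frP(L)$, set $D_B := D+2s$, $r_B := r+2s$, and enlarge $\caA$'s presentation parameters via \autoref{lem:quasilocal-staralgebra-increaseparameters} so that $\caA$ is locally presented with $(D_B+2s, r_B+s), (T', t')$. The universal property of the free product supplies $*$-homomorphisms
\begin{align}
    \Phi &: \bigast_{\frF_{D_B}(U^{+r_B})}\caB \to \bigast_{\frF_{T'}(U^{+t'})}\caA, \quad \caB_V \xrightarrow{\beta} \caA_{V^{+s}}, \\
    \Psi &: \bigast_{\frF_{T'}(U^{+t'})}\caA \to \bigast_{\frF_{T'+2s}(U^{+t'+s})}\caB, \quad \caA_W \xrightarrow{\alpha} \caB_{W^{+s}}.
\end{align}
Let $j$ denote the natural inclusion $\bigast_{\frF_{D_B}(U^{+r_B})}\caB \hookrightarrow \bigast_{\frF_{T'+2s}(U^{+t'+s})}\caB$ induced by the identity on each factor. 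The key identity is that $\Psi\Phi$ equals $j$ modulo the relation ideal: on each generator $b \in \caB_V$, $\Psi\Phi(b) = \alpha(\beta(b)) = b$ sits inside the factor indexed by $V^{+2s}$ while $j(b) = b$ sits inside the factor indexed by $V$, and their difference lies in $\ker(\caB_{V^{+2s}} * \caB_V \to \caB_{V^{+2s}}) \subseteq \sum_{\frF_{T'+2s}(U^{+t'+s})}(\ker(\iota * \iota))^e_{\frF_{T'+2s}(U^{+t'+s})}$.

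Given any $\sum_i w_i$ in the kernel of $\bigast_{\frF_{D_B}(U^{+r_B})}\caB \to \caB_{U^{+r_B}}$, the image $\Phi(\sum_i w_i)$ evaluates to $\beta(0) = 0$ in $\caA_{U^{+r_B+s}}$, so local presentation of $\caA$ places it in $\sum_{\frF_{T'}(U^{+t'})}(\ker(\iota*\iota))^e_{\frF_{T'}(U^{+t'})}$. Since $\alpha$ is a $*$-homomorphism, $\Psi$ sends each defining relation $\ker(\caA_{W_1} * \caA_{W_2} \to \caA_{W_1 \cup W_2})$ into $\ker(\caB_{W_1^{+s}} * \caB_{W_2^{+s}} \to \caB_{W_1^{+s} \cup W_2^{+s}})$, so $\Psi\Phi(\sum_i w_i)$ lies in $\sum_{\frF_{T'+2s}(U^{+t'+s})}(\ker(\iota*\iota))^e_{\frF_{T'+2s}(U^{+t'+s})}$. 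Combined with the key identity, so does $j(\sum_i w_i)$, establishing that $\caB$ is locally presented with parameters $(D_B, r_B), (T'+2s, t'+s)$. The conceptual crux is the identification $\Psi\Phi \equiv j$ modulo relations, which is what lets us convert a relation certificate in $\caA$ into one in $\caB$; the rest is parameter bookkeeping via \autoref{lem:quasilocal-staralgebra-increaseparameters}.
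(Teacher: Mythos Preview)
Your proposal is correct and follows essentially the same route as the paper's proof: both parts transport witnesses through $\alpha$ and $\beta$ factorwise, with part (2) hinging on exactly the identity you call ``key'' (the paper computes $((\bigast\alpha)(\bigast\beta) - \id)(\caK^\caB_{D_B,r_B})$ and shows each letterwise difference lies in a pairwise kernel), and the parameter choices $(D_B,r_B)=(D+2s,r+2s)$, $(D',r')=(D_B+2s,r_B+s)$, $(T_B,t_B)=(T'+2s,t'+s)$ match the paper's $(D_A,r_A,T_A,t_A)$ up to naming. The only cosmetic looseness is writing $\beta(b)=\sum_i u_i$ with $u_i\in\caA_{V_i}$ in part (1) (these should be words, not single letters) and composing $\Phi$ directly into $\bigast_{\frF_{T'}(U^{+t'})}\caA$ rather than explicitly noting it factors through $\bigast_{\frF_{D_B+2s}(U^{+r_B+s})}\caA$ before invoking local presentation; neither affects correctness.
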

\begin{proof}\begin{enumerate}
\item 
    If $\caA$ is locally generated with parameters $D, r \ge 0$, then $\caB$ is locally generated with parameters $D + 2 s,  r + 2s$.
    For any $U \in \frP (L)$, 
    \begin{align}
        \caB_U 
        \subseteq& \alpha (\caA_{U^{+s}}) \\
        \subseteq& \alpha \left(\bigast_{\frF_{D} (U^{+s+r})} \caA \right) \\
        \subseteq& \bigast \iota \left(\bigast_{\frF_{D + 2s} (U^{+2s+r})} \caB\right),
    \end{align}
    where the last containment follows from
    \begin{equation}\begin{tikzcd}
        \bigast_{\frF_{D} (U^{+s+r})} \caA \arrow[r, "\bigast \iota"] \arrow[d, "\bigast \alpha", hookrightarrow] & \caA_{U^{+s+r}} \arrow[d, "\alpha", hookrightarrow] \\
        \bigast_{\frF_{D + 2s} (U^{+2s+r})} \caB \arrow[r, "\bigast \iota"] & \caB_{U^{+2s+r}}.
    \end{tikzcd}\end{equation}
    $\bigast \alpha: \bigast_{\frF_{D} (U^{+s+r})} \caA \to \bigast_{\frF_{D + 2s} (U^{+2s+r})} \caB$ is induced by $\caA_{U_\alpha} \to \caB_{U_\alpha^{+s}}$.

\item 
    If $\caA$ is locally presented with parameters $T \ge D \ge 0$ and $t \ge r \ge 0$, by \autoref{lem:quasilocal-staralgebra-increaseparameters}, $\caA$ is also locally presented with parameters $D_A = D+4s, r_A = r+3s$ and some $T_A, t_A \ge 0$ (namely, any $T_A \ge \max \{D_A + 2r, T\}, t_A \ge t + r_A$).
    Then, $\caB$ is locally presented with parameters $D_B = D + 2s, r_B = r + 2s$ and $T_B = T_A + 2s, t_B = t_A + s$.
    Observe that $t_B \ge t + r_A + s = t + 4s \ge r + 4s = r_B + 2s$ and $T_B \ge D_A + 2r + 2s = D + 6s + 2r \ge D_B + 4s$, which will be used later. 

    The exactness at $\caB_U$ is as above. 
    Now, we show that 
    \begin{align}
        \ker \left(\bigast_{\frF_{D_B} (U^{+r_B})} \caB \to \caB _{U^{+r_B}}\right)
        \subseteq \sum_{\frF_{T_B} (U^{+t_B})} \left( \ker (\iota * \iota) \right)^{e}_{\frF_{T_B} (U^{+t_B})}.
    \end{align}
    
    Consider the following diagram
    \begin{equation}\begin{tikzcd}
        0 
        \arrow[r]
        & \caK^\caB_{D_B, r_B}
        \arrow[r] \arrow{d}{\bigast \beta}[swap]{\rotatebox{90}{$\sim$}}
        & \bigast_{\frF_{D_B} (U^{+r_B})} \caB 
        \arrow[r] \arrow{d}{\bigast \beta}[swap]{\rotatebox{90}{$\sim$}}
        & \caB_{U^{+r_B}}
        \arrow{d}{\beta}[swap]{\rotatebox{90}{$\sim$}} \\
        0
        \arrow[r]
        & (\bigast \beta) (\caK^\caB_{D_B, r_B})
        \arrow[r] \arrow[d, hookrightarrow]
        & \bigast_{\frF_{D_B} (U^{+r_B})} \beta (\caB)
        \arrow[r] \arrow[d, hookrightarrow]
        & \beta (\caB_{U^{+r_B}}) 
        \arrow[d, hookrightarrow] \\
        0
        \arrow[r]
        & \caK^\caA_{D_A, r_A}
        \arrow[r]
        & \bigast_{\frF_{D_A} (U^{+r_A})} \caA
        \arrow[r]
        & \caA_{U^{+r_A}},
    \end{tikzcd}\end{equation}
    where $\caK^\caB_{D_B, r_B} = \ker \left(\bigast_{\frF_{D_B} (U^{+r_B})} \caB \to \caB_{U^{+r_B}}\right)$ and $\caK^\caA_{D_A, r_A} = \ker \left(\bigast_{\frF_{D_A} (U^{+r_A})} \caA \to \caA_{U^{+r_A}}\right)$ and the inclusion $\bigast_{\frF_{D_B} (U^{+r_B})} \beta (\caB) \hookrightarrow \bigast_{\frF_{D_A} (U^{+r_A})} \caA$ is induced by $\beta (\caB_{V_{\mu}}) \hookrightarrow \caA_{V_{\mu}^{+s}}$.
    ($V_\mu \in \frF_{D_B} (U^{+r_B})$ implies $V_\mu^{+s} \in \frF_{D_B + 2s} (U^{+r_B + s}) = \frF_{D_A} (U^{+r_A})$.)
    By assumption,
    \begin{align}
        (\bigast \beta) (\caK^\caB_{D_B, r_B})
        \subseteq&
        \caK^\caA_{D_A, r_A} \\
        \subseteq&
        \sum_{\frF_{T_A} (U^{+t_A})} \left( \ker (\iota * \iota) \right)^{e}_{\frF_{T_A} (U^{+t_A})}.
    \end{align}
    Next, consider $\bigast \alpha: \bigast_{\frF_{T_A} (U^{+t_A})} \caA \to \bigast_{\frF_{T_A} (U^{+t_A})} \alpha (\caA)$, and the latter has an inclusion $\bigast_{\frF_{T_A} (U^{+t_A})} \alpha (\caA) \hookrightarrow \bigast_{\frF_{T_B} (U^{+t_B})} \caB$ induced by $\alpha (\caA_{W_\lambda}) \hookrightarrow \caB_{W_\lambda^{+s}}$.
    (Again, $W_\lambda \in \frF_{T_A} (U^{+t_A})$ implies $W_\lambda^{+s} \in \frF_{T_A + 2s} (U^{+t_A + s}) = \frF_{T_B} (U^{+t_B})$.)
    We also note that 
    \begin{equation}\begin{tikzcd}
        \bigast_{\frF_{D_B} (U^{+r_B})} \caB
        \arrow{r}{\bigast \beta}[swap]{\sim} 
        & \bigast_{\frF_{D_B} (U^{+r_B})} \beta (\caB)
        \arrow[r, hookrightarrow]
        & \bigast_{\frF_{D_A} (U^{+r_A})} \caA
        \arrow[dl, hookrightarrow] \\
        & \bigast_{\frF_{T_A} (U^{+t_A})} \caA
        \arrow{r}{\bigast \alpha}[swap]{\sim}
        & \bigast_{\frF_{T_A} (U^{+t_A})} \alpha (\caA)
        \arrow[r, hookrightarrow]
        & \bigast_{\frF_{T_B} (U^{+t_B})} \caB,
    \end{tikzcd}\end{equation}
    is not an inclusion $\bigast_{\frF_{D_B} (U^{+r_B})} \caB \hookrightarrow \bigast_{\frF_{T_B} (U^{+t_B})} \caB$ induced by $\caB_{V_\mu} \stackrel{\id}{\to} \caB_{V_\mu} \hookrightarrow \bigast_{\frF_{T_B} (U^{+t_B})} \caB$.
    Instead, it is induced by $\caB_{V_\mu} \hookrightarrow \caB_{V_\mu^{+2s}} \hookrightarrow \bigast_{\frF_{T_B} (U^{+t_B})} \caB$.
    
    We have
    \begin{align}
        (\bigast \alpha) (\bigast \beta) (\caK^\caB_{D_B, r_B})
        \subseteq&
        \sum_{\frF_{T_A} (U^{+t_A})} \left( (\bigast \alpha) (\ker (\iota * \iota)) \right)^{e}_{\alpha, \frF_{T_A} (U^{+t_A})} \\
        \subseteq&
        \sum_{\frF_{T_A} (U^{+t_A})} \left( (\bigast \alpha) (\ker (\iota * \iota)) \right)^{e}_{\frF_{T_B} (U^{+t_B})},
    \end{align}
    where $\left(\cdot\right)^{e}_{\alpha, \frF_{T_A} (U^{+t_A})}$ is short for $\left(\cdot\right)^{e}_{\left(\bigast_{\frF_{T_A} (U^{+t_A})} \alpha (\caA)\right)}$, an extension of ideal.

    Since for each $W_\lambda, W_\kappa \in \frF_{T_A} (U^{+t_A})$ and $\iota_\lambda: \caA_{W_\lambda} \to \caA_{U^{+t_A}}$, $\iota_\kappa: \caA_{W_\kappa} \to \caA_{U^{+t_A}}$, 
    \begin{equation}\begin{tikzcd}
        \ker (\iota_\lambda * \iota_\kappa) \arrow[r] \arrow{d}{\bigast \alpha}[swap]{\rotatebox{90}{$\sim$}}
        & \caA_{W_\lambda} \ast \caA_{W_\kappa} \arrow[r, "\iota_\lambda * \iota_\kappa"] \arrow{d}{\bigast \alpha}[swap]{\rotatebox{90}{$\sim$}}
        & \caA_{U^{+t_A}} \arrow{d}{\alpha}[swap]{\rotatebox{90}{$\sim$}} \\
        (\bigast \alpha) (\ker (\iota_\lambda * \iota_\kappa)) \arrow[r] \arrow[d, hookrightarrow]
        & \alpha (\caA_{W_\lambda}) \ast \alpha (\caA_{W_\kappa}) \arrow[r] \arrow[d, hookrightarrow]
        & \alpha (\caA_{U^{+t_A}}) \arrow[d, hookrightarrow] \\
        \ker \left(\caB_{W_\lambda^{+s}} * \caB_{W_\kappa^{+s}} \to \caB_{U^{+t_B}}\right) \arrow[r]
        & \caB_{W_\lambda^{+s}} \ast \caB_{W_\kappa^{+s}} \arrow[r] 
        & \caB_{U^{+t_B}},
    \end{tikzcd}\end{equation}
    \begin{align}
        \sum_{\frF_{T_A} (U^{+t_A})} \left( (\bigast \alpha) (\ker (\iota * \iota)) \right)^{e}_{\frF_{T_B} (U^{+t_B})}
        \subseteq&
        \sum_{\frF_{T_B} (U^{+t_B})} \left( \ker (\iota * \iota) \right)^{e}_{\frF_{T_B} (U^{+t_B})},
    \end{align}

    Finally, we show $((\bigast \alpha) (\bigast \beta) - \id)(\caK^\caB_{D_B, r_B})$ is contained in the ideal.
    For any $\sum_i w_i \in \caK^\caB_{D_B, r_B} \subseteq \bigast_{\frF_{D_B} (U^{+r_B})} \caB$, let $b_i^{V_{i,j}} \in \caB_{V_{i,j}}$ be a factor of $w_i = b_1^{V_{i,1}} \cdots b_i^{V_{i,l_i}}$ for $V_{i,j} \in \frF_{D_B} (U^{+r_B})$.
    \begin{align}
        (\bigast \alpha) (\bigast \beta) (b_i^{V_{i,j}}) - b_i^{V_{i,j}} \in \ker \left(\caB_{V_{i,j}^{+2s} } * \caB_{V_{i,j}} \to \caB_{U^{+r_B+2s}}\right).
    \end{align}
    Therefore,
    \begin{align}
        (\bigast \alpha) (\bigast \beta) \left(\sum_i w_i\right) - \sum_i w_i 
        \in&
        \sum_{\substack{V_{\mu} \in \\ \frF_{D_B} (U^{+r_B})}} \left( \ker (\caB_{V_{\mu}^{+2s}} * \caB_{V_{\mu}} \to \caB_{U^{+r_B+2s}}) \right)^{e}_{\frF_{D_B + 4s} (U^{+r_B + 2s})} \\
        \subseteq&
        \sum_{\frF_{T_B} (U^{+t_B})} \left( \ker (\iota * \iota) \right)^{e}_{\frF_{T_B} (U^{+t_B})},
    \end{align}
    where we used $t_B \ge r_B + 2s$ and $T_B \ge D_B + 4s$.
\end{enumerate}\end{proof}

\subsection{Compactified Algebra}\label{subsec:compactifiedalgebra}
We now define the compactified algebra, which is the main object of study in this paper.
The idea is to take all local generators (all elements over regions of a uniformly bounded size) and impose local relations, distant commutativity relations, and the relations that identify algebras on the same orbit under the group action. 

We set up the notations for the relations.
Recall that relation between distant regions $V_1, V_2$ such that $d(V_1, V_2) > \delta$ is exactly $[\caA_{V_1}, \caA_{V_2}]$.
To define a compactified algebra, the relations between such pairs play a different role from those between $V_1, V_2$ such that $d(V_1, V_2) \le \delta$.
Namely, we need to exclude from the distant commutativity relation the pairs that have intersecting $G$-orbit; otherwise, $\caA_U$ embedded in the compactification commutes with $\caA_{g U}$, which will be identified with $\caA_U$ itself.

\begin{definition}\label{def:relations}
    Let $\caA$ be a quasi-local *-algebra over $L$ with $\delta$ as in \ref{it:quasilocal-staralgebra-localcommutativity}.
    For $U \in \frP (L)$ and $T \ge 0$, we define ideals of $\bigast_{\frF_{T} (U)} \caA$ as follows:
    \begin{align}
        \caI^{\loc}_{\delta,T} (U) 
        &:= \sum_{\substack{U_{\alpha}, U_{\beta} \in  \frF_{T} (U) \\ d (U_{\alpha}, U_\beta) \le \delta}} \left( \ker (\iota_\alpha * \iota_\beta) \right)^{e}_{\frF_T (U)}, \\
        \caI^{\dist}_{\delta,T} (U) 
        &:= \sum_{\substack{U_{\alpha}, U_{\beta} \in  \frF_{T} (U) \\ d (U_{\alpha}, U_\beta) > \delta}} \left( \ker (\iota_\alpha * \iota_\beta) \right)^{e}_{\frF_T (U)}, \\
        &:= \sum_{\substack{U_{\alpha}, U_{\beta} \in  \frF_{T} (U) \\ d (U_{\alpha}, U_\beta) > \delta}} \left( [\caA_{U_{\alpha}}, \caA_{U_{\beta}}] \right)^{e}_{\frF_T (U)}.
    \end{align}
    We call $\caI^{\loc}_{\delta,T} (U)$ the \underline{local relations} and $\caI^{\dist}_{\delta,T} (U)$ the \underline{distant relations}.

    Let $(\caA, \delta)$ be $G$-covariant with $\phi: G \to \Aut (\caA_L)$.
    For $U \in \frP (L)$ and $T \ge 0$,
    \begin{align}
        \caI^{\Gdist{G}}_{\delta, T} (U) &:= \sum_{\substack{U_{\alpha}, U_{\beta} \in  \frF_{T} (U) \\ d (G U_{\alpha}, G U_\beta) > \delta}} \left( [\caA_{U_{\alpha}}, \caA_{U_{\beta}}] \right)^e _{\frF_T (U)} \subseteq \caI^{\dist}_{\delta, T} (U), \\
        \caI^{\Gorb{G}}_{T} (U) &:= \sum_{\substack{U_{\alpha} \in \\ \frF_{T} (U)}} \sum_{g \in G} \left( (\id - \phi_g) (\caA_{U_{\alpha}}) \right)^e _{\frF_T (U)},
    \end{align}
    where $G U_{\alpha}$ denotes the union of the orbit under the action of $G$.
    We call $\caI^{\Gdist{G}}_{\delta, T} (U)$ the \underline{$G$-distant relations} and $\caI^{\Gorb{G}}_{T} (U)$ the \underline{$G$-orbital relations}.
\end{definition} 

We first define a parameter-dependent compactification, namely the parameters of local generation and local presentation $(D, T)$. 
The map $j_U$ below generalizes the embeddings of the local algebras into a Tube algebra \autoref{lem:fusioncategory-end-embed-tube-Xk} as proved later in Section \ref{subsec:fusionspinchaincompactified}.
Although whether this map is injective in general is a complicated combinatorial question, we expect that in all well-behaved examples it is. 

\begin{definition}\label{def:DTcptalgebra}
    Let $(\caA, \delta)$ be a $G$-covariant locally presented quasi-local *-algebra with parameters $D, T$. 
    (i.e., there exists $r, t$ such that $\caA$ is locally generated with parameters $D, r$ and locally presented with parameters $T, t$.)
    We define the \underline{$(D, T)$-compactified algebra} as
    \begin{align}
        \Cpt_{D, T}^G (\caA, \delta) := \bigast_{\frF_{D} (L)} \caA \Big/ \left(\caI^{\loc}_{\delta, T} (L) + \caI^{\Gdist{G}}_{\delta, T} (L) + \caI^{\Gorb{G}}_{T} (L)\right)^c _{\frF_D (L)}.
    \end{align}
    For any $U \in \frF_D (L)$, there is a natural *-algebra map $j_U: \caA_U \to \Cpt_{D, T}^G (\caA, \delta)$. 
\end{definition}

\begin{remark}\label{rem:DTcptalgebra-canonicalj-Gdist}
    For any $G$-distant pair $U, V \in \frF_D (L)$ (i.e., $d(G U, G V) > \delta$), there is a canonical map $\caA_{U \cup V} \cong \caA_U \otimes \caA_V \to \Cpt_{D,T}^G (\caA, \delta)$.
    We denote this map by $j_{U \cup V} := j_U \otimes j_V$.
\end{remark}

If the parameters are sufficiently small compared to the size of the fundamental domain $L / G$, we expect that the $(D, T)$-compactified algebra is stabilized under changing the parameters. 
To prove the stabilization under certain conditions, we need the notion of free action at scale $T$. 

\begin{definition}\label{def:scaleT-freeaction}
    Let $T \ge 0$ and $G \subseteq \Iso(L, d)$.
    We say that $G$ acts \underline{freely at scale $T$} if for all $U \in \frF_T (L)$, $g U \cap U \neq \emptyset$ implies $g = \id_L$.
\end{definition}

For example, $k \ZZ \subseteq \Iso(\ZZ, d)$ acts freely at scale $k-1$. 

With this notion, we prove that the compactified algebra does not depend on the choice of $\delta$ as long as it is sufficiently small compared to the scale of action.

\begin{lemma}\label{lem:DTcptalgebra-increase-d}
    Let $(\caA, \delta)$ be a $G$-covariant locally presented quasi-local *-algebra with parameters $D, T$.
    Let $\delta' \ge \delta$.
    If $G$ acts freely at scale $2 T + 2 \delta + \delta'$, there is a canonical isomorphism $\Cpt_{D, T}^G (\caA, \delta) \cong \Cpt_{D, T}^G (\caA, \delta')$ compatible with the maps $j_U: \caA_U \to \Cpt_{D, T}^G (\caA, \delta), j'_U: \caA_U \to \Cpt_{D, T}^G (\caA, \delta')$ for any $U \in \frF_D (L)$.
\end{lemma}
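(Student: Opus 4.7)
My plan is to reduce the claim to an equality of the two ideals that define the quotients, since both $\Cpt_{D,T}^G(\caA,\delta)$ and $\Cpt_{D,T}^G(\caA,\delta')$ are quotients of the same free product $\bigast_{\frF_D(L)}\caA$. It is cleanest to work one level up and verify equality of the uncontracted ideals
\[
J(\delta) := \caI^{\loc}_{\delta,T}(L) + \caI^{\Gdist{G}}_{\delta,T}(L) + \caI^{\Gorb{G}}_{T}(L), \quad J(\delta') := \caI^{\loc}_{\delta',T}(L) + \caI^{\Gdist{G}}_{\delta',T}(L) + \caI^{\Gorb{G}}_{T}(L)
\]
inside $\bigast_{\frF_T(L)}\caA$; contraction to $\bigast_{\frF_D(L)}\caA$ then yields equal ideals and an identity-induced isomorphism, which is automatically compatible with every $j_U$.

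By monotonicity in the threshold, $\caI^{\loc}_{\delta,T}(L)\subseteq\caI^{\loc}_{\delta',T}(L)$ and $\caI^{\Gdist{G}}_{\delta',T}(L)\subseteq\caI^{\Gdist{G}}_{\delta,T}(L)$, with $\caI^{\Gorb{G}}_{T}(L)$ common to both. The remaining inclusions both rest on a single \emph{orbital swap}: for any $g\in G$, $a\in\caA_{U_\alpha}$, $b\in\caA_{U_\beta}$, a direct expansion gives
\[
[a,b] - [\phi_g(a),b] = (a-\phi_g(a))\,b - b\,(a-\phi_g(a)) \in (\caI^{\Gorb{G}}_T(L))^e,
\]
so modulo $\caI^{\Gorb{G}}_T(L)$ any commutator with one leg in $\caA_{U_\alpha}$ may be replaced by the same commutator with that leg transported to $\caA_{gU_\alpha}$.

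For $J(\delta)\subseteq J(\delta')$, the only generators of $\caI^{\Gdist{G}}_{\delta,T}(L)$ needing attention are commutators $[a,b]$ with $\delta<d(GU_\alpha,GU_\beta)\le\delta'$; I pick $g\in G$ realising $d(gU_\alpha,U_\beta)\le\delta'$ and apply the swap to land in $[\caA_{gU_\alpha},\caA_{U_\beta}]\subseteq\caI^{\loc}_{\delta',T}(L)$. For $J(\delta')\subseteq J(\delta)$, the only generators of $\caI^{\loc}_{\delta',T}(L)$ needing attention are elements of $\ker(\iota_\alpha*\iota_\beta)$ with $\delta<d(U_\alpha,U_\beta)\le\delta'$: $\delta$-local commutativity (Axiom \ref{it:quasilocal-staralgebra-localcommutativity}) collapses this kernel to $[\caA_{U_\alpha},\caA_{U_\beta}]$, and I split on $d(GU_\alpha,GU_\beta)$; if $>\delta$ the commutator lies in $\caI^{\Gdist{G}}_{\delta,T}(L)$ directly, otherwise I choose $g\in G$ with $d(gU_\alpha,U_\beta)\le\delta$ and swap into $\caI^{\loc}_{\delta,T}(L)$.

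The principal obstacle is certifying that the orbital swap does not silently collapse factors of the free product: if $gU_\alpha$ met $U_\alpha$ for a nontrivial $g$ produced above, then $\caA_{U_\alpha}$ and $\caA_{gU_\alpha}$ would coalesce into a single factor and the relation $a-\phi_g(a)$ would degenerate into an unintended fixed-point restriction. This is precisely what the free-action hypothesis at scale $2T+2\delta+\delta'$ prevents: in each invocation of the swap, the union $U_\alpha\cup gU_\alpha\cup U_\beta$ is routed through $U_\beta$ by the triangle inequality (using $d(U_\alpha,U_\beta)\le\delta'$ and $d(gU_\alpha,U_\beta)\le\delta$) and therefore fits within a set of diameter controlled by $2T+2\delta+\delta'$, so the hypothesis forces $g=\id$ whenever $gU_\alpha\cap U_\alpha\neq\emptyset$. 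Thus every nontrivial swap in the argument occurs between genuinely distinct factors, the two ideals coincide, and the induced isomorphism is manifestly compatible with the $j_U$.
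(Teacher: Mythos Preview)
Your overall strategy and the first inclusion $J(\delta)\subseteq J(\delta')$ are correct and match the paper exactly. The gap is in the ``otherwise'' branch of $J(\delta')\subseteq J(\delta)$. You have $\delta<d(U_\alpha,U_\beta)\le\delta'$ and $d(GU_\alpha,GU_\beta)\le\delta$, pick $g$ with $d(gU_\alpha,U_\beta)\le\delta$, and assert that the swapped commutator $[\phi_g(a),b]$ lies in $\caI^{\loc}_{\delta,T}(L)$. But $\caI^{\loc}_{\delta,T}$ is built from \emph{kernels} $\ker(\iota_\alpha*\iota_\beta)$, not from bare commutators: you would need $\phi_g(a)b=b\phi_g(a)$ inside $\caA$, and since $d(gU_\alpha,U_\beta)\le\delta$ there is no reason these elements commute. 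The orbital swap worked in the first direction only because there the pair stayed $>\delta$ apart after translation (since $d(gU_\alpha,U_\beta)\ge d(GU_\alpha,GU_\beta)>\delta$), so the commutator was genuinely in a kernel; here you have moved them \emph{closer}, and the commutator need not vanish in $\caA$ at all.

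Your final paragraph misidentifies the obstacle. The concern about $gU_\alpha$ meeting $U_\alpha$ is a non-issue (and would follow already from freeness at scale $T$). The real role of freeness at scale $2T+2\delta+\delta'$ is exactly the paper's Claim~\ref{claim:scaleT-freeaction-separation}: since $\diam(U_\alpha\cup U_\beta)\le 2T+\delta'$ and $d(U_\alpha,U_\beta)>\delta$, freeness at scale $(2T+\delta')+2\delta$ forces $d(GU_\alpha,GU_\beta)>\delta$, so your ``otherwise'' case is \emph{vacuous}. Once you replace the faulty swap by this observation, your argument coincides with the paper's.
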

\begin{proof}
    Clearly $\caI^{\loc}_{\delta, T} (L) \subseteq \caI^{\loc}_{\delta', T} (L)$ and $\caI^{\Gdist{G}}_{\delta, T} (L) \supseteq \caI^{\Gdist{G}}_{\delta', T} (L)$.
    Now, we show $\caI^{\Gdist{G}}_{\delta, T} (L) \subseteq \caI^{\loc}_{\delta', T} (L) + \caI^{\Gorb{G}}_{T} (L)$.
    \begin{align}
        \caI^{\Gdist{G}}_{\delta, T} (L) 
        =& \sum_{\substack{U_{\alpha}, U_{\beta} \in  \frF_{T} (L) \\ d (G U_{\alpha}, G U_\beta) > \delta}} \left( [\caA_{U_{\alpha}}, \caA_{U_{\beta}}] \right)^e _{\frF_T (L)} \\
        \subseteq& \sum_{\substack{U_{\alpha}, U_{\beta} \in  \frF_{T} (L) \\ \delta' < d (G U_{\alpha}, G U_\beta)}} \left( [\caA_{U_{\alpha}}, \caA_{U_{\beta}}] \right)^e _{\frF_T (L)} \\
        +& \sum_{\substack{U_{\alpha}, U_{\beta} \in  \frF_{T} (L) \\ \delta < d (G U_{\alpha}, G U_\beta) \le \delta'}} \left( [\caA_{U_{\alpha}}, \caA_{U_{\beta}}] \right)^e _{\frF_T (L)}
    \end{align}
    The first term is contained in $\caI^{\Gdist{G}}_{\delta', T} (L)$. 
    For the second term, take some $U_\alpha, U_\beta \in \frF_T (L)$ and $g \in G$ such that $d (U_\alpha, g U_\beta) \le \delta'$.
    Then, 
    \begin{align}
        \left( [\caA_{U_{\alpha}}, \caA_{U_{\beta}}] \right)^e _{\frF_T (L)} 
        \subseteq& ((\id - \phi_g) (\caA_{U_\beta}))^e_{\frF_T (L)} + \left( [\caA_{U_{\alpha}}, \caA_{g U_\beta}] \right)^e _{\frF_T (L)} \\
        \subseteq& \caI^{\Gorb{G}}_{T} (L) + \caI^{\loc}_{\delta', T} (L).
    \end{align}
    Thus, $\caI^{\Gdist{G}}_{\delta, T} (L) \subseteq \caI^{\Gdist{G}}_{\delta', T} (L) + \caI^{\loc}_{\delta', T} (L) + \caI^{\Gorb{G}}_{T} (L)$, therefore $\caI^{\loc}_{\delta, T} (L) + \caI^{\Gdist{G}}_{\delta, T} (L) + \caI^{\Gorb{G}}_{T} (L) \subseteq \caI^{\loc}_{\delta', T} (L) + \caI^{\Gdist{G}}_{\delta', T} (L) + \caI^{\Gorb{G}}_{T} (L)$.

    Next, we show that $\caI^{\loc}_{\delta', T} (L) \subseteq \caI^{\loc}_{\delta, T} (L) + \caI^{\Gdist{G}}_{\delta, T} (L)$.
    \begin{align}
        \caI^{\loc}_{\delta', T} (L) 
        =& \sum_{\substack{U_{\alpha}, U_{\beta} \in  \frF_{T} (L) \\ d (U_{\alpha}, U_\beta) \le \delta}} \left( \ker (\iota_\alpha * \iota_\beta) \right)^{e}_{\frF_T (L)} \\
        +& \sum_{\substack{U_{\alpha}, U_{\beta} \in  \frF_{T} (L) \\ \delta < d (U_{\alpha}, U_\beta) \le \delta'}} \left( \ker (\iota_\alpha * \iota_\beta) \right)^{e}_{\frF_T (L)}.
    \end{align}
    The first term is contained in $\caI^{\loc}_{\delta, T} (L)$.
    For the second term, we use \autoref{claim:scaleT-freeaction-separation} with $\delta = \delta, D = 2 T + \delta'$ to conclude 
    \begin{align}
        \sum_{\substack{U_{\alpha}, U_{\beta} \in  \frF_{T} (L) \\ \delta < d (U_{\alpha}, U_\beta) \le \delta'}} \left( \ker (\iota_\alpha * \iota_\beta) \right)^{e}_{\frF_T (L)} 
        \subseteq& \sum_{\substack{U_{\alpha}, U_{\beta} \in  \frF_{T} (L) \\ d (G U_{\alpha}, G U_\beta) > \delta}} \left( [\caA_{U_{\alpha}}, \caA_{U_{\beta}}] \right)^e _{\frF_T (L)} \\
        \subseteq& \caI^{\Gdist{G}}_{\delta, T} (L).
    \end{align}
\end{proof}

\begin{claim}\label{claim:scaleT-freeaction-separation}
    Let $U_1, U_2 \in \frF(L)$ such that $\diam (U_1 \cup U_2) \le D$ and $d(U_1, U_2) > \delta$.
    If $G$ acts freely at scale $D + 2 \delta$, then $d (G U_1, G U_2) > \delta$.
\end{claim}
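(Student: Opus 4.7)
The plan is to argue by contradiction, reducing the conclusion to the free-action hypothesis by exhibiting a small (diameter $\le D+2\delta$) finite set fixed by a putative non-identity group element.

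Suppose for contradiction that $d(GU_1, GU_2) \le \delta$. Then there exist $g, h \in G$ with $d(gU_1, hU_2) \le \delta$, and since $G$ acts by isometries, this gives $d(U_1, k U_2) \le \delta$ where $k := g^{-1} h$. If $k = \mathrm{id}$ we immediately contradict $d(U_1, U_2) > \delta$, so we may assume $k \ne \mathrm{id}$ and aim to contradict the free action at scale $D + 2\delta$ instead. Because $U_1$ and $U_2$ are finite, the infimum defining $d(U_1, k U_2)$ is attained: pick $x_0 \in U_1$ and $y_0 \in U_2$ with $d(x_0, k y_0) \le \delta$.

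The next step is to package these three points into a single witness. Let $V := \{x_0, y_0, k y_0\} \in \frF(L)$. I bound its diameter pointwise: $d(x_0, y_0) \le \diam(U_1 \cup U_2) \le D$, $d(x_0, k y_0) \le \delta$, and by the triangle inequality $d(y_0, k y_0) \le D + \delta$, so $\diam V \le D + \delta \le D + 2\delta$, i.e.\ $V \in \frF_{D+2\delta}(L)$. On the other hand, $k y_0$ lies in $V$ (by construction) and also in $kV$ (as the image of $y_0 \in V$), so $V \cap k V \ni k y_0$ and in particular $V \cap kV \ne \emptyset$.

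Applying the free action hypothesis at scale $D + 2\delta$ now forces $k = \mathrm{id}$, which contradicts the assumption $k \ne \mathrm{id}$, completing the proof. I do not anticipate a real obstacle here; the only subtlety is choosing the correct three-point set $V$ so that one of its elements automatically lies in $kV$, and checking that appending the single point $k y_0 \in U_1^{+\delta}$ to $\{x_0, y_0\} \subseteq U_1 \cup U_2$ inflates the diameter only by at most $\delta$ (not $2\delta$, which would be just out of reach of the hypothesis).
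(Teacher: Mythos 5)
Your proof is correct and follows essentially the same strategy as the paper: argue by contradiction and exhibit a small set intersecting its $k$-translate to contradict the free-action hypothesis. The only difference is the witness set — the paper takes $U_1 \cup U_2^{+\delta}$, which has diameter at most $D + 2\delta$, whereas your three-point set $\{x_0, y_0, k y_0\}$ has diameter at most $D + \delta$, so your version actually proves the claim under the slightly weaker hypothesis of freeness at scale $D + \delta$.
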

\begin{proof}
    Suppose there exists $g \neq \id_L$ such that $d (U_1, g U_2) \le \delta$.
    Then, 
    \begin{align}
        &U_1 \cap g (U_2^{+\delta}) \neq \emptyset \\
        \Rightarrow& (U_1 \cup (U_2^{+\delta})) \cap g (U_1 \cup (U_2^{+\delta})) \neq \emptyset \\
        \Rightarrow& G \text{ does not act freely at scale } D + 2 \delta,
    \end{align}
    where $\diam (U_1 \cup (U_2^{+\delta})) \le \diam (U_1 \cup U_2)^{+\delta} \le D + 2 \delta$.
\end{proof}

Next, we prove that the compactified algebra stabilizes with respect to $D$ if it is sufficiently small compared to the scale of action. 
The basic idea of the proof is similar to \autoref{lem:quasilocal-staralgebra-increaseparameters}: replacing each factor with a sum of products of locally supported elements.

\begin{lemma}\label{lem:DTcptalgebra-increase-parameters}
    Let $(\caA, \delta)$ be a $G$-covariant locally presented quasi-local *-algebra.
    Let $(D, T), (D', T')$ be parameters such that $D' \ge D, T' \ge T$.
    Then,
    \begin{enumerate}
        \item There is a canonical *-homomorphism $\Cpt_{D, T}^G (\caA, \delta) \to \Cpt_{D', T'}^G (\caA, \delta)$ compatible with the maps $j_U: \caA_U \to \Cpt_{D, T}^G (\caA, \delta), j'_U: \caA_U \to \Cpt_{D', T'}^G (\caA, \delta)$ for any $U \in \frF_D (L) \subseteq \frF_{D'} (L)$.
        \item If $T = T'$, then the map is injective. \footnote{
            One may expect that local presentation makes the map an isomorphism for a sufficiently large $T' \ge T$. 
            However, this reduces to an involved combinatorial question. 
            The local presentation implies $(\caI^{\loc}_{\delta, T'} (L) + \caI^{\dist}_{\delta, T'} (L))^c_{\frF_D (L)} = \ker \paren*{\bigast_{\frF_D (L)} \caA \to \caA_L} = (\caI^{\loc}_{\delta, T} (L) + \caI^{\dist}_{\delta, T} (L))^c_{\frF_D (L)}$, or all the relations are already generated by the relations from algebras of diameter at most $T$. 
            For the desired isomorphism, we need a similar equality, but with $\caI^{\dist}$ replaced by $\caI^{\Gdist{G}}$. \label{ft:DTcptalgebra-increase-parameters-Tstabilization}}
        \item If there exists $t'$ such that $\caA$ is locally presented with parameters $(D', T', t')$ and $G$ acts freely at scale $D' + 2 t' + 2 \delta$, then the map is surjective.
    \end{enumerate}
\end{lemma}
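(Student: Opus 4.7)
The plan is to handle the three parts in order. Parts~(1) and~(2) are immediate from basic properties of ideal contractions, while Part~(3) combines the $(D, T)$ local generation, the $(D', T', t')$ local presentation, and \autoref{claim:scaleT-freeaction-separation} to eliminate distant commutators in favor of $G$-distant commutators using the free action. Throughout let $\mathcal{J} := \caI^{\loc}_{\delta, T}(L) + \caI^{\Gdist{G}}_{\delta, T}(L) + \caI^{\Gorb{G}}_T(L) \subseteq \bigast_{\frF_T(L)} \caA$, and let $\mathcal{J}'$ denote the corresponding ideal for parameters $\delta, T'$ in $\bigast_{\frF_{T'}(L)} \caA$.

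For Part~(1), since $\frF_D(L) \subseteq \frF_{D'}(L)$ and $\frF_T(L) \subseteq \frF_{T'}(L)$, each generator of $\mathcal{J}$ is also a generator of $\mathcal{J}'$, so the inclusion $\bigast_{\frF_T(L)} \caA \hookrightarrow \bigast_{\frF_{T'}(L)} \caA$ carries $\mathcal{J}$ into $\mathcal{J}'$, and in particular $(\mathcal{J})^c_{\frF_D(L)} \subseteq (\mathcal{J}')^c_{\frF_{D'}(L)}$. Hence the free-product inclusion $\bigast_{\frF_D(L)} \caA \hookrightarrow \bigast_{\frF_{D'}(L)} \caA$ descends to a well-defined *-algebra map on the quotients, with compatibility on each $j_U$ immediate. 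For Part~(2), when $T = T'$ we have $\mathcal{J} = \mathcal{J}'$, so any $x \in \bigast_{\frF_D(L)} \caA$ lying in $(\mathcal{J})^c_{\frF_{D'}(L)} = \mathcal{J} \cap \bigast_{\frF_{D'}(L)} \caA$ automatically lies in $\mathcal{J} \cap \bigast_{\frF_D(L)} \caA = (\mathcal{J})^c_{\frF_D(L)}$, giving injectivity.

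For Part~(3), it suffices to show each generator $j'_{U'}(a)$ of $\Cpt_{D', T'}^G(\caA, \delta)$ with $U' \in \frF_{D'}(L)$ and $a \in \caA_{U'}$ lies in the image. Choose $r_D, r'$ with $r' \ge r_D$ so that $\caA$ is locally generated with parameters $(D, r_D)$ and locally presented with $(D', r', T', t')$; by \autoref{lem:quasilocal-staralgebra-increaseparameters} we may enlarge parameters as needed. Local generation produces $\sum_i y_i \in \bigast_{\frF_D(U'^{+r_D})} \caA \subseteq \bigast_{\frF_D(L)} \caA$ mapping to $a$ in $\caA_{U'^{+r_D}}$. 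Regarded inside $\bigast_{\frF_{D'}(U'^{+r'})} \caA$, the difference $a - \sum_i y_i$ evaluates to $0$ in $\caA_{U'^{+r'}}$, so the local presentation at $U = U'$ places it in $\bigl(\caI^{\loc}_{\delta, T'}(U'^{+t'}) + \caI^{\dist}_{\delta, T'}(U'^{+t'})\bigr)^c$.

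The final step eliminates the distant-commutator contribution using the free action. Any pair $U_\alpha, U_\beta \in \frF_{T'}(U'^{+t'})$ has $\diam(U_\alpha \cup U_\beta) \le \diam(U'^{+t'}) \le D' + 2t'$, so \autoref{claim:scaleT-freeaction-separation}, applied with $D = D' + 2t'$ and free action at scale $D' + 2t' + 2\delta$, forces $d(G U_\alpha, G U_\beta) > \delta$ whenever $d(U_\alpha, U_\beta) > \delta$; equivalently, $\caI^{\dist}_{\delta, T'}(U'^{+t'}) \subseteq \caI^{\Gdist{G}}_{\delta, T'}(U'^{+t'})$. Promoting the ambient via $\frF_{T'}(U'^{+t'}) \subseteq \frF_{T'}(L)$ places $a - \sum_i y_i$ in $(\mathcal{J}')^c_{\frF_{D'}(L)}$, so $j'_{U'}(a)$ equals the image of $\sum_i y_i$ in $\Cpt_{D', T'}^G(\caA, \delta)$. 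The main obstacle is the scale-compatibility bookkeeping: coordinating $r_D, r', t'$ and the neighborhoods $U'^{+r_D}, U'^{+r'}, U'^{+t'}$ so that local generation with the $(D, T)$-parameters and local presentation with the $(D', T', t')$-parameters splice together, and so that \autoref{claim:scaleT-freeaction-separation} applies with exactly the scale $D' + 2t'$ dictated by the free-action hypothesis.
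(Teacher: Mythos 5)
Your proof follows the same structure as the paper's: Parts (1)–(2) via the containment of the kernel ideal under the inclusion of free products, and Part (3) by using local generation with parameters $(D,r)$ to replace a letter $a \in \caA_V$ with a word in smaller pieces, then applying local presentation with $(D', r', T', t')$ and \autoref{claim:scaleT-freeaction-separation} to conclude $\caI^{\dist}_{\delta, T'}(V^{+t'}) \subseteq \caI^{\Gdist{G}}_{\delta, T'}(V^{+t'})$. The argument and the parameter bookkeeping match the paper's proof; the only minor infelicity is the contraction notation $(\cdot)^c$ in the local-presentation step, which should be contraction to $\bigast_{\frF_{D'}(U'^{+r'})} \caA$ or, more simply, containment of the kernel inside $\sum_{\frF_{T'}(V^{+t'})} (\ker(\iota*\iota))^e_{\frF_{T'}(V^{+t'})}$ as the paper writes it.
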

\begin{proof}\begin{enumerate}
    \item Consider a map 
    \begin{equation}\begin{tikzcd}
        \bigast_{\frF_{D} (L)} \caA \arrow[r, "\bigast \iota"] &\bigast_{\frF_{D'} (L)} \caA \arrow[r, two heads] & \Cpt_{D', T'}^G (\caA, \delta) 
    \end{tikzcd}\end{equation} 
    Then, 
    \begin{align}
        &\ker \left(\bigast_{\frF_{D} (L)} \caA \to \Cpt_{D', T'}^G (\caA, \delta)\right) \\
        =& \left(\caI^{\loc}_{\delta, T'} (L) + \caI^{\Gdist{G}}_{\delta, T'} (L) + \caI^{\Gorb{G}}_{T'} (L)\right)^c _{\frF_{D'} (L)} \cap \bigast_{\frF_{D} (L)} \caA \\
        \supseteq& \left(\caI^{\loc}_{\delta, T} (L) + \caI^{\Gdist{G}}_{\delta, T} (L) + \caI^{\Gorb{G}}_{T} (L)\right)^c _{\frF_D (L)}.
    \end{align}

    \item If $T = T'$, then the containment above is an equality.
    \item Suppose $\caA$ is locally presented with parameters $(D, r, T, t)$ and $(D', r', T', t')$ and $G$ acts freely at scale $D' + 2 t' + 2 \delta$.
    
    Take any letter $a \in \caA_V \subseteq \bigast_{\frF_{D'} (L)} \caA$ with $V \in \frF_{D'} (L)$. 
    Then, there exists $\sum_k u_k \in \bigast_{\frF_{D} (V^{+r})} \caA \subseteq \bigast_{\frF_D (L)} \caA$ such that
    \begin{align}
        a - \sum_k u_k 
        \in& \ker \left(\caA_{V} \ast \bigast_{\frF_D (V^{+r})} \caA \to \caA_{V^{+r}}\right) \\ 
        \subseteq& \ker \left(\bigast_{\frF_{D'} (V^{+r'})} \caA \to \caA_{V^{+r'}}\right),
    \end{align}
    which is contained in 
    \begin{align}
        &\sum_{\frF_{T'} (V^{+t'})} \left( \ker (\iota * \iota) \right)^{e}_{\frF_{T'} (V^{+t'})} \\
        =& \caI^{\loc}_{\delta, T'} (V^{+t'}) + \caI^{\dist}_{\delta, T'} (V^{+t'}) \\
        =& \caI^{\loc}_{\delta, T'} (V^{+t'}) + \caI^{\Gdist{G}}_{\delta, T'} (V^{+t'}),
    \end{align}
    where, for the last equality, we used the \autoref{claim:scaleT-freeaction-separation} with $\delta = \delta, D = \diam (V^{+t'}) = D' + 2 t'$.
\end{enumerate}\end{proof}

The next lemma claims that a $(D, T)$-compactified algebra is partially functorial, giving a well-defined map under conditions on the spread and the parameters of local presentation. 
However, it is not a functor because well-definedness with a fixed group $G$ depends on the spread of the homomorphism compared to the parameters $D, T$. 
This is the motivation for the later definition of an infinite sequence of compactified algebra \autoref{def:asymptotic-DTcptalgebra}.

\begin{lemma}\label{lem:DTcptalgebra-boundedspread}
    Let $(\caA, \delta_A), (\caB, \delta_B)$ be $G$-covariant locally presented quasi-local *-algebras with parameters $(D_A, T_A), (D_B, T_B)$.
    Let $(\alpha, s): \caA \to \caB$ be a $G$-equivariant bounded-spread homomorphism.
    Suppose $D_A + 2 s \le D_B, T_A + 2 s \le T_B$ and $\delta_A \le \delta_B + 2 s$.
    Then, we have a canonical *-homomorphism $\Cpt^G (\alpha)$ such that
    \begin{equation}\begin{tikzcd}
        \caA_U \arrow[r, "\alpha"] \arrow[d, "j_U"] & \caB_{U^{+s}} \arrow[d, "j'_U"] \\
         \Cpt_{D_A, T_A}^G (\caA, \delta_A) \arrow[r, "\Cpt^G (\alpha)"] & \Cpt_{D_B, T_B}^G (\caB, \delta_B)
    \end{tikzcd}\end{equation}
    In particular, if $D_A' \ge D_A, T_A' \ge T_A$, $D_B' \ge D_B, T_B' \ge T_B$, and $D_A' + 2 s \le D_B', T_A' + 2 s \le T_B'$, then 
    \begin{equation}\begin{tikzcd}
        \Cpt_{D_A, T_A}^G (\caA, \delta_A) \arrow[r, "\Cpt^G (\alpha)"] \arrow[d] & \Cpt_{D_B, T_B}^G (\caB, \delta_B) \arrow[d] \\ 
        \Cpt_{D_A', T_A'}^G (\caA, \delta_A) \arrow[r, "\Cpt^G (\alpha)"] & \Cpt_{D_B', T_B'}^G (\caB, \delta_B)
    \end{tikzcd}\end{equation}
\end{lemma}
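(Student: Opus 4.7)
The plan is to construct $\Cpt^G(\alpha)$ by descending a *-homomorphism of ambient free products through the defining ideals. By the bounded-spread hypothesis, for any $U \in \frF_{D_A}(L)$ (resp.\ $\frF_{T_A}(L)$) we have $U^{+s} \in \frF_{D_A+2s}(L) \subseteq \frF_{D_B}(L)$ (resp.\ $\frF_{T_B}(L)$), so $\alpha$ restricts to $\caA_U \to \caB_{U^{+s}}$. Assembling these factorwise via the universal property of free products produces *-homomorphisms
\[
\bigast \alpha_D \colon \bigast_{\frF_{D_A}(L)} \caA \to \bigast_{\frF_{D_B}(L)} \caB, \qquad \bigast \alpha_T \colon \bigast_{\frF_{T_A}(L)} \caA \to \bigast_{\frF_{T_B}(L)} \caB,
\]
compatible under the natural inclusions. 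To descend to compactified algebras, it suffices to verify that $\bigast \alpha_T$ sends $\caI^{\loc}_{\delta_A, T_A}(L) + \caI^{\Gdist{G}}_{\delta_A, T_A}(L) + \caI^{\Gorb{G}}_{T_A}(L)$ into the analogous sum on the $\caB$ side.

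Two of the three cases are routine. For local relations, if $d(U_\alpha, U_\beta) \le \delta_A$ then $d(U_\alpha^{+s}, U_\beta^{+s}) \le \max(0, \delta_A - 2s) \le \delta_B$ by the hypothesis $\delta_A \le \delta_B + 2s$, so the kernel of $\caA_{U_\alpha} * \caA_{U_\beta} \to \caA$ maps into the kernel of $\caB_{U_\alpha^{+s}} * \caB_{U_\beta^{+s}} \to \caB$, which is a generator of $\caI^{\loc}_{\delta_B, T_B}(L)$. For $G$-orbital relations, $G$-equivariance yields $\alpha \circ (\id - \phi_g^\caA) = (\id - \phi_g^\caB) \circ \alpha$, with image in $\caB_{U_\alpha^{+s}} \subseteq \bigast_{\frF_{T_B}(L)} \caB$, placing it directly in $\caI^{\Gorb{G}}_{T_B}(L)$.

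The main obstacle is the $G$-distant case, where the naive estimate $d(GU_\alpha^{+s}, GU_\beta^{+s}) \ge d(GU_\alpha, GU_\beta) - 2s > \delta_A - 2s$ need not exceed $\delta_B$. When it does, the image $[\alpha(a_\alpha), \alpha(a_\beta)]$ lies in $\caI^{\Gdist{G}}_{\delta_B, T_B}(L)$ directly. Otherwise, I would select $g \in G$ realizing $d(U_\alpha^{+s}, gU_\beta^{+s}) \le \delta_B$ and exploit the orbital relation $(\id - \phi_g^\caB)(\alpha(a_\beta)) \in \caI^{\Gorb{G}}_{T_B}$ to replace $\alpha(a_\beta)$ by $\phi_g^\caB(\alpha(a_\beta)) \in \caB_{(gU_\beta)^{+s}}$ modulo $\caI^{\Gorb{G}}_{T_B}$; since the ideal is two-sided this passes through the commutator:
\[
[\alpha(a_\alpha), \alpha(a_\beta)] \equiv [\alpha(a_\alpha), \phi_g^\caB(\alpha(a_\beta))] \pmod{\caI^{\Gorb{G}}_{T_B}}.
\]
The right-hand bracket sits between local algebras at distance $\le \delta_B$, hence belongs to $\caI^{\loc}_{\delta_B, T_B}(L)$, and we are done with this case.

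Combining the three cases, $\bigast \alpha_T$ carries the full $\caA$-ideal into the full $\caB$-ideal. Intersecting with $\bigast_{\frF_{D_B}(L)} \caB$ and using that $\bigast \alpha_D$ is the restriction of $\bigast \alpha_T$ gives the desired descent to $\Cpt^G(\alpha)$. Compatibility with the embeddings $j_U, j'_U$ is immediate from the construction, since $\bigast \alpha_D$ acts on the $U$-factor as $\alpha|_{\caA_U}$ followed by the canonical inclusion of $\caB_{U^{+s}}$. Finally, the parameter-increase square commutes because all four morphisms appearing in it descend from the single free-product map $\bigast \alpha$, intertwined with the quotient maps of Lemma~\ref{lem:DTcptalgebra-increase-parameters}.
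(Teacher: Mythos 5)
Your proof takes essentially the same route as the paper's: define $\bigast\alpha$ at the free-product level, verify that each of the three defining ideals ($\caI^{\loc}$, $\caI^{\Gorb{G}}$, $\caI^{\Gdist{G}}$) maps into the target ideal, and descend. The local and $G$-orbital cases match the paper's treatment. Your $G$-distant case — splitting by whether the thickened $G$-orbit distance still exceeds $\delta_B$, and otherwise shifting one factor by $\phi_g^\caB$ modulo the orbital ideal to land in the local ideal — is the same two-subcase argument the paper uses.

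One step deserves more care. After replacing $\alpha(a_\beta)$ by $\phi_g^\caB(\alpha(a_\beta))$ modulo $\caI^{\Gorb{G}}_{T_B}(L)$, you conclude that $[\alpha(a_\alpha),\phi_g^\caB(\alpha(a_\beta))]$ lies in $\caI^{\loc}_{\delta_B, T_B}(L)$ because it ``sits between local algebras at distance $\le\delta_B$.'' But $\caI^{\loc}$ is generated by the kernels $\ker(\caB_V*\caB_W\to\caB)$ for nearby $V,W$, not by arbitrary commutators of nearby local subalgebras — those commutators need not vanish in $\caB$. You should record the additional fact that your shifted commutator does lie in $\ker(\caB_{U_\alpha^{+s}} * \caB_{gU_\beta^{+s}} \to \caB)$: this holds because the original commutator $[\alpha(a_\alpha),\alpha(a_\beta)]$ maps to $0$ in $\caB$ (it is $\alpha$ applied to a vanishing commutator in $\caA$, since $d(GU_\alpha,GU_\beta)>\delta_A$ forces $d(U_\alpha,U_\beta)>\delta_A$), and the orbital correction terms you subtracted also map to $0$. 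The same point is handled in the paper by working throughout with $\ker(\caB_{V_\mu}*\caB_{V_\nu})^e$ rather than with commutators; adding that one observation closes the gap in your phrasing.
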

\begin{proof}
    Consider a map
    \begin{equation}\begin{tikzcd}
        \bigast_{\frF_{D_A} (L)} \caA \arrow[r, "\bigast \alpha"] & \bigast_{\frF_{D_B} (L)} \caB \arrow[r, two heads] & \Cpt_{D_B, T_B}^G (\caB, \delta_B).
    \end{tikzcd}\end{equation}
    We will show that $\left(\bigast \alpha\right) \left(\caI^{\loc}_{\delta_A, T_A} (L) + \caI^{\Gdist{G}}_{\delta_A, T_A} (L) + \caI^{\Gorb{G}}_{T_A} (L)\right)^c _{\frF_{D_A} (L)}$ is contained in $\caI^{\loc}_{\delta_B, T_B} (L) + \caI^{\Gdist{G}}_{\delta_B, T_B} (L) + \caI^{\Gorb{G}}_{T_B} (L)$. 
    By construction, the commutativity of the diagrams is clear. 
    \begin{enumerate}
        \item $\left(\bigast \alpha\right) \left(\caI^{\loc}_{\delta_A, T_A} (L)\right)$ 
        \begin{align}
            \left(\bigast \alpha\right) \left(\caI^{\loc}_{\delta_A, T_A} (L)\right)
            =& \sum_{\substack{U_{\alpha}, U_{\beta} \in  \frF_{T_A} (L) \\ d (U_{\alpha}, U_\beta) \le \delta_A}} \left(\left(\bigast \alpha\right)\ker (\iota_\alpha * \iota_\beta) \right)^{e}_{\alpha, \frF_{T_A} (L)} \\
            \subseteq& \sum_{\substack{U_{\alpha}, U_{\beta} \in  \frF_{T_A} (L) \\ d (U_{\alpha}, U_\beta) \le \delta_A}} \left(\ker (\caB_{U_{\alpha}^{+s} } * \caB_{U_{\beta}^{+s}}) \right)^{e}_{\frF_{T_B} (L)} \\
            \subseteq& \caI^{\loc}_{\delta_B, T_B} (L),
        \end{align}
        
        \item $\left(\bigast \alpha\right) \left(\caI^{\Gdist{G}}_{\delta_A, T_A} (L)\right)$
        \begin{align}
            \left(\bigast \alpha\right) \left(\caI^{\Gdist{G}}_{\delta_A, T_A} (L)\right)
            =& \sum_{\substack{U_{\alpha}, U_{\beta} \in \frF_{T_A} (L) \\ \delta_B + 2 s < d (G U_{\alpha}, G U_\beta)}} \left(\left(\bigast \alpha\right) [\caA_{U_{\alpha}}, \caA_{U_{\beta}}] \right)^{e}_{\alpha, \frF_{T_A} (L)} \\
            +& \sum_{\substack{U_{\alpha}, U_{\beta} \in \frF_{T_A} (L) \\ \delta_A < d (G U_{\alpha}, G U_\beta) \le \delta_B + 2 s}} \left(\left(\bigast \alpha\right) \ker (\iota_\alpha * \iota_\beta) \right)^{e}_{\alpha, \frF_{T_A} (L)}
        \end{align}
        For the first term,
        \begin{align}
            \sum_{\substack{U_{\alpha}, U_{\beta} \in \frF_{T_A} (L) \\ \delta_B + 2 s < d (G U_{\alpha}, G U_\beta)}} \left(\left(\bigast \alpha\right) [\caA_{U_{\alpha}}, \caA_{U_{\beta}}] \right)^{e}_{\alpha, \frF_{T_A} (L)} 
            \subseteq& \sum_{\substack{U_{\alpha}, U_{\beta} \in \frF_{T_A} (L) \\ \delta_B + 2 s < d (G U_{\alpha}, G U_\beta)}} \left([\caB_{U_{\alpha}^{+s}}, \caB_{U_{\beta}^{+s}}] \right)^{e}_{\frF_{T_B} (L)} \\
            \subseteq& \caI^{\Gdist{G}}_{\delta_B, T_B} (L).
        \end{align}
        For the second term, 
        \begin{align}
            &\sum_{\substack{U_{\alpha}, U_{\beta} \in \frF_{T_A} (L) \\ \delta_A < d (G U_{\alpha}, G U_\beta) \le \delta_B + 2 s}} \left(\left(\bigast \alpha\right) \ker (\iota_\alpha * \iota_\beta) \right)^{e}_{\alpha, \frF_{T_A} (L)} \\
            \subseteq& \sum_{\substack{U_{\alpha}, U_{\beta} \in \frF_{T_A} (L) \\ d (G U_{\alpha}, G U_\beta) \le \delta_B + 2 s}} \left(\ker (\caB_{U_{\alpha}^{+s}} * \caB_{U_{\beta}^{+s}}) \right)^{e}_{\frF_{T_B} (L)} \\
            \subseteq& \sum_{\substack{V_\mu, V_\nu \in \frF_{T_B} (L) \\ d (G V_\mu, G V_\nu) \le \delta_B}} \left(\ker (\caB_{V_\mu} * \caB_{V_\nu}) \right)^{e}_{\frF_{T_B} (L)}.
        \end{align}
        If $d(V_\mu, g V_\nu) \le \delta_B$, then 
        \begin{align}
            \left(\ker (\caB_{V_\mu} * \caB_{V_\nu})\right)^{e}_{\frF_{T_B} (L)}
            \subseteq& \left(\ker (\caB_{V_\mu} * \caB_{g V_\nu})\right)^{e}_{\frF_{T_B} (L)} + \left( (\id - \phi_g) (\caB_{V_\mu}) \right)^{e}_{\frF_{T_B} (L)} \\
            \subseteq& \caI^{\loc}_{\delta_B, T_B} (L) + \caI^{\Gorb{G}}_{T_B} (L).
        \end{align}

        \item $\left(\bigast \alpha\right) \left(\caI^{\Gorb{G}}_{T_A} (L)\right)$
        \begin{align}
            \left(\bigast \alpha\right) \left(\caI^{\Gorb{G}}_{T_A} (L)\right) 
            =& \sum_{\substack{U_{\alpha} \in \\ \frF_{T_A} (L)}} \sum_{g \in G} \left(\left(\bigast \alpha\right) (\id - \phi_g^{\caA}) (\caA_{U_{\alpha}}) \right)^{e}_{\alpha, \frF_{T_A} (L)} \\
            \subseteq& \sum_{\substack{U_{\alpha} \in \\ \frF_{T_A} (L)}} \sum_{g \in G} \left((\id - \phi_g^{\caB}) (\caB_{U_{\alpha}^{+s}}) \right)^{e}_{\frF_{T_B} (L)} \\
            \subseteq& \caI^{\Gorb{G}}_{T_B} (L).
        \end{align}
    \end{enumerate}
\end{proof}

To obtain a functorial parameter-independent compactification, we take an infinite family of group actions such that the fundamental domain grows to the whole space.
The idea is that if one takes a sequence of finite regions increasing in size to infinity, then the compactified algebras over the increasing fundamental domain should contain all the information of the original algebra.

\begin{definition}\label{def:asymptoticallyfreefamily}
    An infinite family $\{G_\lambda\}_{\lambda \in \Lambda}$ of subgroups of $G_{\lambda_0} \subseteq \Iso(L,d)$ is \underline{asymptotically free} if, for any $T \ge 0$, $\Lambda_{\le T} := \setbuilder{\lambda \in \Lambda}{G_\lambda \text{ does not act freely at scale } T} \subseteq \Lambda$ is finite.
\end{definition}

For example, $\{k \ZZ\}_{k \in \NN}$ acting on $(\ZZ, d)$ by translation is an asymptotically free family. 
Observe that the family of fundamental domains $\{\ZZ / k \ZZ\}$ has an increasing circumference. 

\begin{definition}\label{def:asymptotic-DTcptalgebra}
    Let $(\caA, \delta)$ be a $G_{\lambda_0}$-covariant locally presented quasi-local *-algebra with parameters $D, T$.
    Let $\{G_\lambda\}_{\lambda \in \Lambda}$ be an asymptotically free family of subgroups of $G_{\lambda_0} \subseteq \Iso(L, d)$.
    We define the \underline{asymptotic $(D, T)$-compactified algebra} as follows:
    \begin{align}
        \Cpt_{D,T}^\Lambda (\caA, \delta) := \prod_{\lambda \in \Lambda} \Cpt_{D,T}^{G_\lambda} (\caA, \delta) \Big/ \bigoplus_{\lambda \in \Lambda} \Cpt_{D,T}^{G_\lambda} (\caA, \delta).
    \end{align}
    For any $U \in \frF_D (L)$, there is a natural *-algebra map $j_U^\Lambda: \caA_U \to \Cpt_{D,T}^\Lambda (\caA, \delta)$.
\end{definition}

The next three corollaries follow straightforwardly from the corresponding lemma above for compactified algebra with a fixed group.
Note, however, that because we do not have stabilization with respect to $T$ \footref{ft:DTcptalgebra-increase-parameters-Tstabilization}, the asymptotic compactified algebra still depends on $T$.

\begin{corollary}\label{cor:asymptotic-DTcptalgebra-increase-d}
    Let $(\caA, \delta)$ be a $G_{\lambda_0}$-covariant locally presented quasi-local *-algebra with parameters $D, T$.
    Let $\delta' \ge \delta$.
    Let $\{G_\lambda\}_{\lambda \in \Lambda}$ be an asymptotically free family of subgroups of $G_{\lambda_0} \subseteq \Iso(L, d)$.
    Then, there is a canonical isomorphism $\Cpt_{D, T}^\Lambda (\caA, \delta) \cong \Cpt_{D, T}^\Lambda (\caA, \delta')$ compatible with the maps $j_U^\Lambda: \caA_U \to \Cpt_{D, T}^\Lambda (\caA, \delta), {j'}_U^\Lambda: \caA_U \to \Cpt_{D, T}^\Lambda (\caA, \delta')$ for any $U \in \frF_D (L)$.
    Thus, we denote by $\Cpt_{D, T}^\Lambda (\caA)$ a representative of the isomorphism class of $\Cpt_{D, T}^\Lambda (\caA, \delta)$ for any $\delta$ such that $\caA$ is locally presented with parameters $(D, T)$.
\end{corollary}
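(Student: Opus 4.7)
The plan is to reduce to the fixed-group statement of \autoref{lem:DTcptalgebra-increase-d} and exploit the asymptotic freeness of $\{G_\lambda\}_{\lambda \in \Lambda}$ to discard the finitely many indices where its hypothesis fails. First I would fix the scale $S := 2T + 2\delta + \delta'$. By \autoref{def:asymptoticallyfreefamily}, the subset $\Lambda_{\le S} \subseteq \Lambda$ of indices for which $G_\lambda$ does not act freely at scale $S$ is finite. For every $\lambda \in \Lambda \setminus \Lambda_{\le S}$, \autoref{lem:DTcptalgebra-increase-d} supplies a canonical isomorphism
\begin{equation}
\psi_\lambda: \Cpt_{D,T}^{G_\lambda}(\caA, \delta) \xrightarrow{\sim} \Cpt_{D,T}^{G_\lambda}(\caA, \delta')
\end{equation}
intertwining $j_U$ and $j'_U$ for each $U \in \frF_D(L)$.

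Next I would assemble the $\psi_\lambda$ into a map between the products. Define
\begin{equation}
\Psi: \prod_{\lambda \in \Lambda} \Cpt_{D,T}^{G_\lambda}(\caA, \delta) \longrightarrow \prod_{\lambda \in \Lambda} \Cpt_{D,T}^{G_\lambda}(\caA, \delta')
\end{equation}
componentwise by $\psi_\lambda$ on $\Lambda \setminus \Lambda_{\le S}$ and by the zero map on the finite set $\Lambda_{\le S}$. This is a *-homomorphism and it clearly sends $\bigoplus_\lambda \Cpt_{D,T}^{G_\lambda}(\caA, \delta)$ into $\bigoplus_\lambda \Cpt_{D,T}^{G_\lambda}(\caA, \delta')$, since finitely-supported sequences remain finitely-supported after a componentwise map. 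Hence $\Psi$ descends to a well-defined *-homomorphism $\bar\Psi: \Cpt_{D,T}^\Lambda(\caA, \delta) \to \Cpt_{D,T}^\Lambda(\caA, \delta')$. The choices made on the finite exceptional set $\Lambda_{\le S}$ are invisible modulo $\bigoplus_\lambda \Cpt_{D,T}^{G_\lambda}(\caA, \delta')$, so $\bar\Psi$ does not depend on them, which I would take as the meaning of canonicity.

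To show $\bar\Psi$ is an isomorphism I would construct its inverse by the symmetric procedure, using $\psi_\lambda^{-1}$ for $\lambda \notin \Lambda_{\le S}$ and zero otherwise; on each cofinite tail the compositions are the identity, so the induced maps on the quotients are mutually inverse. Compatibility with $j_U^\Lambda$ and ${j'}_U^\Lambda$ follows from compatibility at each level $\lambda \notin \Lambda_{\le S}$, since the $\lambda$-component of $j_U^\Lambda(a)$ is $j_U(a)$ and $\psi_\lambda \circ j_U = j'_U$ by \autoref{lem:DTcptalgebra-increase-d}; the discrepancies on $\Lambda_{\le S}$ again lie in $\bigoplus_\lambda \Cpt_{D,T}^{G_\lambda}(\caA, \delta')$.

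No deep obstacle appears; the whole argument is essentially bookkeeping around the ``cofinite'' flavor of the quotient $\prod/\bigoplus$. The one point that warrants care is that the exceptional set $\Lambda_{\le S}$ depends on both $\delta$ and $\delta'$ via the scale $S$, so when writing the inverse one must reuse the same $\Lambda_{\le S}$; this is automatic here since the scale $2T + 2\delta + \delta'$ is symmetric enough under swapping the roles of the two parameters, and in any case a larger finite exceptional set only strengthens the statement. The conceptual role of the direct-sum quotient is precisely to make such finite discrepancies invisible, which is the design reason for introducing the asymptotic compactified algebra.
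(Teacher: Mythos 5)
Your argument is correct and is exactly the paper's (very terse) proof: apply \autoref{lem:DTcptalgebra-increase-d} on the cofinitely many $\lambda$ where $G_\lambda$ acts freely at scale $2T+2\delta+\delta'$, and note that the quotient by $\bigoplus_\lambda$ erases the finite exceptional set. The paper leaves the bookkeeping implicit; you have simply spelled it out.
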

\begin{proof}
    By \autoref{lem:DTcptalgebra-increase-d}, for all but finitely many $G_\lambda$, we have a canonical isomorphism $\Cpt_{D, T}^{G_\lambda} (\caA, \delta) \cong \Cpt_{D, T}^{G_\lambda} (\caA, \delta')$.
    Thus, we have a canonical isomorphism $\Cpt_{D, T}^\Lambda (\caA, \delta) \cong \Cpt_{D, T}^\Lambda (\caA, \delta')$.
\end{proof}

\begin{corollary}\label{cor:asymptotic-DTcptalgebra-increase-parameters}
    Let $(\caA, \delta)$ be a $G_{\lambda_0}$-covariant locally presented quasi-local *-algebra.
    Let $(D, T), (D', T')$ be parameters such that $D' \ge D, T' \ge T$.
    Let $\{G_\lambda\}_{\lambda \in \Lambda}$ be an asymptotically free family of subgroups of $G_{\lambda_0} \subseteq \Iso(L, d)$.
    \begin{enumerate}
        \item There is a canonical surjective *-homomorphism $\Cpt_{D, T}^\Lambda (\caA) \twoheadrightarrow \Cpt_{D', T'}^\Lambda (\caA)$ compatible with the maps ${j}_U^\Lambda: \caA_U \to \Cpt_{D, T}^\Lambda (\caA, \delta), {j'}_U^\Lambda: \caA_U \to \Cpt_{D', T'}^\Lambda (\caA, \delta)$ for any $U \in \frF_D (L) \subseteq \frF_{D'} (L)$.
        \item If $T = T'$, then the map is an isomorphism. 
        Thus we denote by $\Cpt_T^\Lambda (\caA)$ a representative of the isomorphism class of $\Cpt_{D, T}^\Lambda (\caA)$ for any $D$ such that $\caA$ is locally presented with parameters $(D, T)$.
    \end{enumerate}
\end{corollary}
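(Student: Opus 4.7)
The plan is to lift the fixed-group statement of \autoref{lem:DTcptalgebra-increase-parameters} through the asymptotic construction $\prod/\bigoplus$, exploiting the fact that any cofinite subset of $\Lambda$ suffices to determine an element modulo $\bigoplus$.

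First I would, for each $\lambda \in \Lambda$, apply \autoref{lem:DTcptalgebra-increase-parameters}(1) to obtain a *-homomorphism $f_\lambda: \Cpt_{D,T}^{G_\lambda}(\caA,\delta) \to \Cpt_{D',T'}^{G_\lambda}(\caA,\delta)$ compatible with the embeddings $j_U^{G_\lambda}$ and $j_U'^{G_\lambda}$ of every $U \in \frF_D(L)$. Taking the product over $\lambda$ gives a *-homomorphism $\prod_\lambda f_\lambda$ which manifestly sends $\bigoplus_\lambda \Cpt_{D,T}^{G_\lambda}(\caA,\delta)$ into $\bigoplus_\lambda \Cpt_{D',T'}^{G_\lambda}(\caA,\delta)$ componentwise, hence descends to the desired *-homomorphism $\Cpt_{D,T}^{\Lambda}(\caA) \to \Cpt_{D',T'}^{\Lambda}(\caA)$. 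Compatibility with the $j_U^\Lambda$ for $U \in \frF_D(L) \subseteq \frF_{D'}(L)$ is immediate from the componentwise compatibility.

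For surjectivity, I would fix (by \autoref{lem:quasilocal-staralgebra-increaseparameters}) parameters $r', t'$ so that $\caA$ is locally presented with $(D', r', T', t')$, and then use the asymptotic freeness of $\{G_\lambda\}$: the set $\Lambda_{\le D' + 2t' + 2\delta}$ is finite, so for every $\lambda$ outside this finite set, the hypothesis of \autoref{lem:DTcptalgebra-increase-parameters}(3) is satisfied and $f_\lambda$ is surjective. Given any $(b_\lambda)_\lambda \in \prod_\lambda \Cpt_{D',T'}^{G_\lambda}(\caA,\delta)$, choose preimages $a_\lambda \in \Cpt_{D,T}^{G_\lambda}(\caA,\delta)$ with $f_\lambda(a_\lambda) = b_\lambda$ for every $\lambda \notin \Lambda_{\le D' + 2t' + 2\delta}$, and set $a_\lambda := 0$ on the finite exceptional set; then $\prod_\lambda f_\lambda((a_\lambda)) - (b_\lambda)$ is supported on a finite subset of $\Lambda$, hence lies in $\bigoplus_\lambda \Cpt_{D',T'}^{G_\lambda}(\caA,\delta)$. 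This gives surjectivity after passing to the quotient.

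For the second claim, assume $T = T'$. Then \autoref{lem:DTcptalgebra-increase-parameters}(2) guarantees that every $f_\lambda$ is injective, so $\prod_\lambda f_\lambda$ is injective. To see that the induced map on quotients is injective, suppose $(a_\lambda) \in \prod_\lambda \Cpt_{D,T}^{G_\lambda}(\caA,\delta)$ satisfies $\prod_\lambda f_\lambda((a_\lambda)) \in \bigoplus_\lambda \Cpt_{D',T'}^{G_\lambda}(\caA,\delta)$; then $f_\lambda(a_\lambda) = 0$ for all $\lambda$ outside some finite set, and injectivity of each $f_\lambda$ forces $a_\lambda = 0$ on the same cofinite set, so $(a_\lambda) \in \bigoplus_\lambda \Cpt_{D,T}^{G_\lambda}(\caA,\delta)$. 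Combined with the surjectivity argument above, the induced map is an isomorphism, which justifies the notation $\Cpt_T^\Lambda(\caA)$. The only real subtlety is the interplay between the componentwise surjectivity failure on the finite exceptional set and the quotient by $\bigoplus$; the key observation is that the direct sum absorbs any finite discrepancy, so cofinite control at the level of $\Lambda$ is exactly what is needed.
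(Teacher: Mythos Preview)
Your proposal is correct and follows essentially the same approach as the paper: apply \autoref{lem:DTcptalgebra-increase-parameters} componentwise, use asymptotic freeness to get surjectivity of $f_\lambda$ for cofinitely many $\lambda$, and use part (2) for injectivity when $T=T'$. In fact your write-up is more explicit than the paper's, which simply asserts that cofinite surjectivity and componentwise injectivity pass to $\prod/\bigoplus$ without spelling out the lifting argument you give.
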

\begin{proof}\begin{enumerate}
    \item By \autoref{lem:DTcptalgebra-increase-parameters} (1), there is a canonical *-homomorphism $\Cpt_{D, T}^{G_\lambda} (\caA, \delta) \to \Cpt_{D', T'}^{G_\lambda} (\caA, \delta)$ for each $\lambda \in \Lambda$.
    Thus we have a *-homomorphism $\Cpt_{D, T}^\Lambda (\caA) \to \Cpt_{D', T'}^\Lambda (\caA)$.

    Let $t'$ be a parameter such that $\caA$ is locally presented with parameters $(D', T', t')$.
    Since $\Lambda_{\le D' + 2 t' + 2 \delta}$ is finite, by \autoref{lem:DTcptalgebra-increase-parameters} (3), the map $\Cpt_{D, T}^{G_\lambda} (\caA, \delta) \to \Cpt_{D', T'}^{G_\lambda} (\caA, \delta)$ is surjective for all but finitely many $\lambda \in \Lambda$.
    Thus, $\Cpt_{D, T}^\Lambda (\caA) \to \Cpt_{D', T'}^\Lambda (\caA)$ is surjective.

    \item If $T = T'$, then by \autoref{lem:DTcptalgebra-increase-parameters} (2), the map $\Cpt_{D, T}^{G_\lambda} (\caA) \to \Cpt_{D', T'}^{G_\lambda} (\caA)$ is injective for all $\lambda \in \Lambda$.
    Thus $\Cpt_{D, T}^\Lambda (\caA) \to \Cpt_{D', T'}^\Lambda (\caA)$ is injective.
\end{enumerate}\end{proof}

\begin{corollary}\label{cor:asymptotic-DTcptalgebra-boundedspread}
    Let $(\caA, \delta_A), (\caB, \delta_B)$ be $G_{\lambda_0}$-covariant locally presented quasi-local *-algebras with parameters $(D_A, T_A), (D_B, T_B)$.
    Let $(\alpha, s): \caA \to \caB$ be a $G_{\lambda_0}$-equivariant bounded-spread homomorphism.
    Suppose $T_A + 2s \le T_B$ and that there exists $D_A, D_B$ such that $D_A + 2 s \le D_B$ and $\caA$ is locally presented with parameters $(D_A, T_A)$ and $\caB$ is locally presented with parameters $(D_B, T_B)$.
    Let $\{G_\lambda\}_{\lambda \in \Lambda}$ be an asymptotically free family of subgroups of $G_{\lambda_0} \subseteq \Iso(L, d)$.
    Then, we have a canonical *-homomorphism $\Cpt^G (\alpha)$ such that
    \begin{equation}\begin{tikzcd}
        \caA_U \arrow[r, "\alpha"] \arrow[d, "{j}_U^\Lambda"] & \caB_{U^{+s}} \arrow[d, "{j'}_U^\Lambda"] \\
         \Cpt_{T_A}^\Lambda (\caA) \arrow[r, "\Cpt^\Lambda (\alpha)"] & \Cpt_{T_B}^\Lambda (\caB)
    \end{tikzcd}\end{equation}
    In particular, if $T_A' \ge T_A, T_B' \ge T_B$ such that $T_A' + 2 s \le T_B'$ and there exists $D_A', D_B'$ such that $D_A' + 2 s \le D_B'$, then
    \begin{equation}\begin{tikzcd}
        \Cpt_{T_A}^\Lambda (\caA) \arrow[r, "\Cpt^\Lambda (\alpha)"] \arrow[d, two heads] & \Cpt_{T_B}^\Lambda (\caB) \arrow[d, two heads] \\ 
        \Cpt_{T_A'}^\Lambda (\caA) \arrow[r, "\Cpt^\Lambda (\alpha)"] & \Cpt_{T_B'}^\Lambda (\caB)
    \end{tikzcd}\end{equation}
\end{corollary}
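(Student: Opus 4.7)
The plan is to apply Lemma~\ref{lem:DTcptalgebra-boundedspread} componentwise over the asymptotically free family $\{G_\lambda\}_{\lambda\in\Lambda}$ and then descend to the asymptotic quotient. First I would fix parameters $(D_A, T_A)$, $(D_B, T_B)$ with $D_A + 2s \le D_B$ and $T_A + 2s \le T_B$ such that $\caA, \caB$ are locally presented with these parameters, as guaranteed by hypothesis. By monotonicity of the local-commutativity axiom \ref{it:quasilocal-staralgebra-localcommutativity} in $\delta$, I may further enlarge $\delta_B$ so that $\delta_A \le \delta_B + 2s$ — this does not change $\Cpt_{T_B}^\Lambda(\caB)$ by Corollary~\ref{cor:asymptotic-DTcptalgebra-increase-d}, since the family is asymptotically free.

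Next, for every $\lambda \in \Lambda$, since $G_\lambda \subseteq G_{\lambda_0}$, the map $\alpha$ is automatically $G_\lambda$-equivariant, and all three hypotheses of Lemma~\ref{lem:DTcptalgebra-boundedspread} are in force. This yields a canonical *-homomorphism
\begin{equation}
    \Cpt^{G_\lambda}(\alpha): \Cpt_{D_A, T_A}^{G_\lambda}(\caA, \delta_A) \to \Cpt_{D_B, T_B}^{G_\lambda}(\caB, \delta_B)
\end{equation}
compatible with the local embeddings $j_U$. The product $\prod_\lambda \Cpt^{G_\lambda}(\alpha)$ is then a *-homomorphism $\prod_\lambda \Cpt_{D_A, T_A}^{G_\lambda}(\caA, \delta_A) \to \prod_\lambda \Cpt_{D_B, T_B}^{G_\lambda}(\caB, \delta_B)$, and because the construction is entirely componentwise, it carries the direct-sum ideal into the direct-sum ideal. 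Passage to the quotient yields the required $\Cpt^\Lambda(\alpha): \Cpt_{T_A}^\Lambda(\caA) \to \Cpt_{T_B}^\Lambda(\caB)$, and the commutativity of the square with $j_U^\Lambda, {j'}_U^\Lambda$ is inherited from the componentwise diagrams.

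For the second (monotonicity) square under $T_A' \ge T_A, T_B' \ge T_B$ with $T_A' + 2s \le T_B'$, the vertical surjections are those of Corollary~\ref{cor:asymptotic-DTcptalgebra-increase-parameters}, themselves built componentwise from Lemma~\ref{lem:DTcptalgebra-increase-parameters}. The second assertion of Lemma~\ref{lem:DTcptalgebra-boundedspread} supplies the componentwise commutativity for each $\lambda$, and the global square commutes by assembling these.

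The principal obstacle is not mathematical substance — that is already contained in Lemma~\ref{lem:DTcptalgebra-boundedspread} — but rather bookkeeping of the auxiliary parameters: I need to verify that $(D_A, D_B)$ may be chosen consistently as $T_A, T_B$ grow so that the inequality $D_A + 2s \le D_B$ survives, and that the resulting map on $\Cpt_T^\Lambda$ is genuinely independent of the chosen $D_A, D_B$. Both points reduce to applying the parameter-monotonicity square of Lemma~\ref{lem:DTcptalgebra-boundedspread} to a common refinement in $D$ and invoking the stabilization statements of Corollaries~\ref{cor:asymptotic-DTcptalgebra-increase-d} and~\ref{cor:asymptotic-DTcptalgebra-increase-parameters}, which is why asymptotic freeness of $\{G_\lambda\}$ is essential: it ensures the finitely many exceptional indices (where the requisite free-action scale fails) are absorbed into the quotient by $\bigoplus_\lambda$.
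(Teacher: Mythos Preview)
Your proposal is correct and follows exactly the approach the paper takes: the paper's proof is the single line ``It follows from Lemma~\ref{lem:DTcptalgebra-boundedspread},'' and you have simply written out the componentwise-product-then-quotient argument that this sentence encodes, together with the parameter adjustments via Corollaries~\ref{cor:asymptotic-DTcptalgebra-increase-d} and~\ref{cor:asymptotic-DTcptalgebra-increase-parameters}.
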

\begin{proof}
    It follows from \autoref{lem:DTcptalgebra-boundedspread}.
\end{proof}

Finally, we make it fully parameter-independent by taking a colimit with respect to $T$.

\begin{definition}\label{def:asymptotic-cptalgebra}
    Let $\caA$ be a $G_{\lambda_0}$-covariant locally presented quasi-local *-algebra.
    Let $\{G_\lambda\}_{\lambda \in \Lambda}$ be an asymptotically free family of subgroups of $G_{\lambda_0} \subseteq \Iso(L, d)$.
    We define the \underline{asymptotic compactified algebra} as follows:
    \begin{align}
        \Cpt^\Lambda (\caA) := \colim_T \Cpt_T^\Lambda (\caA).
    \end{align}
\end{definition}

We note that it is not possible to define a functor by taking a colimit with respect to $T$ first and taking an infinite sequence indexed by $\lambda$. 
In fact, for a fixed group action $G_{\lambda}$, $D, T$ need to be sufficiently small compared to the scale of the $G_{\lambda}$ action. 

As a limit of $j_U: \caA_U \to \Cpt^G_{D, T} (\caA, \delta)$ which models the local algebra embedding into Tube algebra (\autoref{lem:fusioncategory-end-embed-tube-Xk}), we have the following natural *-algebra map.
Again, it is in general not injective.
Yet, we expect it to be injective in practical cases.

\begin{lemma}\label{lem:asymptotic-cptalgebra-embedding}
    Let $\caA$ be a $G_{\lambda_0}$-covariant locally presented quasi-local *-algebra.
    Let $\{G_\lambda\}_{\lambda \in \Lambda}$ be an asymptotically free family of subgroups of $G_{\lambda_0} \subseteq \Iso(L, d)$.
    Then, we have a canonical *-homomorphism
    \begin{align}
        \caA \to \Cpt^\Lambda (\caA).
    \end{align}
\end{lemma}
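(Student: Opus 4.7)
My plan is to construct, for each $U \in \frF(L)$, a canonical map $j_U: \caA_U \to \Cpt^\Lambda(\caA)$, verify that these are compatible with the inclusions of the precosheaf, and then invoke finite determination ((\ref{it:quasilocal-staralgebra-preservescolimits}) of \autoref{def:quasilocal-staralgebra}) to descend to a *-homomorphism on $\caA_L = \colim_{U \in \frF(L)} \caA_U$.

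First, for each $U \in \frF(L)$, I would pick parameters $(D, T)$ with $D \ge \diam U$ and for which $\caA$ is locally presented; this is always possible by starting from any base admissible pair and invoking \autoref{lem:quasilocal-staralgebra-increaseparameters}. Then \autoref{def:DTcptalgebra} provides $\caA_U \to \Cpt^{G_\lambda}_{D, T}(\caA, \delta)$ for each $\lambda$, \autoref{def:asymptotic-DTcptalgebra} assembles these into $\caA_U \to \Cpt^\Lambda_{D, T}(\caA)$, and composition with the canonical maps $\Cpt^\Lambda_{D, T}(\caA) \cong \Cpt^\Lambda_T(\caA) \to \Cpt^\Lambda(\caA) = \colim_T \Cpt^\Lambda_T(\caA)$ yields $j_U$. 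Independence of the choice of $(D, T)$ follows from \autoref{cor:asymptotic-DTcptalgebra-increase-parameters}: enlarging $D$ at fixed $T$ gives an isomorphism compatible with the $j^\Lambda$ maps, and enlarging $T$ is absorbed into the colimit over $T$.

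The crux of the argument is compatibility with the inclusions $\iota_{U \subseteq V}: \caA_U \hookrightarrow \caA_V$ for $U \subseteq V$ in $\frF(L)$. Pick simultaneous parameters $(D, T)$ with $D, T \ge \diam V$ admissible for local presentation. In the free product $\bigast_{\frF_D(L)} \caA$, for any $a \in \caA_U$, the two letters---$a$ sitting in the $\caA_U$-factor and $\iota_{U \subseteq V}(a)$ sitting in the $\caA_V$-factor---have the same image in $\caA_{U \cup V} = \caA_V$, so their difference lies in $\ker(\caA_U \ast \caA_V \to \caA_{U \cup V})$. Because $d(U, V) = 0 \le \delta$ and $U, V \in \frF_T(L)$, this kernel is contained in the local-relations ideal $\caI^{\loc}_{\delta, T}(L)$ of \autoref{def:relations}, and hence the difference vanishes in each $\Cpt^{G_\lambda}_{D, T}(\caA, \delta)$ and in $\Cpt^\Lambda(\caA)$. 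This yields $j_V \circ \iota_{U \subseteq V} = j_U$.

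The compatible family $\{j_U\}_{U \in \frF(L)}$ then descends, by finite determination, to the desired *-homomorphism $\caA_L \to \Cpt^\Lambda(\caA)$. I expect no significant obstacle here; the only care required is bookkeeping the parameters so that a single admissible pair $(D, T)$ simultaneously fits both $U$ and $V$ while retaining the local-presentation hypothesis, which is routine via \autoref{lem:quasilocal-staralgebra-increaseparameters}.
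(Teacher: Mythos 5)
Your proposal is correct and follows the same strategy as the paper's proof: pick admissible parameters $(D,T)$ large enough to accommodate $U$, assemble $j_U: \caA_U \to \Cpt^\Lambda(\caA)$ via the asymptotic compactification and the colimit over $T$, and descend by finite determination. The paper merely asserts "the family of maps $\caA_U \to \Cpt^\Lambda(\caA)$ is compatible with the inductive system" without elaboration, whereas you fill in the key detail — that for $U \subseteq V$ the difference between the letter $a \in \caA_U$ and $\iota_{U\subseteq V}(a) \in \caA_V$ lies in $\ker(\caA_U * \caA_V \to \caA_V) \subseteq \caI^{\loc}_{\delta,T}(L)$ since $d(U,V)=0\le\delta$, hence vanishes in every $\Cpt^{G_\lambda}_{D,T}$.
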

\begin{proof}
    For every $U \in \frF (L)$, there is some $T_U$ such that $\caA$ is locally presented with $(\diam U, T_U)$ so that there is $\caA_U \to \Cpt_{T_U}^\Lambda (\caA) \to \Cpt^{\Lambda} (\caA)$.
    This map is independent of the choice of $T_U$.
    Since $\caA = \colim_{U \in \frF (L)} \caA_U$ and the family of maps $\caA_U \to \Cpt^{\Lambda} (\caA)$ is compatible with the inductive system, we have a canonical *-homomorphism $\caA \to \Cpt^\Lambda (\caA)$.
\end{proof}

Finally, we prove the functoriality of the compactification $\caA \mapsto \Cpt^\Lambda (\caA)$.
If we let the target category be $\stAlg$, we do not have a hope for it to be full or essentially surjective. 
The faithfulness follows if the natural map $j: \caA \to \Cpt^\Lambda (\caA)$ in \autoref{lem:asymptotic-cptalgebra-embedding} is indeed an embedding. 
We will see this in the fusion spin chain case in Section \ref{sec:fusionspinchain}. 

Thus, the general compactification partially remembers the quasi-local algebra (observables in the infinite volume limit), and in the fusion case, one can completely recover it. 

\begin{theorem}\label{thm:asymptotic-cptalgebra-functor}
    Let $\{G_\lambda\}_{\lambda \in \Lambda}$ be an asymptotically free family of subgroups of $G_{\lambda_0} \subseteq \Iso(L, d)$.
    Then, $\Cpt^\Lambda$ is a functor from a full subcategory of $\QLstAlg_{L, G_{\lambda_0}}$ consisting of locally presented quasi-local *-algebras to the category of *-algebras.
\end{theorem}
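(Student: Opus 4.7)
The plan is to build $\Cpt^\Lambda$ on morphisms out of the partially defined parameter-dependent maps $\Cpt^\Lambda(\alpha)\colon \Cpt_{T_A}^\Lambda(\caA)\to \Cpt_{T_B}^\Lambda(\caB)$ from \autoref{cor:asymptotic-DTcptalgebra-boundedspread}, and then pass to the colimit in $T$ to remove the dependence on parameters. Once the map on morphisms is in hand, functoriality will be checked on each piece of the colimit.

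First, given a $G_{\lambda_0}$-equivariant bounded-spread homomorphism $(\alpha,s)\colon \caA\to\caB$ between locally presented quasi-local *-algebras, I note that for any $T_A$ for which $\caA$ is locally presented with some cover parameter $D_A$, one can by \autoref{lem:quasilocal-staralgebra-increaseparameters} choose $T_B\ge T_A+2s$ and $D_B\ge D_A+2s$ so that $\caB$ is locally presented with parameters $(D_B,T_B)$. Then \autoref{cor:asymptotic-DTcptalgebra-boundedspread} yields a *-homomorphism $\Cpt_{T_A}^\Lambda(\caA)\to \Cpt_{T_B}^\Lambda(\caB)$. The compatibility square in the same corollary shows that the resulting family, indexed by admissible $T_A$ paired with admissible $T_B\ge T_A+2s$, is compatible with the surjective stabilization maps from \autoref{cor:asymptotic-DTcptalgebra-increase-parameters}. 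Thus a fixed cofinal choice $T_A\mapsto T_B(T_A)$ defines a morphism of colimits $\Cpt^\Lambda(\alpha)\colon\Cpt^\Lambda(\caA)\to\Cpt^\Lambda(\caB)$, and any two cofinal choices agree because both factor through the same larger $T_B$ up to the stabilization square.

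Identity is immediate: the identity $\id_\caA$ has spread $0$, so we can take $T_A=T_B$ and $D_A=D_B$, and at each fixed $T$ the induced map on $\Cpt^{G_\lambda}_{D,T}(\caA,\delta)$ is literally the identity by construction of \autoref{lem:DTcptalgebra-boundedspread}; hence the colimit map is $\id_{\Cpt^\Lambda(\caA)}$. For composition, let $(\alpha,s_\alpha)\colon\caA\to\caB$ and $(\beta,s_\beta)\colon\caB\to\caC$; then $\beta\circ\alpha$ has spread at most $s_\alpha+s_\beta$. Choose admissible parameters with $T_B\ge T_A+2s_\alpha$ and $T_C\ge T_B+2s_\beta$, so that $T_C\ge T_A+2(s_\alpha+s_\beta)$. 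The associated map $\Cpt_{T_A}^\Lambda(\caA)\to\Cpt_{T_C}^\Lambda(\caC)$ is induced, componentwise in $\lambda$, by the composite $\bigast\beta\circ\bigast\alpha=\bigast(\beta\alpha)$ on free products of local algebras, modulo the same ideals, so it agrees with $\Cpt^\Lambda(\beta\alpha)$ at level $T_A\to T_C$. Taking colimits gives $\Cpt^\Lambda(\beta\alpha)=\Cpt^\Lambda(\beta)\circ\Cpt^\Lambda(\alpha)$.

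The only subtle point, which I view as the main obstacle, is verifying that the morphism on colimits is truly independent of the cofinal choice $T_A\mapsto T_B(T_A)$ used to define it: one must check that any two such choices yield the same map after restricting to a common representative. This is handled by enlarging both choices to a common upper bound and using the naturality square at the end of \autoref{cor:asymptotic-DTcptalgebra-boundedspread} together with surjectivity of the stabilization maps from \autoref{cor:asymptotic-DTcptalgebra-increase-parameters}; the containment of ideals proved in \autoref{lem:DTcptalgebra-boundedspread} guarantees that the two candidate maps out of $\Cpt^\Lambda_{T_A}(\caA)$ into $\Cpt^\Lambda_{T}(\caC)$, for $T$ a common upper bound, coincide on generators and hence on the whole algebra. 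With this verified, all functoriality axioms follow and the theorem is proved.
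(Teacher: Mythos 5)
Your proposal is correct and follows essentially the same route as the paper: fix an admissible $T_A$, produce $T_B \ge T_A + 2s$ and $D_B \ge D_A + 2s$ for which $\caB$ is locally presented, apply \autoref{cor:asymptotic-DTcptalgebra-boundedspread} to get $\Cpt_{T_A}^\Lambda(\caA)\to\Cpt_{T_B}^\Lambda(\caB)$, post-compose to $\Cpt^\Lambda(\caB)$, and pass to the colimit on $T_A$. The paper compresses all of the functoriality verification into ``Since every construction is canonical, it is functorial,'' whereas you spell out identity, composition (via $\bigast\beta\circ\bigast\alpha=\bigast(\beta\alpha)$ componentwise), and independence of the cofinal choice $T_A\mapsto T_B(T_A)$ using the compatibility square in \autoref{cor:asymptotic-DTcptalgebra-boundedspread} together with the surjective stabilization maps of \autoref{cor:asymptotic-DTcptalgebra-increase-parameters} — which is precisely the content the paper's brevity leaves implicit.
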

\begin{proof}
    Let $\alpha: \caA \to \caB$ be a $G_{\lambda_0}$-equivariant bounded-spread homomorphism with spread $s$. 
    For any sufficiently large $T_A$, $\caB$ is locally presented with $T_B \ge T_A + 2s$ and admits a corresponding $D_B$ such that $D_B \ge D_A + 2s$ where $\caA$ is locally presented with parameters $(D_A, T_A)$.
    Thus, there is a map $\Cpt_{T_A}^\Lambda (\caA) \to \Cpt_{T_B}^\Lambda (\caB)$ by \autoref{cor:asymptotic-DTcptalgebra-boundedspread}.
    Composing with $\Cpt_{T_B}^\Lambda (\caB) \to \Cpt^\Lambda (\caB)$, we have a map $\Cpt_{T_A}^\Lambda (\caA) \to \Cpt^\Lambda (\caB)$ for all sufficiently large $T_A$.
    Thus, there is a map $\Cpt^\Lambda (\alpha): \Cpt^\Lambda (\caA) \to \Cpt^\Lambda (\caB)$. 
    Since every construction is canonical, it is functorial.
\end{proof}

The next lemma shows that pre-DHR bimodules of locally presented quasi-local *-algebra can also be compactified.
It is unclear how to interpret the compactified bimodule in the general setting. 
We will see in the fusion spin chain case that this functor factors the equivalence $_\caA \DHR_\caA \cong \Rep (\Tube_\caC (X^{\otimes k}))$ (\autoref{thm:fusionspinchainalgebra-DTIcptmod}).
While the construction of the left module seems to be arbitrary, it has a natural interpretation in terms of the tube picture in the case of fusion spin chains. 

\begin{lemma}\label{lem:DTIcptmod}
    Let $(\caA, \delta)$ be a $G$-covariant locally presented quasi-local *-algebra over $L$ with $(R, \gamma)$ the constants of algebraic Haag duality and $(D, T), (D', T')$ the parameters of local presentation.
    Let $U \in \frB (L)$ be a ball such that $\diam (U) \ge R$ so that $\caA_{U^{-\delta}} \subseteq Z_{\caA_L} (\caA_{U^{c}}) \subseteq \caA_{U^{+\gamma}}$.

    If $D, D'$ satisfies $D' \ge 2 T + \diam (U) + 4 \delta + 4 \gamma$ and $G$ acts freely at scale $2 T + \delta + \diam(U)$, there is a functor from a full subcategory of $_\caA \pDHR_\caA$ localizable in $U$ to the category of pre-Hilbert $(\Cpt_{D, T}^G (\caA, \delta), \Cpt_{D', T'}^G (\caA, \delta))$-bimodules. 
    \begin{align}
        (_\caA \pDHR_\caA)_U \to _{\Cpt_{D, T}^G (\caA, \delta)} \pHMod_{\Cpt_{D', T'}^G (\caA, \delta)}
    \end{align}

    When $\Cpt_{D, T}^G (\caA, \delta)$ and $\Cpt_{D', T'}^G (\caA, \delta)$ are canonically isomorphic 
    \footnote{
        The condition $D' \ge 2 T + \diam (U) + 4 \delta + 4 \gamma$ forces $T' > T$. 
        Therefore, $\Cpt_{D, T}^G (\caA, \delta) \cong \Cpt_{D', T'}^G (\caA, \delta)$ does not happen as a consequence of \autoref{lem:DTcptalgebra-increase-parameters}.
        In practice, we expect that many examples of $\caA$ makes the map $\Cpt_{D, T}^G (\caA, \delta) \to \Cpt_{D', T'}^G (\caA, \delta)$ injective for $T' \ge T$.
    \label{ft:DTIcptmod-TTprime}}, this is a monoidal functor. 
    In particular, $_\caA \pDHR_\caA$ acts on $\Rep (\Cpt_{D', T'}^G (\caA, \delta))$ from right.
\end{lemma}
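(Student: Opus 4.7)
The functor $F$ will send a pre-DHR bimodule $M$ localizable in $U$ to $F(M) = M$ as a pre-Hilbert space, equipped with induced left and right actions of the two compactifications. For a generator $j_V(a) \in \Cpt_{D, T}^G(\caA, \delta)$ with $V \in \frF_D(L)$ and $a \in \caA_V$, I would define the left action on $m$ by $a \triangleright m$, and symmetrically for $j'_W(b) \in \Cpt_{D', T'}^G(\caA, \delta)$ (with $W \in \frF_{D'}(L)$, $b \in \caA_W$) I would set the right action by $m \triangleleft b$. The core task is to verify that these descend through the relation ideals $\caI^{\loc} + \caI^{\Gdist{G}} + \caI^{\Gorb{G}}$ at the respective parameters, that the two actions commute, and that the assignment is functorial.

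First I would dispatch the local and $G$-distant relations. The local relations act trivially because they already hold in $\caA_L$ and $M$ is an $\caA$-bimodule. For the $G$-distant relations, note that $d(G V_1, G V_2) > \delta$ implies in particular $d(V_1, V_2) > \delta$, so $[\caA_{V_1}, \caA_{V_2}] = 0$ in $\caA_L$ by $\delta$-local commutativity, and hence acts as zero on $M$ from either side.

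The critical step is the $G$-orbital relation: I need $\phi_g(a) \triangleright m = a \triangleright m$ for every $a \in \caA_V$, $V \in \frF_T(L)$, $g \in G$, and $m \in M$, and likewise on the right. My argument would exploit the DHR localizability of $M$ in $U$: any element supported outside $U$ acts on $M$ via the right projective basis $\{b_i^U\}$ through the intertwining $a \triangleright b_i^U = b_i^U \triangleleft a$, which interchanges left and right actions there. The hypothesis $D' \ge 2T + \diam(U) + 4\delta + 4\gamma$ is arranged so that for any $V \in \frF_T(L)$ and chosen $g$, the combined region $V \cup gV \cup U^{+\gamma}$ fits inside some $W \in \frF_{D'}(L)$, enabling the comparison $a \triangleright m$ versus $\phi_g(a) \triangleright m$ inside a single $\Cpt_{D',T'}^G$-compactification. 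The free action at scale $2T + \delta + \diam(U)$ guarantees that among the $G$-translates of $V$ at most one lies near $U$, so the identification $j_V(a) \sim j_{gV}(\phi_g(a))$ is unambiguous. Combining these, the difference $(\phi_g(a) - a) \triangleright m$ should exhibit as the image of a sum of local and $G$-distant commutators at scale $(D', T')$, which the compactification kills.

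Once bimodule structure is in place, functoriality is immediate: any adjointable intertwiner of pre-DHR bimodules commutes with the $\caA$-actions and hence with the induced actions. When $\Cpt_{D, T}^G(\caA, \delta) \cong \Cpt_{D', T'}^G(\caA, \delta)$ canonically, left and right actions land in a common algebra, and the Connes-type relative tensor product $\otimes_{\caA}$ on ${}_\caA \pDHR_\caA$ lifts to the relative tensor product over the compactification, yielding the monoidal structure and in particular the right action of ${}_\caA \pDHR_\caA$ on $\Rep(\Cpt_{D', T'}^G(\caA, \delta))$.

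The main technical obstacle is the orbital step: one must simultaneously respect the smaller scale $(D, T)$ on the left and the larger $(D', T')$ on the right, invoke the DHR basis at scale $U^{+\gamma}$, and manage free-action constraints, all while ensuring that $\phi_g(a) \triangleright m - a \triangleright m$ actually decomposes into commutators already killed in $\Cpt_{D',T'}^G(\caA, \delta)$. The precise coefficients $4\delta + 4\gamma$ in $D'$ and the free-action scale $2T + \delta + \diam(U)$ appear to be calibrated exactly so that these inclusions close, and the careful bookkeeping of nested neighborhoods is where most of the proof's labor lies.
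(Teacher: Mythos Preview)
Your approach has a fundamental gap at the orbital step. You want the left action of $j_V(a)$ on $m \in M$ to be $a \triangleright m$, but for this to descend through $\caI^{\Gorb{G}}$ you need $a \triangleright m = \phi_g(a) \triangleright m$ \emph{as elements of $M$}, for every $g \in G$. This is simply false: if $V$ is close to $U$ then $a$ genuinely moves the localized basis $\{b_i^U\}$, while $\phi_g(a)$ for $g \neq e$ is supported far from $U$ and acts via the intertwining $\phi_g(a) \triangleright b_i^U = b_i^U \triangleleft \phi_g(a)$, which is a completely different operator on $M$. Your sentence ``the compactification kills'' cannot help, because the target space is $M$ itself, not something tensored with $\caT'$; there is no mechanism in $M$ that identifies $a$ with its $G$-translates. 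A second, related problem: $M$ carries an $\caA$-valued inner product, and you give no way to convert it to a $\caT'$-valued one, since there is no homomorphism $\caA_L \to \caT'$ (only $\caA_W \to \caT'$ for bounded $W$).

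The paper's construction is built precisely to dodge this. It does \emph{not} take $F(M) = M$; instead it restricts to the localized submodule $M_U = \{x \in M : ax = xa \text{ for all } a \in \caA_{U^c}\}$, whose inner product lands in $Z_{\caA_L}(\caA_{U^c}) \subseteq \caA_{U^{+\gamma}}$, and then sets $\widetilde{M_U} := M_U \boxtimes_{Z_{\caA_L}(\caA_{U^c})} \caT'$. The left $\caT$-action of $a \in \caA_V$ is defined by first \emph{choosing} the $g \in G$ that brings $V$ closest to $U$, acting by $\phi_g(a)$ on the $M_U$ factor, and then pushing the resulting inner products into $\caT'$ via $j$: explicitly $a \triangleright (m \boxtimes t) = \sum_i b_i \boxtimes j(\langle b_i, \phi_g(a) m\rangle_M)\, t$. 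The orbital relation is now trivially satisfied because the choice of $g$ is built into the definition, and the $G$-distant and local relations are checked by direct computation. The parameters $D' \ge 2T + \diam(U) + 4\delta + 4\gamma$ and the free-action scale are there so that $\langle b_i, \phi_g(a) m\rangle_M$ lands in an algebra $\caA_W$ with $W \in \frF_{D'}(L)$, making the map $j$ available. This is genuinely different from your plan, and the extra tensor factor $\caT'$ is not optional.
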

\begin{proof}
    Let $\caT = \Cpt_{D, T}^G (\caA, \delta)$ and $\caT' = \Cpt_{D', T'}^G (\caA, \delta)$ for brevity.
    For $M \in {}_\caA \pDHR_\caA$ localizable in $U$, let $M_U := \setbuilder{x \in M}{\text{for all} \; a \in \caA_{U^c}, ax = xa} = \sum_i b_i^U Z_{\caA_L} (\caA_{U^{c}})$, where $\{b^U_i\}$ is a right projective basis localized in $U$.
    Then, $M_U$ is a $(Z_{\caA_L} (\caA_{U^{c}}), Z_{\caA_L} (\caA_{U^{c}}))$ bimodule: for $x \in M_U, a \in Z_{\caA_L} (\caA_{U^{c}})$, $a x$ commutes with $\caA_{U^c}$. 
    Let $\braket*{\cdot}{\cdot}_{M_U}: M_U \times M_U \to \caA_L$ denote the form inherited from $M$. 
    Then, $\braket*{x}{y}_{M_U} \in Z_{\caA_L} (\caA_{U^c}) \subseteq \caA_{U^{+\gamma}}$.

    By assumption $D' \ge \diam (U^{+\gamma})$, there is an algebra map $Z_{\caA_L} (\caA_{U^{c}}) \hookrightarrow \caA_{U^{+\gamma}} \to \caT'$.
    Define a pre-Hilbert right $\caT'$-module
    \begin{align}
        \widetilde{M_U} := M_U \boxtimes_{Z_{\caA_L} (\caA_{U^{c}})} \caT'.
    \end{align}
    If $f: M \to N$ is an adjointable bimodule map, $f(M_U) \subseteq N_U$. 
    Thus, we have $\widetilde{f_U} := f_U \boxtimes \id_{\caT'}: \widetilde{M_U} \to \widetilde{N_U}$, which is adjointable right $\caT'$ module map.

    We introduce a left $\caT$-action on $\widetilde{M_U}$ and show $\widetilde{f_U}$ respects it. 
    It suffices to define a left $\caA_V$ action for all $V \in \frF_D (L)$ that is invariant under $\caI^{\loc}_{\delta, T} (V) + \caI^{\Gdist{G}}_{\delta, T} (V) + \caI^{\Gorb{G}}_{T} (V)$.
    For later convenience, we define a left $\caA_V$ action for all $V \in \frF_{2 T + \delta} (L)$. 
    Let $a \in \caA_V$ and $V \in \frF_{2 T + \delta} (L)$. 
    Choose $g \in G$ such that $d(g V, U) = \min_h d(h V, U)$.
    For $\sum_\mu m_\mu \boxtimes t_\mu \in \widetilde{M_U}$, 
    \begin{align}
        a \triangleright \paren*{\sum_\mu m_\mu \boxtimes t_\mu} 
        := \sum_{i, \mu} b_i \boxtimes j \paren*{\braket*{b_i}{\phi_g (a) m_\mu}_{M}} t_\mu,
    \end{align}
    where, $\braket*{b_i}{\phi_g (a) m_\mu}_{M} \subseteq Z_{\caA_L} (\caA_{(U \cup g V^{+\delta})^c}) \subseteq \caA_{(U \cup g V^{+\delta})^{+\gamma}} = \caA_{U^{+\gamma} \cup g V^{+\delta+\gamma}}$. 
    If $d(U^{+\gamma}, g V^{+\delta+\gamma}) > \delta$, by \autoref{rem:DTcptalgebra-canonicalj-Gdist}, there is a canonical map $j: \caA_{U^{+\gamma}} \otimes \caA_{g V^{+\delta+\gamma}} \to \caT'$ induced by $j_{U^{+\gamma}}: \caA_{U^{+\gamma}} \to \caT', j_{g V^{+\delta+\gamma}}: \caA_{g V^{+\delta+\gamma}} \to \caT'$.
    If $d(U^{+\gamma}, g V^{+\delta+\gamma}) \le \delta$, there is $j: \caA_{U^{+\gamma} \cup g V^{+\delta+\gamma}} \to \caT'$ since $D' \ge 2 T + \diam (U) + 4 \delta + 4 \gamma \ge \diam (U^{+\gamma} \cup g V^{+\delta+\gamma})$.

    \begin{enumerate}
        \item It is independent of the choice of $g \in G$. \\ 
        Suppose $g \neq h$ minimizes $d(g V, U) = d(h V, U)$. 
        Then, $d(g V, U) = d(h V, U) > 0$; otherwise $(g V \cup U) \cap (h g^{-1})(g V \cup U) = \emptyset$ but $G$ acts freely at scale $2 T + \delta + \diam(U) \ge \diam (g V \cup U)$.
        Then, 
        \begin{align}
            \sum_{i, \mu} b_i \boxtimes j \paren*{\braket*{b_i}{\phi_g (a) m_\mu}_{M}} t_\mu
            =& \sum_{i, \mu} b_i \boxtimes j \paren*{\braket*{b_i}{m_\mu}_M} j_{g V} (\phi_g (a)) t_\mu \\
            =& \sum_{i, \mu} b_i \boxtimes j \paren*{\braket*{b_i}{m_\mu}_M} j_{h V} (\phi_h (a)) t_\mu \\
            =& \sum_{i, \mu} b_i \boxtimes j \paren*{\braket*{b_i}{\phi_h (a) m_\mu}_{M}} t_\mu
        \end{align}

        \item It is independent of the choice of projective basis. \\ 
        Let $\{b_i^U\}, \{c_j^U\}$ be two projective bases localized in $U$.
        \begin{align}
            \sum_{i, \mu} b_i \boxtimes j \paren*{\braket*{b_i}{\phi_g (a) m_\mu}_{M}} t_\mu 
            =& \sum_{i, j, \mu} c_j \braket*{c_j}{b_i}_{M_U} \boxtimes j \paren*{\braket*{b_i}{\phi_g (a) m_\mu}_{M}} t_\mu \\ 
            =& \sum_{i, j, \mu} c_j \boxtimes j \paren*{\braket*{c_j}{b_i}_{M_U}} j \paren*{\braket*{b_i}{\phi_g (a) m_\mu}_{M}} t_\mu \\
            =& \sum_{j, \mu} c_j \boxtimes j \paren*{\braket*{c_j}{\phi_g (a) m_\mu}_{M}} t_\mu
        \end{align}
    \end{enumerate}

    We induce an action of $\bigast_{\frF_D (L)} \caA$ and check that $\caI^{\loc}_{\delta, T} (L) + \caI^{\Gdist{G}}_{\delta, T} (L) + \caI^{\Gorb{G}}_{T} (L)$ acts trivially.
    By construction, $\caI^{\Gorb{G}}_{T} (L)$ acts trivially. 
    \begin{enumerate}
        \item $\caI^{\Gdist{G}}_{\delta, T} (L)$ \\ 
        Let $a b - b a \in [\caA_{V_1}, \caA_{V_2}]$ for $V_1, V_2 \in \frF_T (L)$ such that $d(G V_1, G V_2) > \delta$.
        \begin{align}
            a \triangleright \paren*{b \triangleright \paren*{\sum_\mu m_{\mu} \boxtimes t_\mu}} 
            &= a \triangleright \paren*{\sum_{\mu, j} b_j \boxtimes j \paren*{\braket*{b_j}{\phi_h (b) m_{\mu}}_M} t_\mu} \\
            &= \sum_{\mu, j, k} b_k \boxtimes j \paren*{\braket*{b_k}{\phi_g (a) b_j}_M} j \paren*{\braket*{b_j}{\phi_h (b) m_{\mu}}_M} t_\mu \\
            &= \sum_{\mu, k} b_k \boxtimes j \paren*{\braket*{b_k}{\phi_g (a) \phi_h (b) m_{\mu}}_M} t_\mu \\
            &= \sum_{\mu, k} b_k \boxtimes j \paren*{\braket*{b_k}{\phi_h (b) \phi_g (a) m_{\mu}}_M} t_\mu \\
            &= b \triangleright \paren*{a \triangleright \paren*{\sum_\mu m_{\mu} \boxtimes t_\mu}} 
        \end{align}

        \item $\caI^{\loc}_{\delta, T} (L)$ \\ 
        Let $V_1, V_2 \in \frF_T (L)$ such that $d (V_1, V_2) \le \delta$. 
        Then, we have a well-defined action of $\caA_{V_1 \cup V_2}$ on $\widetilde{M_U}$ which factors $\caA_{V_1}, \caA_{V_2}$ actions. 
        Thus, any $\sum_i w_i \in \ker (\caA_{V_1} \ast \caA_{V_2} \to \caA_{V_1 \cup V_2})$ has trivial action.
    \end{enumerate}

    Finally, we show that $\widetilde{f_U}$ respects the left $\caT$-action. 
    \begin{align}
        \widetilde{f_U} \paren*{a \triangleright \paren*{\sum_\mu m_\mu \boxtimes t_\mu}} 
        =& \widetilde{f_U} \paren*{\sum_{\mu, j} b_j \boxtimes j \paren*{\braket*{b_j}{\phi_g (a) m_\mu}_M} t_{\mu}} \\
        =& \sum_{\mu, j} f_U (b_j) \boxtimes j \paren*{\braket*{b_j}{\phi_g(a) m_\mu}_M} t_{\mu} \\
        =& \sum_{\mu, j, k} c_k \braket*{c_k}{f_U (b_j)}_{N_U} \boxtimes j \paren*{\braket*{b_j}{\phi_g(a) m_\mu}_M} t_{\mu} \\
        =& \sum_{\mu, j, k} c_k \boxtimes j \paren*{\braket*{f^* (c_k)}{b_j}_M \braket*{b_j}{\phi_g(a) m_\mu}_M} t_{\mu} \\
        =& \sum_{\mu, j, k} c_k \boxtimes j \paren*{\braket*{c_k}{\phi_g(a) f_U(m_\mu)}_N} t_{\mu} \\
        =& a \triangleright \paren*{\sum_\mu f_U (m_\mu) \boxtimes t_{\mu}}.
    \end{align}

    Now, suppose $\caT \cong \caT'$.
    Let $\eta_{M, N}: \widetilde{(M \boxtimes N)_U} \to \widetilde{M_U} \boxtimes \widetilde{N_U}$ be defined by 
    \begin{align}
        \eta_{M, N} \paren*{b_i \boxtimes c_j \boxtimes 1} := (b_i \boxtimes 1) \boxtimes (c_j \boxtimes 1),
    \end{align}
    where $\{b_i\}, \{c_j\}$ are right projective bases of $M, N$ localized in $U$.
    Note that since $\{b_i \boxtimes c_j\}$ is a right projective basis of $M \boxtimes N$ localized in $U$, $(M \boxtimes N)_U = \sum_{i, j} (b_i \boxtimes c_j) Z_{\caA_L} (\caA_{U^{c}})$. 
    It is isometric on the projective basis $\{b_i \boxtimes c_j \boxtimes 1\}$ of $\widetilde{(M \boxtimes N)_U}$ and thus extends to a well-defined injective map by right $\caT$ linearity.
    Surjectivity follows from observing that any $(m \boxtimes t) \boxtimes (n \boxtimes s) \in \widetilde{M_U} \boxtimes \widetilde{N_U}$ is expressed as $(m \boxtimes 1) \boxtimes t \triangleright (n \boxtimes s) = (m \boxtimes 1) \boxtimes \paren*{\sum_\nu n_\nu \boxtimes s_\nu}$ for some $n_\nu \in N_U, s_\nu \in \caT'$.
    We show that this respects the left $\caT$ action. 
    \begin{align}
        a \triangleright \eta_{M, N} \paren*{(b_i \boxtimes c_j) \boxtimes t} 
        =& a \triangleright (b_i \boxtimes 1) \boxtimes (c_j \boxtimes t) \\ 
        =& \sum_{i'} \paren*{b_{i'} \boxtimes \braket*{b_{i'}}{\phi_{g}(a) b_i}_{M}} \boxtimes \paren*{c_j \boxtimes t} \\ 
        =& \sum_{i', j'} \paren*{b_{i'} \boxtimes 1} \boxtimes \paren*{c_{j'} \boxtimes \braket*{c_{j'}}{\braket*{b_{i'}}{\phi_{g}(a) b_i}_{M} c_j}_{N} t} \\ 
        =& \sum_{i', j'} \paren*{b_{i'} \boxtimes 1} \boxtimes \paren*{c_{j'} \boxtimes \braket*{b_{i'} \boxtimes c_{j'}}{\phi_{g}(a) b_i \boxtimes c_j}_{M \boxtimes N} t} \\ 
        =& \eta_{M, N} \paren*{\sum_{i', j'} b_{i'} \boxtimes c_{j'} \boxtimes \braket*{b_{i'} \boxtimes c_{j'}}{\phi_{g}(a) b_i \boxtimes c_j}_{M \boxtimes N} t} \\ 
        =& \eta_{M, N} \paren*{a \triangleright \paren*{b_i \boxtimes c_j \boxtimes t}}
    \end{align}
    $\eta_{M, N}$ does not depend on the choice of projective bases. 
    Let $\{b_k'\}, \{c_l'\}$ be another choice of projective bases localized in $U$.
    Then, 
    \begin{align}
        \eta_{M, N} \paren*{b_k' \boxtimes c_l' \boxtimes 1} 
        =& \eta_{M, N} \paren*{\sum_{i} b_i \braket*{b_i}{b_k'}_{M_U} \boxtimes c_l' \boxtimes 1} \\
        =& \eta_{M, N} \paren*{\sum_{i, j} b_i \boxtimes c_j \braket*{c_j}{\braket*{b_i}{b_k'}_{M_U} c_l'}_{N_U} \boxtimes 1} \\
        =& \eta_{M, N} \paren*{\sum_{i, j} b_i \boxtimes c_j \boxtimes \braket*{c_j}{\braket*{b_i}{b_k'}_{M_U} c_l'}_{N_U}} \\
        =& \sum_{i, j} b_i \boxtimes 1 \boxtimes c_j \boxtimes \braket*{c_j}{\braket*{b_i}{b_k'}_{M_U} c_l'}_{N_U} \\
        =& \sum_{i} b_i \boxtimes 1 \boxtimes \braket*{b_i}{b_k'}_{M_U} c_l' \boxtimes 1 \\
        =& b_k' \boxtimes 1 \boxtimes c_l' \boxtimes 1.
    \end{align}
\end{proof}

\section{Fusion Spin Chain}\label{sec:fusionspinchain}
As mentioned above, the most general compactification lacks properties that we want it to have; sufficiently small local algebras do not inject into the compactified algebras, and $\Cpt_{D, T}^G (\caA, \delta)$ depends on $T$.
In this section, we apply the compactification to fusion spin chain algebras and show that it reduces to the usual Tube algebra. 
The significance of viewing a Tube algebra in this way is \autoref{cor:fusionspinchainalgebra-DTcptalgebra-boundedspread}, \autoref{cor:fusionspinchainalgebra-asymptotic-cptalgebra}, and \autoref{lem:fusionspinchainalgebra-DHR-Tube-boundedspread}. 

\autoref{cor:fusionspinchainalgebra-DTcptalgebra-boundedspread} states that a bounded-spread homomorphism $\alpha$ induces a map between Tube algebras over the boundary of sufficiently large bounded regions. 
A good visualization of this induced map is the following.
\begin{align}
    \sum_{\color{red}{s} \normalcolor \in \Irr (\caC)}
    \tikzmath{
        \draw[thick] (0,0) -- (0,2);
        \draw[thick] (2,0) -- (2,2);
        \filldraw[thick, fill=white] (1,2) ellipse (1 and .15);
        \draw[thick, blue] (.4,-.13) -- (0.4,1.87);
        \draw[thick, blue] (.8,-.15) -- (0.8,1.85);
        \draw[thick, blue] (1.2,-.15) -- (1.2,1.85);
        \draw[thick, blue] (1.6,-.13) -- (1.6,1.87);
        \halfDottedEllipse{(0,0)}{1}{.15}
        \halfDottedEllipse[thick, red]{(0,1)}{1}{.15}
        \roundNboxEllipse[]{(1,1)}{1}{.15}{-120}{-60}{.5}{$\varphi$};
    }
    \;\mapsto\;
    \sum_{\color{red}{s} \normalcolor \in \Irr (\caC)}
    \tikzmath{
        \draw[thick] (0,0) -- (0,2);
        \draw[thick] (2,0) -- (2,2);
        \filldraw[thick, fill=white] (1,2) ellipse (1 and .15);
        \draw[thick, blue] (.4,-.13) -- (0.4,1.87);
        \draw[thick, blue] (.8,-.15) -- (0.8,1.85);
        \draw[thick, blue] (1.2,-.15) -- (1.2,1.85);
        \draw[thick, blue] (1.6,-.13) -- (1.6,1.87);
        \halfDottedEllipse{(0,0)}{1}{.15}
        \halfDottedEllipse[thick, red]{(0,1)}{1}{.15}
        \roundNboxEllipse[]{(1,1)}{1}{.15}{-150}{-30}{.5}{$\alpha(\varphi)$};
    }
\end{align}
Any element in a Tube algebra is a product of such local patches (\autoref{lem:fusionspinchainalgebra-tube-factorization}) so that as long as $+s$-neighborhood of local patches is still sufficiently small compared to $k$, it is well-defined. 
By defining this map from the general compactification applied to fusion spin chains, the independence of local decomposition follows automatically.

\autoref{cor:fusionspinchainalgebra-asymptotic-cptalgebra} says that the infinite sequence of Tube algebras $\prod_k \Tube_\caC (X^{\otimes k}) \Big/ \bigoplus_k \Tube_\caC (X^{\otimes k})$ recovers all information of the infinite volume limit, in contrast to \autoref{thm:asymptotic-cptalgebra-functor}, which achieved only partial recoveries. 
(More precisely, we will see that the compactification functor is faithful when restricted to quasi-local algebras bounded-spread isomorphic to fusion spin chains. 
If we define a bounded-spread homomorphism between two infinite sequences of Tube algebras to be a map induced by a bounded-spread homomorphism between quasi-local algebras, then quasi-local algebras are isomorphic if and only if the asymptotic Tube algebras are isomorphic.)

\autoref{lem:fusionspinchainalgebra-DHR-Tube-boundedspread} is a specialization of pre-DHR bimodule compactification. 
By relating $\DHR(\alpha)$, the braided autoequivalence between DHR categories, and $\Tube^k (\alpha^{-1})^*$, the pullback functor by the induced map between Tube algebras, we develop an obstruction theory in Section \ref{subsec:obstruction}.

\subsection{Characterization of Compactification in Fusion Spin Chain}\label{subsec:fusionspinchaincompactified}
First, we check that a fusion spin chain algebra is indeed (strongly) locally generated and (strongly) locally presented. 
\begin{lemma}[\cite{jones2024dhrbimodulesquasilocalalgebras}]\label{lem:fusionspinchainalgebra-locallygenerated}
    Let $\caA = \caA (\caC, X)$ be a fusion spin chain algebra over $\ZZ$ with $(X, n)$ a strong tensor generating object.
    Then, $\caA$ is locally generated with $D = n+1$ and $r = 0$.
\end{lemma}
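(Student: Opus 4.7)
The plan is to verify the strong form of local generation, namely that the multiplication map $\bigast_{\frF_{n+1}(U)} \caA \to \caA_U$ is surjective for every $U \in \frP(\ZZ)$; since the fibered product in \autoref{eq:quasilocal-staralgebra-locallygenerated-weaksequence} collapses to this map when $r = 0$, that will suffice. First, using finite determination from \autoref{def:quasilocal-staralgebra}, I would reduce to showing that every $\caA_V$ with $V \in \frF(U)$ lies in the image of $\bigast_{\frF_{n+1}(U)} \caA$. Writing $V = J_1 \sqcup \cdots \sqcup J_l$ as its decomposition into maximal sub-intervals separated by gaps strictly greater than $1$, the definition of the fusion spin chain (\autoref{def:fusionspinchainalgebra}) gives $\caA_V \cong \bigotimes_i \caA_{J_i}$, which reduces the problem further to the case of a single finite interval $I = \{a, a+1, \ldots, a + \#I - 1\}$.

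When $\diam(I) \le n+1$ (equivalently $\#I \le n + 2$), the interval $I$ itself belongs to $\frF_{n+1}(U)$ and nothing remains to be shown. Otherwise, I will cover $I$ by the consecutive length-$(n+2)$ sub-intervals $I_k := \{a + k, \ldots, a + k + n + 1\}$ for $k = 0, 1, \ldots, \#I - n - 2$. Each $I_k$ has $\diam(I_k) = n+1$, so $I_k \in \frF_{n+1}(U)$, and consecutive intervals overlap in exactly $\#(I_k \cap I_{k+1}) = n+1 \ge n$ points, which is the precise hypothesis required by \autoref{cor:fusionspinchainalgebra-factorization}.

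The core step will then be an induction using the factorization corollary. Setting $K_k := I_0 \cup \cdots \cup I_k$, which remains an interval at each stage, I note that $K_k \cap I_{k+1} = I_{k+1} \setminus \{a + k + n + 2\}^c$ still has cardinality at least $n+1 \ge n$, so the multiplication map
$\caA_{K_k} \otimes_{\caA_{K_k \cap I_{k+1}}} \caA_{I_{k+1}} \to \caA_{K_{k+1}}$
is an isomorphism by \autoref{cor:fusionspinchainalgebra-factorization}. Iterating this from $K_0 = I_0$ up to $K_{\#I - n - 2} = I$, I conclude that $\caA_I$ is spanned as a $\CC$-vector space by products of elements drawn from $\caA_{I_0}, \ldots, \caA_{I_{\#I - n - 2}}$, and hence lies in the image of $\bigast_{\frF_{n+1}(I)} \caA \to \caA_I$. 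Since the result is essentially a direct corollary of the factorization statement, there is no substantive obstacle; the only care required is the combinatorial bookkeeping of overlap cardinalities in the induction, which is made trivial by choosing the shift between consecutive $I_k$ to be $1$.
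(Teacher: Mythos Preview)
Your proposal is correct and follows essentially the same approach as the paper: reduce via finite determination and the disjoint-interval tensor decomposition to a single finite interval, then cover it by overlapping sub-intervals of diameter at most $n+1$ and iterate \autoref{cor:fusionspinchainalgebra-factorization} to express $\caA_I$ as products of small local algebras. The paper uses sub-intervals of size $n{+}1$ (diameter $n$) with overlaps of size $n$, whereas you use size $n{+}2$ (diameter $n{+}1$) with overlaps of size $n{+}1$; both satisfy the hypothesis $\#(I_j\cap I_{j+1})\ge n$ of the factorization corollary, so this is a cosmetic difference only.
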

\begin{proof}
    Take a finite set $I = I_1 \sqcup \cdots \sqcup I_l \in \frF (\ZZ)$, where each $I_i$ is a finite interval and $d(I_i, I_j) > 1$ for $i \neq j$.
    Then, $\caA_I = \bigotimes_i \caA_{I_i}$ by definition, and therefore it suffices to prove the claim for any finite interval $I \in \frB (\ZZ)$.

    By $\ZZ$-covariance, we may assume $I = \{1, ..., k\}$.
    If $\diam I = k \le n+1$, 
    \begin{align}
        \bigast_{\frF_{n+1} (I)} \caA \twoheadrightarrow \caA_I \ 
    \end{align}
    since $I \in \frF_{n+1} (I)$.

    Suppose $\diam (I) = k \ge n+1$.
    Let 
    \begin{align}
        I_1 = \{1, ..., n+1\},
        I_2 = \{2, ..., n+2\}, 
        ..., 
        I_{k-n} = \{k-n, ..., k\} \in \caF_{n+1} (I).
    \end{align}
    By repeatedly applying \autoref{cor:fusionspinchainalgebra-factorization},
    \begin{align}
        \caA_I \cong \caA_{I_1} \otimes_{\caA_{I_1 \cap I_2}} \caA_{I_2} \otimes_{\caA_{I_2 \cap I_3}} \cdots \otimes_{\caA_{I_{k-n-1} \cap I_{k-n}}} \caA_{I_{k-n}}.
    \end{align}
    Thus, any $a \in \caA_I$ can be written as a linear sum of elements of the form $a_1 \cdots a_{k-n}$, where $a_i \in \caA_{I_i}$, implying that $\bigast_i \caA_{I_i} \to \caA_I$ is surjective.
\end{proof}

\begin{lemma}\label{lem:fusionspinchainalgebra-locallypresented}
    Let $\caA = \caA (\caC, X)$ be a fusion spin chain algebra over $\ZZ$ with $(X, n)$ a strong tensor generating object.
    Then, $\caA$ is locally presented with any $D = T = 2n$, $r = t = 0$.
\end{lemma}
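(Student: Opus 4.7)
The plan is to prove strong local presentation (with $r = t = 0$) for the cover $\frF_{2n}(U)$. Surjectivity of $\bigast_{\frF_{2n}(U)} \caA \twoheadrightarrow \caA_U$ follows from \autoref{lem:fusionspinchainalgebra-locallygenerated} since $\frF_{n+1}(U) \subseteq \frF_{2n}(U)$ when $n \ge 1$. The real content is to show that the kernel is contained in the pairwise-kernel ideal $\caI_U := \sum_{V_1, V_2 \in \frF_{2n}(U)} (\ker(\iota_{V_1} * \iota_{V_2}))^e_{\frF_{2n}(U)}$. I would first reduce to the case that $U$ is a finite interval: if $U = \bigsqcup_k U_k$ with $d(U_k, U_l) > 1$, then $\caA_U \cong \bigotimes_k \caA_{U_k}$ by \autoref{def:fusionspinchainalgebra}, and the identifications between the two sides are implemented by commutators, which for distant pairs are themselves pairwise kernels. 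I then induct on $|U|$.

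In the base case $|U| \le 2n + 1$ we have $U \in \frF_{2n}(U)$; for any word $a_1 \cdots a_r$ with $a_\ell \in \caA_{V_\ell}$ the pairwise kernel $\ker(\iota_{V_\ell} * \iota_U)$ contains the element identifying $a_\ell$ in the factor $\caA_{V_\ell}$ with $\iota_{V_\ell, U}(a_\ell)$ in the factor $\caA_U$. After replacing every factor, the word collapses to a single element of the factor $\caA_U$ equal to its image in $\caA_U$, and the claim follows. For the inductive step with $|U| = k > 2n + 1$, I use the sliding cover $U_i := \{i, \ldots, i + 2n\} \in \frF_{2n}(U)$ for $1 \le i \le k - 2n$, for which $|U_i \cap U_{i+1}| = 2n \ge n$. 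By \autoref{cor:fusionspinchainalgebra-factorization}, $\caA_U$ is the iterated relative tensor product over the $\caA_{U_i \cap U_{i+1}}$. Every $V_\alpha \in \frF_{2n}(U)$ sits in the interval $[\min V_\alpha, \min V_\alpha + 2n]$ and hence in some $U_i$, so pairwise kernels $\ker(\iota_{V_\alpha} * \iota_{U_i})$ reduce each word to one whose letters come only from the sliding cover. The adjacent swap relations $ba - \sum_j a'_j b'_j$ lie in $\ker(\iota_{U_i} * \iota_{U_{i+1}})$ by the factorization $\caA_{U_i \cup U_{i+1}} \cong \caA_{U_i} \otimes_{\caA_{U_i \cap U_{i+1}}} \caA_{U_{i+1}}$, and distant non-overlapping pairs commute via pairwise commutator kernels. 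Using these, I normalize each word to increasing-index form; the residual identifications among such normalized words are just the iterated amalgamations over $\caA_{U_i \cap U_{i+1}}$, all pairwise kernels.

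The main obstacle I anticipate is justifying swaps for pairs $U_i, U_j$ with intermediate separation --- too small an overlap to apply \autoref{cor:fusionspinchainalgebra-factorization} directly and yet not distant enough to commute. These must be handled by chaining through intermediate $U_l$ using adjacent swaps, and one needs to verify carefully that the composite rearrangement can be written as a sum of pairwise-kernel elements without leftover residue outside $\caI_U$.
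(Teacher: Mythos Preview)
Your proposal is essentially correct and follows the same strategy as the paper: reduce every word to one whose letters lie in a chosen interval cover, normalize by swapping adjacent letters (using \autoref{cor:fusionspinchainalgebra-factorization} for overlapping pairs and commutators for disjoint pairs), and then identify the residual kernel with the amalgamation relations coming from the iterated relative tensor product. The reduction to finite intervals and the base case are handled the same way.

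The one substantive difference is the choice of cover, and it is exactly what resolves the obstacle you anticipate. Instead of the step-$1$ sliding cover $U_i=\{i,\dots,i+2n\}$, the paper first uses local generation to rewrite every letter as a product of elements supported on intervals of size $n{+}1$, and then passes to a \emph{sparser} cover with step $n$: $J_1=\{1,\dots,2n\}$, $J_2=\{n{+}1,\dots,3n\}$, etc. Any size-$(n{+}1)$ interval lies in some $J_\lambda$, so the reduction to this cover goes through pairwise kernels as in your step. The payoff is that any two members of this sparser cover either overlap in exactly $n$ points (so \autoref{cor:fusionspinchainalgebra-factorization} applies directly) or are disjoint (so the commutator lies in the pairwise kernel). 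There is no ``intermediate separation'' case at all, and the chaining you describe is never needed. Your chaining argument would also work, but the sparser cover makes it unnecessary.

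A minor remark: the induction on $|U|$ in your outline is vestigial---neither your inductive step nor the paper's argument invokes the hypothesis for smaller $U$; the normalization works directly for every finite interval.
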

\begin{proof}
    The surjectivity of the map $\bigast_{\frF_D (I)} \caA \to \caA_I$ follows from \autoref{lem:fusionspinchainalgebra-locallygenerated} and \autoref{lem:quasilocal-staralgebra-increaseparameters} (\ref{it:quasilocal-staralgebra-locallygenerated-increaseparameters}).

    Take a finite set $I \in \frF (\ZZ)$. 
    We will prove that $\ker \left(\bigast_{\frF_D (I)} \caA \to \caA_I\right) \Big/ \left(\sum_{\frF_T (I)} \left(\ker (\iota * \iota) \right)^e_{\frF_D (I)} \right) = 0.$
    
    Take any $\sum_i w_i \in \ker \left(\bigast_{\frF_D (I)} \caA \to \caA_I\right)$ and let $w_i = a_i^{V_{i, 1}} \ast \cdots \ast a_i^{V_{i, l_i}}$ where $a_i^{V_{i, j}} \in \caA_{V_{i, j}}$ for some $V_{i, j} \in \frF_D (I)$.
    \begin{enumerate}
        \item 
        Then, by \autoref{lem:fusionspinchainalgebra-locallygenerated}, there exists $\sum_\rho b_{1, \rho} \ast \cdots \ast b_{k_\rho, \rho} \in \bigast_{\frB_{n+1} (V_{i, j})} \caA \mapsto a_i^{V_{i,j}} \in \caA_{V_{i, j}}$.
        Up to relations 
        \begin{align}
            \sum_{J \in \frB_{n+1} (V_{i,j})} \left(\ker \caA_{V_{i,j}} * \caA_{J} \to \caA_{V_{i, j}}\right)^{e}_{\frF_T (I)} \subseteq \sum_{\frF_T (I)} \left(\ker (\iota * \iota) \right)^e_{\frF_T (I)},
        \end{align}
        we assume without loss of generality that $w_i = a_i^{V_{i, 1}} \ast \cdots \ast a_i^{V_{i, l_i}}$ where each $V_{i, j} \in \frB_{n+1} (I)$ is an interval of size $n+1$.

        \item 
        Let $I = J_1 \sqcup \cdots \sqcup J_k$ where $d(J_\mu, J_\nu) > 1$ and $J_\mu$ is an interval. 
        Choose a cover $\{J_{\mu, \lambda} \in \frB_{2n} (J_\mu)\}_{\lambda = 1,...,q_\mu}$ of $J_\mu$ by subintervals of size at most $2 n$ such that $J_{\mu, \lambda} \cap J_{\mu, \kappa} \neq \emptyset \Rightarrow \# (J_{\mu, \lambda} \cap J_{\mu, \kappa}) \ge n$ and any interval $K \in \caB_{n+1} (J_{\mu})$ of size at most $n+1$ is contained in at least one of $\{J_{\mu, \lambda}\}$.
        \footnote{
            For example, if $J_\mu = \{1, ..., l_\mu\}$ and $l_\mu > 2 n$, one can take 
            \begin{gather*}
                J_{\mu, 1} = \{1, ..., 2 n \}, 
                J_{\mu, 2} = \{n + 1, ..., 3 n \},
                ..., \\
                J_{\mu, q_\mu - 1} = \{n (q_\mu - 2) + 1, ..., n q_\mu\},
                J_{\mu, q_\mu} = \{n (q_\mu - 1) + 1, ..., l_\mu\} \in \frF_{2n}(J_{\mu}),
            \end{gather*}
            where $n q_\mu < l_\mu \le n (q_\mu + 1)$. 

        }
        Any factor $a_i^{V_{i, j}}$ on $V_{i, j}$ of size $n+1$ lies in at least one of $J_{\mu, \lambda}$. 
        Up to relations
        \begin{align}
            &\sum_{J_{\mu, \lambda}} \sum_{\substack{K \in \frB_{n+1} (J_{\mu,\lambda})}} \left(\ker \caA_{J_{\mu,\lambda}} * \caA_{K} \to \caA_{J_{\mu,\lambda}}\right)^e_{\frF_T (I)} \\
            \subseteq& \sum_{\frF_T (I)} \left(\ker (\iota * \iota) \right)^e_{\frF_T (I)},
        \end{align}
        we may further assume that every factor is in $\{J_{\mu, \lambda}\}_{\mu, \lambda}$. 
        $w_i = a_i^{V_{i, 1}} \ast \cdots \ast a_i^{V_{i, l_i}}$ where each $V_{i, j} \in \{J_{\mu, \lambda}\}_{\mu, \lambda}$.

        \item
        For neighboring factors $a_i^{V_{i, j}} \ast a_i^{V_{i, j+1}}$ in $w_i$, if $\min {V_{i,j}} \ge \min {V_{i,j+1}}$, we apply the following exchange rule:
        \begin{enumerate}
            \item If $V_{i, j}, V_{i, j+1}$ are disjoint, 
            $\ker (\caA_{V_{i, j}} \ast \caA_{V_{i, j+1}} \to \caA_{V_{i, j} \cup V_{i, j+1}}) = [\caA_{V_{i, j}}, \caA_{V_{i, j+1}}]$ is in $\sum_{\frF_T (I)} \left(\ker (\iota * \iota) \right)^e_{\frF_T (I)}$, so we may exchange them.
            \item If $V_{i, j} \cap V_{i, j+1} \neq \emptyset$, by \autoref{cor:fusionspinchainalgebra-factorization}, $\caA_{V_{i, j}} \otimes_{\caA_{V_{i, j} \cap V_{i, j+1}}} \caA_{V_{i, j+1}} \cong \caA_{V_{i, j+1}} \otimes_{\caA_{V_{i, j} \cap V_{i, j+1}}} \caA_{V_{i, j}}$.
            Thus, inside $\caA_{V_{i,j} \cup V_{i,j+1}}$, $a_i^{V_{i, j}} \cdot a_i^{V_{i, j+1}} = \sum_\pi b^{V_{i, j+1}}_\pi \cdot b^{V_{i, j}}_\pi$ for some $b^{V_{i, j+1}}_\pi \in \caA_{V_{i, j+1}}, b^{V_{i, j}}_\pi \in \caA_{V_{i, j}}$.
            Up to relations 
            \begin{align}
                &(\ker (\caA_{V_{i,j}} \ast \caA_{V_{i,j+1}} \to \caA_{I}))^e_{\frF_T (I)} \\
                \subseteq& \sum_{\frF_T (I)} \left(\ker (\iota * \iota) \right)^e_{\frF_T (I)},
            \end{align}
            we replace $a_i^{V_{i, j}} \ast a_i^{V_{i, j+1}}$ with $\sum_\pi b^{V_{i, j+1}}_\pi \ast b^{V_{i, j}}_\pi$.
        \end{enumerate}
        Therefore, we may assume that each $w_i$ is of the form $w_i = a_i^{J_{1,1}} \ast \cdots \ast a_i^{J_{k, q_k}}$ where $a_i^{J_{\mu, \lambda}} \in \caA_{J_{\mu, \lambda}}$.
        In other words, we have $\sum_i a_i^{J_{1,1}} \ast \cdots \ast a_i^{J_{k, q_k}}$ in the submodule of $\bigast_{\frF_D (I)} \caA$ isomorphic to $\left(\caA_{J_{1, 1}} \otimes \cdots \otimes \caA_{J_{1, l_1}}\right) \otimes \cdots \otimes \left(\caA_{J_{k, 1}} \otimes \cdots \otimes \caA_{J_{k, q_k}}\right)$.

        \item 
        By assumption that $\sum_i w_i$ is in the kernel, 
        \begin{align}
            &\sum_i a_i^{J_{1,1}} \otimes \cdots \otimes a_i^{J_{k, q_k}} \in \left(\caA_{J_{1, 1}} \otimes \cdots \otimes \caA_{J_{1, l_1}}\right) \otimes \cdots \otimes \left(\caA_{J_{k, 1}} \otimes \cdots \otimes \caA_{J_{k, q_k}}\right)\\ 
            \mapsto& \sum_i a_i^{J_{1,1}} \cdots a_i^{J_{k, q_k}} = 0 \in \caA_I \\
            \cong& \left(\caA_{J_{1, 1}} \otimesunder{\caA_{J_{1, 1} \cap J_{1, 2}}} \cdots \otimesunder{\caA_{J_{1, l_1 - 1} \cap J_{1, l_1}}} \caA_{J_{1, l_1}}\right) \otimes \cdots \otimes \left(\caA_{J_{k, 1}} \otimesunder{\caA_{J_{k, 1} \cap J_{k, 2}}} \cdots \otimesunder{\caA_{J_{k, q_k - 1} \cap J_{k, q_k}}} \caA_{J_{k, q_k}}\right).
        \end{align}
        Thus, 
        \begin{align}
            &\sum_i a_i^{J_{1,1}} \otimes \cdots \otimes a_i^{J_{k, q_k}} \\
            \in& \spanv_\CC \setbuilder{\cdots \otimes (a x \otimes a' - a \otimes x a') \otimes \cdots}{a \in \caA_{J_{\mu, \lambda}}, b \in \caA_{J_{\mu, \lambda + 1}}, x \in \caA_{J_{\mu, \lambda} \cap J_{\mu, \lambda + 1}}
            }
        \end{align}
        As a submodule of $\bigast_{\frF_D (I)} \caA$, this is contained in 
        \begin{align}
            &\sum_{\mu=1}^k \sum_{\lambda = 1}^{q_\mu-1} \left(\ker (\caA_{J_{\mu, \lambda}} \ast \caA_{J_{\mu, \lambda + 1}} \to \caA_I) \right)^e_{\frF_T (I)} \\
            \subseteq& \sum_{\frF_T (I)} \left(\ker (\iota * \iota) \right)^e_{\frF_T (I)}.
        \end{align}
    \end{enumerate}
\end{proof}

We obtain a $(D, T)$-compactified algebra as follows. 
Observe that under conditions on $k$, the parameter dependence is resolved up to isomorphism, and this is the first of two important simplifications in the fusion case. 

\begin{theorem}\label{thm:fusionspinchainalgebra-DTcptalgebra}
    Let $\caA = \caA (\caC, X)$ be a fusion spin chain algebra over $\ZZ$ with $(X, n)$ a strong tensor generating object.
    Let $D, T \ge 3 n$ be parameters of a local presentation with $r = 0, t = 0$.
    Then, when $k > 2T + \delta + 2$, we have a canonical isomorphism
    \begin{align}
        \Cpt_{D, T}^{k \ZZ} (\caA, \delta) \cong \Tube_\caC (X^{\otimes k}).
    \end{align} 
    If $D' \ge D, T' \ge T$, it factors through $\Cpt_{D, T}^{k \ZZ} (\caA, \delta) \to \Cpt_{D', T'}^{k \ZZ} (\caA, \delta)$.
\end{theorem}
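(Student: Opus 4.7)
The plan is to construct mutually inverse *-homomorphisms $\Phi \colon \Cpt_{D,T}^{k\ZZ}(\caA, \delta) \to \Tube_\caC(X^{\otimes k})$ and $\Psi$ in the reverse direction, using \autoref{lem:fusioncategory-end-embed-tube-Xk} for the first and the amalgamated factorization \autoref{lem:fusionspinchainalgebra-tube-factorization} for the second. Throughout, $\delta = 1$ for a fusion spin chain, and the hypothesis $k > 2T + \delta + 2$ guarantees $2T + \delta \le k - 1$, so that every region appearing in a local or distant relation still fits inside a single $k$-interval and so that $k\ZZ$ acts freely at the relevant scale.

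For $\Phi$, I first decompose any $U \in \frF_D(\ZZ)$ into its connected components (all of diameter at most $D \le k-1$) and use \autoref{lem:fusioncategory-end-embed-tube-Xk} on each, obtaining $\caA_U \hookrightarrow \Tube_\caC(X^{\otimes k})$. By the universal property of the free product these assemble into a *-homomorphism $\bigast_{\frF_D(\ZZ)}\caA \to \Tube_\caC(X^{\otimes k})$. The bulk of the work is then to show that the three relation ideals vanish: for $\caI^{\loc}_{\delta,T}$, given $V_1, V_2 \in \frF_T$ with $d(V_1,V_2) \le \delta$ the union $V_1 \cup V_2$ has diameter $\le 2T + \delta \le k - 1$, hence embeds into the tube and both $\caA_{V_i}$-embeddings factor through $\caA_{V_1 \cup V_2}$, killing $\ker(\iota_1 * \iota_2)$; for $\caI^{\Gdist{k\ZZ}}_{\delta,T}$, the condition $d(k\ZZ V_1, k\ZZ V_2) > \delta$ places the images on disjoint arcs of the cycle $\ZZ/k\ZZ$, where the tube-diagrammatic calculus of \autoref{def:fusioncategory-induction} produces commuting strands; for $\caI^{\Gorb{k\ZZ}}_T$, the embedding in \autoref{lem:fusioncategory-end-embed-tube-Xk} depends only on $[U] \in \ZZ/k\ZZ$, so $(\id - \phi_k)(\caA_U) \mapsto 0$.

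For $\Psi$, I would fix a good cover $I_1,\dots,I_l$ of $\ZZ/k\ZZ$ by intervals of size roughly $2n$ with $\#(I_i \cap I_{i+1 \bmod l}) \ge n$, lifted to $\ZZ$ so that $I_l$ wraps around (possible because $D \ge 3n$ and $T \ge 3n$ accommodate such blocks as well as their pairwise unions). \autoref{lem:fusionspinchainalgebra-tube-factorization} presents $\Tube_\caC(X^{\otimes k})$ as the iterated amalgamated tensor product
\[
    \left(\caA_{I_1} \otimesunder{\caA_{I_1 \cap I_2}} \cdots \otimesunder{\caA_{I_{l-2}\cap I_{l-1}}} \caA_{I_{l-1}}\right) \otimesunder{\caA_{I_{l-1}\cap I_l} \otimes \caA_{I_1 \cap I_0}^{\op}} \caA_{I_l},
\]
and I send each $\caA_{I_i}$ into $\Cpt$ via $j_{I_i}$. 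For the bulk amalgamations ($i < l$), the balancing relation $a x \otimes b - a \otimes x b$ with $x \in \caA_{I_i \cap I_{i+1}}$ lies in $\ker(\caA_{I_i} \ast \caA_{I_{i+1}} \to \caA_{I_i \cup I_{i+1}})$ and $I_i \cup I_{i+1} \in \frF_T$, so it is killed by $\caI^{\loc}_{\delta,T}$. The twisted wrap-around amalgamation with $\phi_{+k}$ is precisely what the orbital relation $(\id-\phi_k)(\caA_{I_1 \cap I_0})$ implements, up to a standard local relation. Since both $\Psi \circ \Phi$ and $\Phi \circ \Psi$ are the identity on the generating subalgebras $\caA_{I_i}$ by construction, they extend to mutual inverses; the final factorization compatibility statement for $D' \ge D,\ T' \ge T$ follows from the naturality in \autoref{lem:DTcptalgebra-increase-parameters}.

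The main obstacle is handling the wrap-around consistently: the $\phi_{+k}$-twisted amalgamation in \autoref{lem:fusionspinchainalgebra-tube-factorization} must be matched precisely by the orbital ideal of $\Cpt$, and one has to verify that the accumulated local and orbital reductions used to implement this twist do not exceed the permitted scale $T$ or create collisions between distinct $k\ZZ$-orbits. This is exactly why the hypothesis $k > 2T + \delta + 2$ is forced: it ensures $k\ZZ$ acts freely at the scale needed for \autoref{lem:DTcptalgebra-increase-d} and for the distant-vs-local dichotomy used throughout the argument.
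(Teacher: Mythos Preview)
Your construction of $\Phi$ and the verification that the three relation ideals lie in its kernel are correct and match the paper. The gap is in the construction of $\Psi$ and the claim that the two maps are mutual inverses.

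The amalgamated tensor product in \autoref{lem:fusionspinchainalgebra-tube-factorization} is a \emph{vector-space} isomorphism, not an algebra presentation: multiplication in $\Tube_\caC(X^{\otimes k})$ is tube-stacking, which does not act factorwise on elementary tensors $a_1\otimes\cdots\otimes a_l$. Checking the balancing relations therefore only shows that $\Psi(a_1\otimes\cdots\otimes a_l)=j_{I_1}(a_1)\cdots j_{I_l}(a_l)$ is a well-defined \emph{linear} map. From this you legitimately obtain $\Phi\circ\Psi=\id_{\Tube}$ (since $\Phi$ is multiplicative and the product of the embedded patches recovers the original tube element), hence $\Phi$ is surjective. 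But your inference ``$\Psi\circ\Phi=\id$ on the generating subalgebras $\caA_{I_i}$, hence everywhere'' requires $\Psi\circ\Phi$ to be an algebra homomorphism, i.e.\ $\Psi$ to be one, which you have not shown. Equivalently, you have not established that every element of $\Cpt_{D,T}^{k\ZZ}(\caA,\delta)$ can be brought, using only the three relation ideals, into the ordered normal form $j_{I_1}(a_1)\cdots j_{I_l}(a_l)$; that is precisely the injectivity of $\Phi$, and a dimension count is unavailable since $\Cpt$ is a quotient of an infinite free product.

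The paper handles this by a direct normal-form reduction on words in $\bigast_{\frF_S(\ZZ)}\caA$: (1) use local generation to replace each letter by a product of letters on intervals of size $n+1$; (2) translate every letter into a fixed window $\{1,\dots,k+n\}$ via orbital relations; (3) absorb into a fixed good cover $\{J_\lambda\}$ using local relations; (4) sort the letters into the order $J_1,\dots,J_q$ by case analysis, using \autoref{cor:fusionspinchainalgebra-factorization} when supports overlap mod $k$ and the $G$-distant commutator relation when they do not; (5) once ordered, invoke \autoref{lem:fusionspinchainalgebra-tube-factorization} to identify the residual kernel with the amalgamation relations, which are visibly in $\caI^{\loc}+\caI^{\Gorb{k\ZZ}}$. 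Step (4) is the substantive combinatorics your proposal bypasses; the ingredients you name are exactly the right ones, but the sorting argument is what actually proves injectivity.
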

\begin{proof}
    We will first show that if $S \ge 3n$, $k > 2S + 1$, $\Cpt_{S, S}^{k \ZZ} (\caA, 1) \cong \Tube_\caC (X^{\otimes k})$. 
    Once we prove it, assuming $k > 2 T + \delta + 2$, $\Tube_\caC (X^{\otimes k}) \cong \Cpt_{T, T}^{k \ZZ} (\caA, 1) \cong \Cpt_{T, T}^{k \ZZ} (\caA, \delta) \cong \Cpt_{D, T}^{k \ZZ} (\caA, \delta)$ by \autoref{lem:DTcptalgebra-increase-d} and \autoref{lem:DTcptalgebra-increase-parameters}.

    Let $[\cdot]: \ZZ \to \ZZ / k \ZZ$ be the quotient map.
    Then, by \autoref{lem:fusioncategory-end-embed-tube-Xk}, there is a canonical up to isomorphism embedding $\caA_V \hookrightarrow \Tube_\caC (X^{\otimes})$ for each $V \in \frF_S (\ZZ)$.
    We obtain a *-homomorphism $\bigast_{\frF_S (\ZZ)} \caA \to \Tube_\caC (X^{\otimes k})$.
    By \autoref{lem:fusionspinchainalgebra-tube-factorization}, this is surjective.

    We will show that 
    \begin{align}
        \ker \left(\bigast_{\frF_S (\ZZ)} \caA \to \Tube_\caC (X^{\otimes k})\right) = \left(\caI^{\loc}_{1, S} (\ZZ) + \caI^{\Gdist{k \ZZ}}_{1, S} (\ZZ) + \caI^{\Gorb{k \ZZ}}_S (\ZZ)\right)^c_{\frF_S (\ZZ)}.
    \end{align}
    Observe that $\caI^{\Gorb{k \ZZ}}_S (\ZZ), \caI^{\Gdist{k \ZZ}}_{1, S} (\ZZ)  \subseteq \ker \left(\bigast_{\frF_S (\ZZ)} \caA \to \Tube_\caC (X^{\otimes k})\right)$ by construction.
    Take any $U, V \in \frF_S (\ZZ)$ with $d(U, V) \le 1$.
    Then, $W = U \cup V \in \frF_{2 S + 1} (\ZZ)$, so that there is a canonical embedding $\caA_W \hookrightarrow \Tube_\caC (X^{\otimes k})$ by \autoref{lem:fusioncategory-end-embed-tube-Xk}, which factors $\caA_U \hookrightarrow \Tube_\caC (X^{\otimes k})$ and $\caA_V \hookrightarrow \Tube_\caC (X^{\otimes k})$.
    Thus, any $\sum_i w_i \in \ker (\caA_U \ast \caA_V \to \caA_{\ZZ}) \subseteq \caI^{\loc}_{1, S} (\ZZ)$ is mapped to $0 \in \Tube_\caC (X^{\otimes k})$. 

    Now, we show 
    \begin{align}
        \ker \left(\bigast_{\frF_S (\ZZ)} \caA \to \Tube_\caC (X^{\otimes k})\right) \Big/ \left(\caI^{\loc}_{1, S} (\ZZ) + \caI^{\Gdist{k \ZZ}}_{1, S} (\ZZ) + \caI^{\Gorb{k \ZZ}}_S (\ZZ)\right)^c_{\frF_S (\ZZ)} = 0.
    \end{align}
    \begin{enumerate}
        \item
        Take any $\sum_i w_i \in \ker \left(\bigast_{\frF_S (\ZZ)} \caA \to \Tube_\caC (X^{\otimes k})\right)$ and let $w_i = a_i^{V_{i, 1}} \ast \cdots \ast a_i^{V_{i, l_i}}$ where $a_i^{V_{i, j}} \in \caA_{V_{i, j}}$ for some $V_{i, j} \in \frF_S (\ZZ)$.
        Then, by \autoref{lem:fusionspinchainalgebra-locallygenerated}, there exists $\sum_\rho b_{1, \rho} \ast \cdots \ast b_{k_\rho, \rho} \in \bigast_{\frB_{n+1} (V_{i, j})} \caA \mapsto a_i^{V_{i,j}} \in \caA_{V_{i, j}}$.
        Up to relations 
        \begin{align}
            \sum_{J \in \frB_{n+1} (V_{i,j})} \left(\ker \caA_{V_{i,j}} * \caA_{J} \to \caA_{V_{i, j}}\right)^{e}_{\frF_S (\ZZ)} \subseteq \caI^{\loc}_{1, S} (\ZZ),
        \end{align}
        we assume without loss of generality that $w_i = a_i^{V_{i, 1}} \ast \cdots \ast a_i^{V_{i, l_i}}$ where each $V_{i, j} \in \frB_{n+1} (\ZZ)$ is an interval of size $n+1$.

        \item 
        Any $V_{i,j} \in \frB_{n+1} (\ZZ)$ can be translated by some $n k \in k \ZZ$ so that $(V_{i, j} + n k) \subseteq J := \{1, ..., k, ..., k+n\}$. 
        Up to relations $\caI^{\Gorb{k \ZZ}}_{n+1} (\ZZ) \subseteq \caI^{\Gorb{k \ZZ}}_S (\ZZ)$, we replace $a_i^{V_{i, j}}$ with $\phi_{+n k} (a_i^{V_{i, j}})$.
        Thus, we may assume that $V_{i,j} \in \setbuilder{\{i+1,...,i+n+1\}}{0 \le i \le k-1}$.

        \item 
        Choose a cover $\{J_{\lambda} \in \frB_{3n} (J)\}_{\lambda = 1,...,q}$ of $J$ by subintervals of size at most $3 n$ such that $[J_{\lambda}] \subseteq \ZZ / k \ZZ$ is contractible, $[J_{\lambda}] \cap [J_{\kappa}] \neq \emptyset \Rightarrow$ $([J_{\lambda}] \cap [J_{\kappa}])$ is contractible and $\# ([J_{\lambda}] \cap [J_{\kappa}]) \ge n$, and any contractible $[K] \subseteq \ZZ / k \ZZ$ of size at most $n+1$ is contained in at least one of $[J_\lambda]$.
        \footnote{
            For example, since $k \ge 3 n$,
            \begin{gather*}
                J_{1} = \{1,...,2n\}, 
                J_{2} = \{n+1,...,3n\}, 
                ..., \\
                J_{q - 1} = \{(q-2)n+1,...,qn\}, 
                J_{q} = \{n(q-1)+1,...,k,k+1,...,k+n\} \in \frF_{3n} (J)
            \end{gather*}
            where $n q \le k < n (q+1)$.
            Note that we assume $k > 2 S \ge 6 n$, so this cover satisfies the conditions.
        }
        Any factor $a_i^{V_{i, j}}$ on $V_{i,j} \in \setbuilder{\{i+1,...,i+n+1\}}{0 \le i \le k-1}$ lies in at least one of $J_{\lambda}$. 
        Up to relations
        \begin{align}
            &\sum_{J_{\lambda}} \sum_{\substack{K \in \frB_{n+1} (J_{\lambda})}} \left(\ker \caA_{J_{\lambda}} * \caA_{K} \to \caA_{J_{\lambda}}\right)^e_{\frF_S (\ZZ)} 
            \subseteq \caI^{\loc}_{1, S} (\ZZ),
        \end{align}
        we may further assume that every factor is in $\{J_{\lambda}\}_{\lambda}$. 
        $w_i = a_i^{V_{i, 1}} \ast \cdots \ast a_i^{V_{i, l_i}}$ where each $V_{i, j} \in \{J_{\lambda}\}_{\lambda}$.
        
        \item 
        For neighboring factors $a_i^{V_{i, j}} \ast a_i^{V_{i, j+1}}$ in $w_i$, if $\min {V_{i,j}} \ge \min {V_{i,j+1}}$, we apply the following exchange rule:
        \begin{enumerate}
            \item If $V_{i, j}, V_{i, j+1}$ are $k \ZZ$-disjoint, $\ker (\caA_{V_{i, j}} \ast \caA_{V_{i, j+1}} \to \caA_{V_{i, j} \cup V_{i, j+1}}) = [\caA_{V_{i, j}}, \caA_{V_{i, j+1}}]$ is in $\caI^{\Gdist{k\ZZ}}$, so we may exchange them.

            \item If $[V_{i, j}] \cap [V_{i, j+1}] \neq \emptyset$, let $+ m k \in k \ZZ$ such that $(V_{i,j} + m k) \cap V_{i,j+1} \neq \emptyset$. 
            By \autoref{cor:fusionspinchainalgebra-factorization}, $\caA_{(V_{i, j} + m k)} \otimes_{\caA_{(V_{i, j} + m k) \cap V_{i, j+1}}} \caA_{V_{i, j+1}} \cong \caA_{V_{i, j+1}} \otimes_{\caA_{(V_{i, j} + m k) \cap V_{i, j+1}}} \caA_{(V_{i, j} + m k)}$.
            Thus, inside $\caA_{(V_{i,j} + m k) \cup V_{i,j+1}}$, $\phi_{+mk} (a_i^{V_{i, j}}) \cdot a_i^{V_{i, j+1}} = \sum_\pi b^{V_{i, j+1}}_\pi \cdot \phi_{+mk}(b^{V_{i, j}}_\pi)$ for some $b^{V_{i, j+1}}_\pi \in \caA_{V_{i, j+1}}, \phi_{+mk}(b^{V_{i, j}}_\pi) \in \caA_{(V_{i, j}+mk)}$.
            Up to relations 
            \begin{align}
                &(\ker (\caA_{(V_{i,j}+mk)} \ast \caA_{V_{i,j+1}} \to \caA_{L}))^e_{\frF_S (\ZZ)} + \caI^{\Gorb{k\ZZ}}_{S} (\ZZ) \subseteq \caI^{\loc}_{1, S} (\ZZ) + \caI^{\Gorb{k\ZZ}}_{S} (\ZZ)
            \end{align}
            we replace $a_i^{V_{i, j}} \ast a_i^{V_{i, j+1}}$ with $\sum_\pi b^{V_{i, j+1}}_\pi \ast b^{V_{i, j}}_\pi$.
        \end{enumerate}
        Therefore, we may assume that each $w_i$ is of the form $w_i = a_i^{J_{1}} \ast \cdots \ast a_i^{J_{q}}$ where $a_i^{J_{\lambda}} \in \caA_{J_{\lambda}}$.
        In other words, we have $\sum_i a_i^{J_{1}} \ast \cdots \ast a_i^{J_{q}}$ in the submodule of $\bigast_{\frF_S (\ZZ)} \caA$ isomorphic to $\caA_{J_1} \otimes \cdots \otimes \caA_{J_q}$.

        \item 
        By assumption that $\sum_i w_i$ is in the kernel, 
        \begin{align}
            &\sum_i a_i^{J_1} \otimes \cdots \otimes a_i^{J_q} \in \caA_{J_{1}} \otimes \cdots \otimes \caA_{J_{q}}\\ 
            \mapsto& \sum_i a_i^{J_1} \cdots a_i^{J_q} = 0 \in \Tube_{\caC} (X^{\otimes k}) \\
            \cong& \left(\caA_{J_1} \otimesunder{\caA_{J_1 \cap J_2}} \cdots \otimesunder{\caA_{J_{q-2} \cap J_{q-1}}} \caA_{J_{q-1}}\right) \otimesunder{\caA_{J_{q-1} \cap J_q} \otimes \caA_{J_1 \cap (J_q - k)}} \caA_{J_q},
        \end{align}
        where we use \autoref{lem:fusionspinchainalgebra-tube-factorization}.
        Thus, 
        \begin{align}
            &\sum_i a_i^{J_{1}} \otimes \cdots \otimes a_i^{J_{q}} \\
            \in& \spanv_\CC \setbuilder{\cdots \otimes (a x \otimes a' - a \otimes x a') \otimes \cdots}{a \in \caA_{J_{\lambda}}, b \in \caA_{J_{\lambda+1}}, x \in \caA_{J_{\lambda} \cap J_{\lambda + 1}}} \\ 
            +& \spanv_\CC \setbuilder{x a \otimes \cdots \otimes a' - a \otimes \cdots \otimes a' \phi_{+k}(x)}{a \in \caA_{J_1}, a' \in \caA_{J_q}, x \in \caA_{J_1 \cap (J_q - k)}} 
        \end{align}
        As a submodule of $\bigast_{\frF_D (I)} \caA$, this is contained in 
        \begin{align}
            &\sum_{\lambda = 1}^{q-1} \left(\ker (\caA_{J_{\lambda}} \ast \caA_{J_{\lambda + 1}} \to \caA_\ZZ) \right)^e_{\frF_S (\ZZ)} + \left(\ker (\caA_{J_{1}} \ast \caA_{(J_{q} - k)} \to \caA_\ZZ) \right)^e_{\frF_S (\ZZ)} + \caI^{\Gorb{k \ZZ}}_{S} (\ZZ) \\
            \subseteq& \caI^{\loc}_{0, S} (\ZZ) + \caI^{\Gorb{k \ZZ}}_{S} (\ZZ).
        \end{align}
    \end{enumerate}
\end{proof}

An important result that takes advantage of this stabilization is the following.
Although $\Tube^k$ is still not functorial, we obtain an isomorphism at the level of fixed group action $k \ZZ$ under assumptions, unlike in \autoref{lem:DTcptalgebra-boundedspread}.

\begin{corollary}\label{cor:fusionspinchainalgebra-DTcptalgebra-boundedspread}
    Let $\caA = \caA (\caC, X), \caB = \caA (\caD, Y)$ be fusion spin chain algebras over $\ZZ$ with $(X, n), (Y, m)$ strong tensor generating objects, respectively.
    Let $\alpha: \caA \to \caB$ be a $\ZZ$-equivariant bounded-spread homomorphism with spread $s$.
    For any $k > \max \{6 n + 4 s, 6 m\} + 3$, we have
    \begin{align}
        \Tube^k (\alpha): \Tube_\caC (X^{\otimes k}) \to \Tube_\caD (Y^{\otimes k})
    \end{align}
    If $\alpha$ is a $\ZZ$-equivariant bounded-spread isomorphism and $k > \max \{6 n + 12 s, 6 m + 8 s\} + 3$, $\Tube^k (\alpha)$ is an isomorphism.
\end{corollary}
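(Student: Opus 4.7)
The plan is to realize $\Tube^k(\alpha)$ as a particular instance of the map $\Cpt^{k\ZZ}(\alpha)$ from \autoref{lem:DTcptalgebra-boundedspread}, and then to pass to the Tube algebras via the identification of \autoref{thm:fusionspinchainalgebra-DTcptalgebra}. By \autoref{lem:fusionspinchainalgebra-locallypresented} combined with \autoref{lem:quasilocal-staralgebra-increaseparameters}, $\caA$ is locally presented at any parameters $(D, T) \ge (2n, 2n)$ and $\caB$ at any $(D, T) \ge (2m, 2m)$, both with $r = t = 0$. I would choose $(D_A, T_A, \delta_A) = (3n, 3n, 1)$ and $(D_B, T_B, \delta_B) = (\max\{3m, 3n + 2s\}, \max\{3m, 3n + 2s\}, 1)$, which trivially satisfy $D_A + 2s \le D_B$, $T_A + 2s \le T_B$, and $\delta_A \le \delta_B + 2s$. \autoref{lem:DTcptalgebra-boundedspread} then produces the canonical $*$-homomorphism
\[
    \Cpt^{k\ZZ}(\alpha): \Cpt^{k\ZZ}_{D_A, T_A}(\caA, 1) \longrightarrow \Cpt^{k\ZZ}_{D_B, T_B}(\caB, 1).
\]

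Next I would invoke \autoref{thm:fusionspinchainalgebra-DTcptalgebra}, whose hypotheses $D, T \ge 3n$ (resp.\ $3m$) and $k > 2T + \delta + 2$ reduce to $k > 6n + 3$ on the $\caA$ side and $k > 2\max\{3m, 3n+2s\} + 3 = \max\{6m, 6n + 4s\} + 3$ on the $\caB$ side; both follow from the standing hypothesis $k > \max\{6n + 4s, 6m\} + 3$. The canonical isomorphisms supplied by that theorem conjugate $\Cpt^{k\ZZ}(\alpha)$ into the desired $\Tube^k(\alpha): \Tube_\caC(X^{\otimes k}) \to \Tube_\caD(Y^{\otimes k})$.

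For the isomorphism statement, I would apply the same recipe to $\alpha^{-1}$ (also of spread $s$) to produce $\Tube^k(\alpha^{-1})$, and verify it is a two-sided inverse. The proof of \autoref{lem:DTcptalgebra-boundedspread} defines $\Cpt^{k\ZZ}(\alpha)$ as the map induced on the free product by $\bigast \alpha$ modulo the appropriate ideal, so $\Cpt^{k\ZZ}(\beta) \circ \Cpt^{k\ZZ}(\alpha) = \Cpt^{k\ZZ}(\beta \circ \alpha)$ whenever both sides make sense. Applied with $\beta = \alpha^{-1}$, this gives $\Cpt^{k\ZZ}(\alpha^{-1}) \circ \Cpt^{k\ZZ}(\alpha) = \Cpt^{k\ZZ}(\id_\caA)$, which by the ``in particular'' clause is the canonical parameter-increase map of \autoref{lem:DTcptalgebra-increase-parameters}. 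Under the identifications of \autoref{thm:fusionspinchainalgebra-DTcptalgebra}, parameter-increase maps between compactifications of $\caA$ correspond to $\id_{\Tube_\caC(X^{\otimes k})}$, and symmetrically on $\caB$. The main obstacle is the parameter bookkeeping: each application of $\alpha$ or $\alpha^{-1}$ inflates $T$ by $2s$, and \autoref{thm:fusionspinchainalgebra-DTcptalgebra} must remain valid at every intermediate compactification used in the argument. Tracing the worst-case chain that arises when reconciling the two composition identities with the canonical Tube identifications --- effectively a four-step chain $\caA \to \caB \to \caA \to \caB$ reaching $T_B'' = \max\{3m + 4s, 3n + 6s\}$ --- the hypothesis $k > 2 T_B'' + \delta + 2$ of the theorem requires $k > \max\{6m + 8s, 6n + 12s\} + 3$, which is precisely the stated bound.
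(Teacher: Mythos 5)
Your proposal is correct and follows essentially the same route as the paper: the same parameter choices $(D_A, T_A, \delta_A) = (3n, 3n, 1)$ and $(D_B, T_B, \delta_B) = (\max\{3m, 3n{+}2s\}, \max\{3m, 3n{+}2s\}, 1)$, the same invocation of \autoref{lem:DTcptalgebra-boundedspread} to build $\Cpt^{k\ZZ}(\alpha)$, the same conjugation by the canonical isomorphisms of \autoref{thm:fusionspinchainalgebra-DTcptalgebra}, and the same four-step zig-zag $\caA \to \caB \to \caA \to \caB$ ending at parameter $\max\{3m{+}4s, 3n{+}6s\}$ to obtain the isomorphism bound. The only cosmetic difference is that you state the functoriality $\Cpt^{k\ZZ}(\beta)\circ\Cpt^{k\ZZ}(\alpha)=\Cpt^{k\ZZ}(\beta\circ\alpha)$ explicitly, whereas the paper encodes it in a commutative diagram; the arithmetic bookkeeping matches the paper's footnoted parameter choices exactly.
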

\begin{proof}
    Assuming $k > \max \{6 n + 4 s, 6 m\} + 3$, there exists $D_A, T_A, \delta_A, D_B, T_B, \delta_B$ such that $D_A, T_A \ge 3n, D_B, T_B \ge 3m$; $k > 2 T_A + \delta_A + 2, k > 2 T_B + \delta_B + 2$; and $D_B \ge D_A + 2 s, T_B \ge T_B + 2 s$. 
    \footnote{
        For example, take $\delta_A = \delta_B = 1$, $D_A = T_A = 3 n$, $D_B = T_B = \max\{3 n + 2 s, 3 m\}$.
    }
    By \autoref{thm:fusionspinchainalgebra-DTcptalgebra} and \autoref{lem:DTcptalgebra-boundedspread}, we have 
    \begin{equation}\begin{tikzcd}
        \Tube_\caC (X^{\otimes k}) & \Cpt_{D_A, T_A}^{k \ZZ} (\caA, \delta_A) \arrow[l, "\cong"] \arrow[r, "\Cpt^{k\ZZ} (\alpha)"] & \Cpt_{D_B, T_B}^{k \ZZ} (\caB, \delta_B) \arrow[r, "\cong"] & \Tube_\caD (Y^{\otimes k}) 
    \end{tikzcd}\end{equation}
    which is independent of the choice of parameters.

    If $\alpha$ is a $\ZZ$-equivariant bounded-spread isomorphism and $k > \max \{6 n + 12 s, 6 m + 8 s\} + 3$, there exists $D_A$, $T_A$, $D'_A$, $T'_A$, $\delta_A$, $D_B$, $T_B$, $D'_B$, $T'_B$, $\delta_B$ such that $D_A, T_A, D'_A, T'_A \ge 3 n$, $D_B, T_B, D'_B, T'_B \ge 3 m$; $k > 2 T_A + \delta_A + 2, k > 2 T'_A + \delta_A + 2$, $k > 2 T_B + \delta_B + 2, k > 2 T'_B + \delta_B + 2$; and $D_B \ge D_A + 2 s, T_B \ge T_B + 2 s, D'_A \ge D_B + 2s, T'_A \ge T_B + 2 s, D'_B \ge D'_A + 2s, T'_B \ge T'_A + 2 s$.
    \footnote{ 
        For example, take $\delta_A = \delta_B = 1$, $D_A = T_A = 3 n$, $D_B = T_B = \max\{3 n + 2 s, 3 m\}$, $D_A' = T_A' = \max\{3 n + 4 s, 3 m + 2 s\}$, $D_B' = T_B' = \max\{3 n + 6 s, 3 m + 4 s\}$.
    }
    By \autoref{thm:fusionspinchainalgebra-DTcptalgebra} and \autoref{lem:DTcptalgebra-boundedspread}, we have
    \begin{equation}\begin{tikzcd}[row sep=large]
        \Tube_\caC (X^{\otimes k}) & \Cpt_{D_A, T_A}^{k \ZZ} (\caA, \delta_A) \arrow[l, "\cong"] \arrow[r, "\Cpt^{k\ZZ} (\alpha)"] & \Cpt_{D_B, T_B}^{k \ZZ} (\caB, \delta_B) \arrow[ld, "\Cpt^{k \ZZ}(\alpha^{-1})"] \arrow[r, "\cong"] & \Tube_\caD (Y^{\otimes k}) \\
        & \Cpt_{D'_A, T'_A}^{k \ZZ} (\caA, \delta_A) \arrow[lu, "\cong"] \arrow[r, "\Cpt^{k\ZZ} (\alpha)"] & \Cpt_{D'_B, T'_B}^{k \ZZ} \arrow[ru, "\cong"]
    \end{tikzcd}\end{equation}
\end{proof}

Meanwhile, the asymptotic compactified algebra has a rather simple description. 
An interpretation of this result is that $\Cpt^\NN$ preserves the information of the infinite volume limit.
Indeed, if we call $\prod_k \Tube^k (\alpha)$ a bounded-spread homomorphism between asymptotic Tube algebras, two quasi-local *-algebras are bounded-spread isomorphic if and only if their asymptotic Tube algebras are bounded-spread isomorphic. 
There is no convenient alternative description of $\{\prod_k \Tube^k (\alpha) \mid \alpha: \caA \to \caA \; \text{wtih bounded-spread}\}$.

Mathematically, it is to say $\Cpt^\NN$ is faithful if we restrict to $\QLstAlg_{\ZZ, \ZZ}^{\mathrm{fusion}} \subseteq \QLstAlg_{\ZZ, \ZZ}$, a full subcategory consisting of quasi-local algebras bounded-spread isomorphic to a fusion spin chain algebra.

\begin{corollary}\label{cor:fusionspinchainalgebra-asymptotic-cptalgebra}
    Let $\caA = \caA (\caC, X)$ be a fusion spin chain algebra over $\ZZ$ with $(X, n)$ a strong tensor generating object.
    $\{k \ZZ\}_{k \in \NN}$ is an asymptotically free family of subgroups of $\ZZ \subseteq \Iso(\ZZ, d)$.
    Then, for any $T \ge 3n$,
    \begin{align}
        \Cpt_{T}^\NN (\caA) \cong \Cpt^\NN (\caA) \cong \prod_k \Tube_\caC (X^{\otimes k}) \Big/ \bigoplus_k \Tube_\caC (X^{\otimes k}).
    \end{align}
    There is a natural embedding $\caA \hookrightarrow \prod_k \Tube_\caC (X^{\otimes k}) \Big/ \bigoplus_k \Tube_\caC (X^{\otimes k})$. 
    In particular, $\Cpt^\NN: \QLstAlg(\caA , \caA) \to \stAlg(\Cpt^\NN (\caA), \Cpt^\NN (\caA))$ is injective.
\end{corollary}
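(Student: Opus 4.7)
The plan is to leverage Theorem \ref{thm:fusionspinchainalgebra-DTcptalgebra} together with the cofinite-factor tolerance of the quotient $\prod / \bigoplus$. For the first isomorphism $\Cpt_T^\NN(\caA) \cong \prod_k \Tube_\caC(X^{\otimes k}) / \bigoplus_k \Tube_\caC(X^{\otimes k})$, I fix $T \ge 3n$, take some small $\delta$ and a suitable $D$ witnessing local presentation, and use that for every $k > 2T + \delta + 2$ Theorem \ref{thm:fusionspinchainalgebra-DTcptalgebra} gives a canonical isomorphism $\Cpt_{D,T}^{k\ZZ}(\caA, \delta) \cong \Tube_\caC(X^{\otimes k})$. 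Since only finitely many $k \in \NN$ violate this bound, the componentwise map on products descends to an isomorphism on the quotients by direct sums. For the second isomorphism $\Cpt_T^\NN(\caA) \cong \Cpt^\NN(\caA) = \colim_T \Cpt_T^\NN(\caA)$, I argue each colimit transition $\Cpt_T^\NN(\caA) \to \Cpt_{T'}^\NN(\caA)$ with $T' \ge T \ge 3n$ is an isomorphism: for all but finitely many $k$, both sides identify factorwise with $\Tube_\caC(X^{\otimes k})$, and changing finitely many factors is invisible in the quotient.

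For the embedding $\caA \hookrightarrow \prod_k \Tube_\caC(X^{\otimes k}) / \bigoplus_k \Tube_\caC(X^{\otimes k})$, I invoke Lemma \ref{lem:asymptotic-cptalgebra-embedding} for the existence of a natural $*$-homomorphism $j \colon \caA \to \Cpt^\NN(\caA)$. For injectivity, I take nonzero $a \in \caA$, pick a finite $I \in \frF(\ZZ)$ with $a \in \caA_I$, and observe that under the identifications above the composite $\caA_I \to \Cpt_T^{k\ZZ}(\caA,\delta) \cong \Tube_\caC(X^{\otimes k})$ agrees with the standard embedding of Lemma \ref{lem:fusioncategory-end-embed-tube-Xk}, which is injective as soon as $k > \#I$. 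Hence $a$ has nonzero image in cofinitely many Tube factors, and thus nonzero image in the quotient.

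Finally, injectivity of $\Cpt^\NN \colon \QLstAlg(\caA,\caA) \to \stAlg(\Cpt^\NN(\caA), \Cpt^\NN(\caA))$ follows formally from the functoriality in Theorem \ref{thm:asymptotic-cptalgebra-functor} and the naturality of $j$: if $\Cpt^\NN(\alpha) = \Cpt^\NN(\beta)$, then $j \circ \alpha = \Cpt^\NN(\alpha) \circ j = \Cpt^\NN(\beta) \circ j = j \circ \beta$, and injectivity of $j$ forces $\alpha = \beta$. The main obstacle I anticipate is purely one of bookkeeping: verifying that the canonical inclusions $\caA_I \hookrightarrow \Cpt_{D,T}^{k\ZZ}(\caA, \delta)$ from Definition \ref{def:DTcptalgebra}, the isomorphism with $\Tube_\caC(X^{\otimes k})$ from Theorem \ref{thm:fusionspinchainalgebra-DTcptalgebra}, and the standard embedding of Lemma \ref{lem:fusioncategory-end-embed-tube-Xk} compose to the same map. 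However, this compatibility is essentially already recorded in the graphical construction used to prove Theorem \ref{thm:fusionspinchainalgebra-DTcptalgebra}, so what remains is a matter of unpacking definitions rather than any genuinely new argument.
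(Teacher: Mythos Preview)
Your proposal is correct and is precisely the argument the paper leaves implicit: the corollary is stated without proof, and your write-up unpacks exactly the intended reasoning from Theorem~\ref{thm:fusionspinchainalgebra-DTcptalgebra} (factorwise identification with $\Tube_\caC(X^{\otimes k})$ for cofinitely many $k$, with compatibility under increasing $T$), Lemma~\ref{lem:asymptotic-cptalgebra-embedding} (existence of $j$), Lemma~\ref{lem:fusioncategory-end-embed-tube-Xk} (injectivity of each local factor map), and the naturality square from Corollary~\ref{cor:asymptotic-DTcptalgebra-boundedspread} and Theorem~\ref{thm:asymptotic-cptalgebra-functor} (faithfulness via $j\circ\alpha = \Cpt^\NN(\alpha)\circ j$). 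The bookkeeping point you flag---that $j_U$ composed with the isomorphism of Theorem~\ref{thm:fusionspinchainalgebra-DTcptalgebra} recovers the embedding of Lemma~\ref{lem:fusioncategory-end-embed-tube-Xk}---is indeed built into the construction of that isomorphism, which is defined factorwise on $\bigast_{\frF_S(\ZZ)}\caA$ via those very embeddings.
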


Next, we apply \autoref{lem:DTIcptmod} the compactification of a pre-DHR bimodule to the fusion case. 
(Here, the categories of pre-DHR bimodules and DHR bimodules are canonically equivalent.)
Recall that when $\caT \cong \caT'$ via the canonical map, the functor in \autoref{lem:DTIcptmod} is monoidal, which is the case for fusion spin chain.
We also see that there is a simpler description through equivalence $\caZ (\caC) \to _\caA \DHR_\caA$ with a graphical description.

\begin{theorem}\label{thm:fusionspinchainalgebra-DTIcptmod}
    Let $\caA = \caA(\caC, X)$ be a fusion spin chain algebra with $\caC$ a unitary fusion category and $X$ a strong tensor generator with $n \ge 1$ and $k > 14 n + 3$.
    Then, there is a monoidal functor $_\caA \DHR_\caA \to _{\Tube_\caC (X^{\otimes k})} \pHMod_{\Tube_\caC (X^{\otimes k})}$.

    Furthermore, the following commutes up to natural isomorphism
    \begin{equation}\begin{tikzcd}
        \caZ (\caC) \arrow[r] \arrow[bend right=20, swap]{rr}{\caZ (\caC) (I(X^{\otimes k}) \to I(X^{\otimes k}) \otimes -)}  & _\caA \DHR_\caA \arrow[r] & _{\Tube_\caC (X^{\otimes k})} \pHMod_{\Tube_\caC (X^{\otimes k})}
    \end{tikzcd}\end{equation}
    where $\caZ (\caC) \to _\caA \DHR_\caA$ is the equivalence in \cite{jones2024dhrbimodulesquasilocalalgebras}. 
    \begin{equation}
        \tikzmath{
            \draw[thick, knot, orange, decorate, decoration={snake, segment length=1mm, amplitude=.2mm}] (-.5,0) node[right, xshift=.2cm]{$(Z, \sigma)$} to[out=60,in=270] (0.5,1);
        }  
        \; \mapsto \;
        \braces*{\tikzmath{
            \node at (-1.4,0) {$\cdots$};
            \draw[very thick, blue] (-1,-1) -- (-1,1);
            \draw[very thick, blue] (-.6,-1) -- (-.6,1);
            \draw[very thick, blue] (-.2,-1) -- (-.2,1);
            \draw[very thick, blue] (.2,-1) -- (.2,1);
            \draw[very thick, blue] (.6,-1) -- (.6,1);
            \draw[very thick, blue] (1,-1) -- (,1);
            \node at (1.4,0) {$\cdots$};
            \draw[thick, knot, orange, decorate, decoration={snake, segment length=1mm, amplitude=.2mm}] (0.6,0) to[out=60,in=270] (0.9,1);
            \roundNbox{fill=white}{(0,0)}{.3}{.9}{.9}{}
        }}
        \; \mapsto \;
        \braces*{
        \sum_{\color{red}{s} \normalcolor \in \Irr (\caC)}
        \tikzmath{
            \draw[thick] (0,0) -- (0,2);
            \draw[thick] (2,0) -- (2,2);
            \filldraw[thick, fill=white] (1,2) ellipse (1 and .15);
            \draw[very thick, blue] (.4,-.13) -- (0.4,1.87);
            \draw[very thick, blue] (.8,-.15) -- (0.8,1.85);
            \draw[very thick, blue] (1.2,-.15) -- (1.2,1.85);
            \draw[very thick, blue] (1.6,-.13) -- (1.6,1.87);
            \halfDottedEllipse{(0,0)}{1}{.15};
            \halfDottedEllipse[red]{(0,1)}{1}{.15};
            \draw[thick, knot, orange, decorate, decoration={snake, segment length=1mm, amplitude=.2mm}] (1.8,0.8) node[right, xshift=.6cm, yshift=.1cm]{$(Z, \sigma)$} to[out=0,in=270] (3,2);
            \roundNboxEllipse[]{(1,1)}{1}{.15}{-140}{-40}{.5}{};
        }}
    \end{equation}
\end{theorem}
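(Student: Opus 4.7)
The strategy is to specialize \autoref{lem:DTIcptmod} to $\caA$, pull the resulting bimodule functor through the canonical identification $\Cpt^{k\ZZ}_{D,T}(\caA,1)\cong\Tube_\caC(X^{\otimes k})$ supplied by \autoref{thm:fusionspinchainalgebra-DTcptalgebra}, and then verify the diagram by graphically unwinding the functor on $M(Z,\sigma)$.

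For $\caA=\caA(\caC,X)$ the algebraic Haag-duality constants are $(R,\gamma)=(n,0)$ and local commutativity holds at $\delta=1$. Pick a ball $U\in\frB(\ZZ)$ with $\diam U=n$. To satisfy simultaneously the hypotheses of \autoref{lem:DTIcptmod} (requiring $D'\ge 2T+\diam U+4\delta+4\gamma$ and $k\ZZ$ free at scale $2T+\delta+\diam U$) and of \autoref{thm:fusionspinchainalgebra-DTcptalgebra} applied to both $(D,T)$ and $(D',T')$ (requiring $D,T,D',T'\ge 3n$, $k>2T+3$, and $k>2T'+3$), I set $T=3n$, and \autoref{lem:quasilocal-staralgebra-increaseparameters} together with \autoref{lem:fusionspinchainalgebra-locallypresented} permits $D'=T'$ of order $7n$. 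The hypothesis $k>14n+3$ delivers every required inequality, and \autoref{thm:fusionspinchainalgebra-DTcptalgebra} identifies $\Cpt^{k\ZZ}_{D,T}(\caA,1)\cong\Tube_\caC(X^{\otimes k})\cong\Cpt^{k\ZZ}_{D',T'}(\caA,1)$ canonically, confirming the hypothesis $\caT\cong\caT'$ of \autoref{lem:DTIcptmod} in the fusion setting. The lemma then produces a monoidal functor from the subcategory of $_\caA\pDHR_\caA$ consisting of bimodules localizable in $U$ into $_{\Tube_\caC(X^{\otimes k})}\pHMod_{\Tube_\caC(X^{\otimes k})}$. Every DHR bimodule of the fusion spin chain is localizable in any ball of diameter at least $n$, and $_\caA\pDHR_\caA\simeq\,_\caA\DHR_\caA$, so we obtain a monoidal functor on all of $_\caA\DHR_\caA$.

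Next, I unpack the functor on $M(Z,\sigma)$. By $(n,0)$-algebraic Haag duality $Z_{\caA_L}(\caA_{U^c})=\End_\caC(X^{\otimes n})$, and a standard graphical right projective basis of $M(Z,\sigma)$ localized in $U$ identifies $M(Z,\sigma)_U\cong\caC(X^{\otimes n}\to Z\otimes X^{\otimes n})$ as an $(\End_\caC(X^{\otimes n}),\End_\caC(X^{\otimes n}))$-bimodule. Tensoring on the right with $\Tube_\caC(X^{\otimes k})=\bigoplus_s\caC(X^{\otimes k}\to s\otimes X^{\otimes k}\otimes s^\vee)$ and resolving the strands inside the $U$-window by the fusion decomposition of \autoref{lem:fusionspinchainalgebra-tube-factorization} gives $\widetilde{M(Z,\sigma)_U}$ the graphical form displayed in the theorem statement: a tube of circumference $k$ with a pendant $Z$-strand braided into the core by $\sigma$. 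The $I\dashv F$ adjunction then yields
\begin{equation}
    \widetilde{M(Z,\sigma)_U}\;\cong\;\bigoplus_{s\in\Irr(\caC)}\caC\bigl(X^{\otimes k}\to s\otimes X^{\otimes k}\otimes s^\vee\otimes Z\bigr)\;\cong\;\caZ(\caC)\bigl(I(X^{\otimes k})\to I(X^{\otimes k})\otimes Z\bigr),
\end{equation}
and the left/right $\Tube$-actions defined in \autoref{lem:DTIcptmod} (stack a tube above/below) translate respectively to post-composition with $\phi\otimes\id_Z$ and pre-composition with $\phi$, which is the natural $(\Tube_\caC(X^{\otimes k}),\Tube_\caC(X^{\otimes k}))$-bimodule structure on the right-hand side.

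Naturality in $(Z,\sigma)$ will be automatic, since bimodule maps $M(Z,\sigma)\to M(Z',\sigma')$ correspond under \autoref{lem:fusionspinchainalgebra-dhr-drinfeldcenter} to morphisms in $\caZ(\caC)$, and this correspondence commutes with the graphical encoding above. The main obstacle is monoidality: I must verify that the structure isomorphism $\eta_{M,N}$ of \autoref{lem:DTIcptmod}, when restricted to $M=M(Z,\sigma)$ and $N=M(Z',\sigma')$, transports to the monoidal product on $\caZ(\caC)$ after identifying $_{\Tube}\pHMod_{\Tube}$ with Day-convolution modules via \autoref{cor:fusioncategory-reptube-drinfeldcenter}. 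Graphically, $\eta$ merges two adjacent pendant strands $Z,Z'$ into a single $Z\otimes Z'$-strand; I expect compatibility with the half-braidings to reduce to the hexagon axiom and an explicit naturality chase for $\sigma,\sigma'$ through the $s$-summands of $I(X^{\otimes k})$, and this is where the bulk of the technical work will lie.
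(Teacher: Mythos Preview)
Your overall strategy matches the paper's: specialize \autoref{lem:DTIcptmod} with the parameter choices you describe, invoke \autoref{thm:fusionspinchainalgebra-DTcptalgebra} to identify both compactified algebras with $\Tube_\caC(X^{\otimes k})$, and then compare the resulting functor with $\caZ(\caC)(I(X^{\otimes k})\to I(X^{\otimes k})\otimes -)$ via graphical calculus. The parameter bookkeeping and the localizability remark are both correct.

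However, you have misidentified where the work is. The monoidality of the functor $_\caA\DHR_\caA\to{}_{\Tube}\pHMod_{\Tube}$ is already furnished by \autoref{lem:DTIcptmod} once $\caT\cong\caT'$; you do not need to revisit $\eta_{M,N}$, and the theorem only asserts that the triangle commutes up to a natural isomorphism of ordinary functors, not a monoidal one. So your final paragraph is aimed at a claim that is not being made.

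The step that does require proof is the one you assert in passing: that under your identification the left $\Tube$-action from \autoref{lem:DTIcptmod} becomes ``post-composition with $\phi\otimes\id_Z$.'' The point is that $\widetilde{M(Z,\sigma)_U}$ is naturally $\bigoplus_s\caC(X^{\otimes k}\to s\otimes X^{\otimes a}\otimes X^{\otimes(\#U)}\otimes Z\otimes X^{\otimes b}\otimes s^\vee)$, with $Z$ sitting in the middle, whereas $\caZ(\caC)(I(X^{\otimes k})\to I(X^{\otimes k})\otimes Z)\cong\bigoplus_s\caC(X^{\otimes k}\to s\otimes X^{\otimes k}\otimes s^\vee\otimes Z)$ has $Z$ at the far right. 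The comparison map $\Phi$ is given by post-composing with $\id\otimes\sigma_{Z,\,X^{\otimes b}\otimes s^\vee}$, and the substantive check is that this $\Phi$ intertwines the left action defined via the projective basis in \autoref{lem:DTIcptmod} with plain post-composition. The paper carries this out by a graphical calculation split into two cases according to whether the acting local region $[U]\subset\ZZ/k\ZZ$ crosses the seam $\{k,1\}$; the second case is where the half-braiding and the simple-strand decomposition genuinely interact. That verification, not the $\eta_{M,N}$ coherence, is the bulk of the argument.
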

\begin{proof}
    Let 
    \begin{align}
        \tikzmath{
            \draw[very thick, blue] (-.5,0) -- (.5,0);
        } =& \id_{X^{\otimes n}} \\
        \tikzmath{
            \draw[thick, red] (-.5,0) -- (.5,0);
        } =& \id_{\color{red}{s}} \quad \color{red}{s} \normalcolor \in \Irr(\caC) \\
        \tikzmath{
            \draw[thick, knot, orange, decorate, decoration={snake, segment length=1mm, amplitude=.2mm}] (-.5,0) -- (.5,0);
        } =& \id_{\color{orange}{Z}} .
    \end{align}
    Take $I \in \frB (\ZZ)$ such that $\diam(I) \ge n$ and $\delta, D, T, D', T' \ge 3 n$ such that $k > 2 T + \delta + 2, k > 2 T' + \delta + 2$; $D' \ge 2 T + \diam (I), k > 2 T + \diam (I) + \delta$. 
    \footnote{
        For example, take $\diam (I) = n$, $\delta = 1$, $D = T = 3 n$, $D' = T' = 6 n + \diam (I)$.
    }
    Then, by \autoref{thm:fusionspinchainalgebra-DTcptalgebra} and \autoref{lem:DTIcptmod}, we have a monoidal functor 
    \begin{align}
        (_\caA \DHR_\caA)_I \to _{\Tube_\caC (X^{\otimes k})} \pHMod_{\Tube_\caC (X^{\otimes k})},
    \end{align}
    which is independent of the choice of parameters.
    By \cite{jones2024dhrbimodulesquasilocalalgebras}, any DHR bimodule of a fusion spin chain algebra is localizable in any interval of size at least $n$. 
    Thus, $(_\caA \DHR_\caA)_I = _\caA \DHR_\caA$.

    Recall that $M(\widetilde{Z}) \cong \colim_l \caC (X^{\otimes l} \otimes X^{\otimes (\#I)} \otimes X^{\otimes l} \to X^{\otimes l} \otimes X^{\otimes (\#I)} \otimes Z \otimes X^{\otimes l})$ with right $\caA_{I^{+l}}$ action by pre-composition and left $\caA_{I^{+l}}$ action by post-composing $f \in \caA_{I^{+l}}$ conjugated by $\sigma_{Z, X^{\otimes l}}$. 
    We have $M (Z, \sigma)_I \cong \sum_i b_i^I \caA_I$ for a projective basis localized in $I$.
    By explicitly choosing $b_i^I$ by semisimplicity, we obtain $M(Z, \sigma)_I \cong \caC (X^{\otimes (\#I)} \to X^{\otimes (\#I)} \otimes Z)$ as $(\caA_I, \caA_I)$-bimodules.
    \begin{align}
        \widetilde{M(Z, \sigma)_I} 
        \cong& \caC (X^{\otimes (\#I)} \to X^{\otimes (\#I)} \otimes Z) \otimes_{\caA_I} {\Tube_\caC (X^{\otimes k})} \\ 
        \cong& \bigoplus_{s \in \Irr (\caC)} \caC (X^{\otimes (\#I)} \to X^{\otimes (\#I)} \otimes Z) \otimes_{\caA_I} \caC (X^{\otimes k} \to s \otimes X^{\otimes k} \otimes s^{\vee}) \\ 
        \cong& \bigoplus_{s \in \Irr (\caC)} \caC (X^{\otimes k} \to s \otimes X^{\otimes a} \otimes X^{\otimes (\#I)} \otimes Z \otimes X^{\otimes b} \otimes s^{\vee}) 
    \end{align}
    \begin{align}
        \sum_{\color{red}{s} \normalcolor \in \Irr (\caC)}
        \tikzmath{
            \draw[thick] (0,0) -- (0,2);
            \draw[thick] (2,0) -- (2,2);
            \filldraw[thick, fill=white] (1,2) ellipse (1 and .15);
            \draw[very thick, blue] (.4,-.13) -- (0.4,1.87);
            \draw[very thick, blue] (.8,-.15) -- (0.8,1.85);
            \draw[very thick, blue] (1.2,-.15) -- (1.2,1.85);
            \draw[very thick, blue] (1.6,-.13) -- (1.6,1.87);
            \halfDottedEllipse{(0,0)}{1}{.15};
            \halfDottedEllipse[red]{(0,1)}{1}{.15};
            \roundNboxEllipse[]{(1,1)}{1}{.15}{-140}{-40}{.5}{};
            \draw[very thick, blue] (0.8,1.85) -- (0.8,4);
            \draw[very thick, blue] (1.2,1.85) -- (1.2,4);
            \draw[thick, knot, orange, decorate, decoration={snake, segment length=1mm, amplitude=.2mm}] (1.2,3) to[out=0,in=270] (2,4);
            \roundNbox{fill=white}{(1,3)}{.25}{.2}{.2}{};
        }
        \; \xmapsto{\sim} \; 
        \sum_{\color{red}{s} \normalcolor \in \Irr (\caC)}
        \tikzmath{
            \draw[very thick, blue] (.4,0) -- (0.4,4);
            \draw[very thick, blue] (.8,0) -- (0.8,4);
            \draw[very thick, blue] (1.2,0) -- (1.2,4);
            \draw[very thick, blue] (1.6,0) -- (1.6,4);
            \draw[thick, red] (1.6,1) to[out=0,in=270] (2.5,4);
            \draw[thick, red] (.4,1) to[out=180,in=270] (-.5,4);
            \roundNbox{fill=white}{(1,1)}{.25}{.5}{.5}{};
            \draw[thick, knot, orange, decorate, decoration={snake, segment length=1mm, amplitude=.2mm}] (1.2,3) to[out=60,in=270] (1.4,4);
            \roundNbox{fill=white}{(1,3)}{.25}{.2}{.2}{};
        }
        \; = \; 
        \sum_{\color{red}{s} \normalcolor \in \Irr (\caC)}
        \tikzmath{
            \draw[very thick, blue] (.4,0) -- (0.4,2);
            \draw[very thick, blue] (.8,0) -- (0.8,2);
            \draw[very thick, blue] (1.2,0) -- (1.2,2);
            \draw[very thick, blue] (1.6,0) -- (1.6,2);
            \draw[thick, red] (1.6,1) to[out=0,in=270] (2.5,2);
            \draw[thick, red] (.4,1) to[out=180,in=270] (-.5,2);
            \draw[thick, knot, orange, decorate, decoration={snake, segment length=1mm, amplitude=.2mm}] (1.2,1) to[out=60,in=270] (1.4,2);
            \roundNbox{fill=white}{(1,1)}{.25}{.5}{.5}{};
        }
    \end{align}
    Meanwhile, 
    \begin{align}
        \caZ (\caC) (I(X^{\otimes k}) \to I(X^{\otimes k}) \otimes \widetilde{Z}) 
        \cong& \bigoplus_{s \in \Irr (\caC)} \caC (X^{\otimes k} \to s \otimes X^{\otimes a} \otimes X^{\otimes (\#I)} \otimes X^{\otimes b} \otimes s^{\vee} \otimes Z) 
    \end{align}
    \begin{equation}
        \sum_{\color{red}{s} \normalcolor \in \Irr (\caC)}
        \tikzmath{
            \draw[thick] (0,0) -- (0,2);
            \draw[thick] (2,0) -- (2,2);
            \filldraw[thick, fill=white] (1,2) ellipse (1 and .15);
            \draw[very thick, blue] (.4,-.13) -- (0.4,1.87);
            \draw[very thick, blue] (.8,-.15) -- (0.8,1.85);
            \draw[very thick, blue] (1.2,-.15) -- (1.2,1.85);
            \draw[very thick, blue] (1.6,-.13) -- (1.6,1.87);
            \halfDottedEllipse{(0,0)}{1}{.15};
            \halfDottedEllipse[red]{(0,1)}{1}{.15};
            \draw[thick, knot, orange, decorate, decoration={snake, segment length=1mm, amplitude=.2mm}] (1.8,0.8) to[out=0,in=270] (3,2);
            \roundNboxEllipse[]{(1,1)}{1}{.15}{-140}{-40}{.5}{};
        }
        \; \xmapsto{\sim} \; 
        \sum_{\color{red}{s} \normalcolor \in \Irr (\caC)}
        \tikzmath{
            \draw[very thick, blue] (.4,0) -- (0.4,2);
            \draw[very thick, blue] (.8,0) -- (0.8,2);
            \draw[very thick, blue] (1.2,0) -- (1.2,2);
            \draw[very thick, blue] (1.6,0) -- (1.6,2);
            \draw[thick, red] (1.6,1) to[out=0,in=270] (2.5,2);
            \draw[thick, red] (.4,1) to[out=180,in=270] (-.5,2);
            \draw[thick, knot, orange, decorate, decoration={snake, segment length=1mm, amplitude=.2mm}] (1.2,.9) to[out=0,in=270] (3,2);
            \roundNbox{fill=white}{(1,1)}{.25}{.5}{.5}{};
        }
    \end{equation}
    Define a natural linear isomorphism 
    \begin{align}
        \Phi: &\bigoplus_{s \in \Irr (\caC)} \caC (X^{\otimes k} \to s \otimes X^{\otimes a} \otimes X^{\otimes (\#I)} \otimes Z \otimes X^{\otimes b} \otimes s^{\vee}) \\ 
        \to& \bigoplus_{s \in \Irr (\caC)} \caC (X^{\otimes k} \to s \otimes X^{\otimes a} \otimes X^{\otimes (\#I)} \otimes X^{\otimes b} \otimes s^{\vee} \otimes Z) \\ 
        &f \mapsto (\id_{s} \otimes \id_{X^{\otimes a + (\# I)}} \otimes \sigma_{Z, X^{\otimes b} \otimes s^\vee}) \circ f
    \end{align}
    The compatibility with the right $\Tube_\caC (X^{\otimes k})$ action is a direct consequence of graphical calculus.
    For left action, let $U \in \frF_D (\ZZ)$ and $a \in \caA_U$ and $f_{\color{red}{s}} = f^1_{\color{red}{s}} f^2_{\color{red}{s}}$ where $f^1_{\color{red}{s}} \in \caC (X^{\otimes (\# I)} \to X^{\otimes (\# I)} \otimes Z)$ and $f^2_{\color{red}{s}} \in \caC (X^{\otimes k} \to s \otimes X^{\otimes k} \otimes s^\vee)$.
    \begin{enumerate}
        \item if $[U] = u / k \ZZ$ does not contain $\{k, 1\}$, 
        \begin{align}
            &\Phi \paren*{a \triangleright \sum_{\color{red}{s} \normalcolor \in \Irr (\caC)} f_{\color{red}{s}}} 
            = \Phi \paren*{\sum_{\color{red}{s} \normalcolor \in \Irr (\caC)}  \sum_i b_i^I \braket{b_i^I}{a f^1_{\color{red}{s}}}_M f^2_{\color{red}{s}}}  \\
            =& \Phi \paren*{
            \sum_{\substack{\color{red}{s} \normalcolor \in \Irr (\caC) \\ b_i^I}}
            \tikzmath{
                \draw[very thick, blue] (.4,-1) -- (0.4,4);
                \draw[very thick, blue] (.8,-1) -- (0.8,4);
                \draw[very thick, blue] (1.2,-1) -- (1.2,4);
                \draw[very thick, blue] (1.6,-1) -- (1.6,4);
                \draw[thick, red] (1.6,-.5) to[out=45,in=270] (2.5,4);
                \draw[thick, red] (.4,-.5) to[out=135,in=270] (-.5,4);
                \draw[thick, knot, orange, decorate, decoration={snake, segment length=1mm, amplitude=.2mm}] (1.2,.5) to[out=60,in=270] (2,1.5) to[out=90,in=300] (1.2,2.5);
                \draw[thick, knot, orange, decorate, decoration={snake, segment length=1mm, amplitude=.2mm}] (1.2,3.5) to[out=80,in=270] (1.4,4);
                \roundNbox{fill=white}{(1,-.5)}{.25}{.5}{.5}{$f^2_{\color{red}{s}}$};
                \roundNbox{fill=white}{(1,.5)}{.25}{.2}{.2}{$f^1_{\color{red}{s}}$};
                \roundNbox{fill=white}{(1.2,1.5)}{.25}{.3}{.3}{$a$};
                \roundNbox{fill=white}{(1,2.5)}{.25}{.2}{.2}{${b_i^I}^\ast$};
                \roundNbox{fill=white}{(1,3.5)}{.25}{.2}{.2}{$b_i^I$};
            }}  
            =
            \sum_{\substack{\color{red}{s} \normalcolor \in \Irr (\caC) \\ b_i^I}}
            \tikzmath{
                \draw[very thick, blue] (.4,-1) -- (0.4,4);
                \draw[very thick, blue] (.8,-1) -- (0.8,4);
                \draw[very thick, blue] (1.2,-1) -- (1.2,4);
                \draw[very thick, blue] (1.6,-1) -- (1.6,4);
                \draw[thick, red] (1.6,-.5) to[out=45,in=270] (2.5,4);
                \draw[thick, red] (.4,-.5) to[out=135,in=270] (-.5,4);
                \draw[thick, knot, orange, decorate, decoration={snake, segment length=1mm, amplitude=.2mm}] (1.2,.5) to[out=60,in=270] (2,1.5) to[out=90,in=300] (1.2,2.5);
                \draw[thick, knot, orange, decorate, decoration={snake, segment length=1mm, amplitude=.2mm}] (1.2,3.5) to[out=0,in=270] (3,4);
                \roundNbox{fill=white}{(1,-.5)}{.25}{.5}{.5}{$f^2_{\color{red}{s}}$};
                \roundNbox{fill=white}{(1,.5)}{.25}{.2}{.2}{$f^1_{\color{red}{s}}$};
                \roundNbox{fill=white}{(1.2,1.5)}{.25}{.3}{.3}{$a$};
                \roundNbox{fill=white}{(1,2.5)}{.25}{.2}{.2}{${b_i^I}^\ast$};
                \roundNbox{fill=white}{(1,3.5)}{.25}{.2}{.2}{$b_i^I$};
            } \\ 
            =& 
            \sum_{\color{red}{s} \normalcolor \in \Irr (\caC)}
            \tikzmath{
                \draw[very thick, blue] (.4,-1) -- (0.4,2);
                \draw[very thick, blue] (.8,-1) -- (0.8,2);
                \draw[very thick, blue] (1.2,-1) -- (1.2,2);
                \draw[very thick, blue] (1.6,-1) -- (1.6,2);
                \draw[thick, red] (1.6,-.5) to[out=0,in=270] (2.5,2);
                \draw[thick, red] (.4,-.5) to[out=180,in=270] (-.5,2);
                \draw[thick, knot, orange, decorate, decoration={snake, segment length=1mm, amplitude=.2mm}] (1.2,.5) to[out=45,in=270] (3,2);
                \roundNbox{fill=white}{(1,-.5)}{.25}{.5}{.5}{$f^2_{\color{red}{s}}$};
                \roundNbox{fill=white}{(1,.5)}{.25}{.2}{.2}{$f^1_{\color{red}{s}}$};
                \roundNbox{fill=white}{(1.2,1.5)}{.25}{.3}{.3}{$a$};
            } 
            = a \Phi \paren*{\sum_{\color{red}{s} \normalcolor \in \Irr (\caC)} f_{\color{red}{s}}}
        \end{align}

        \item if $[U] = u / k \ZZ$ contains $\{k, 1\}$, 
        \begin{align}
            &\braket{b_i^I}{a f^1_{\color{red}{s}}}_M
            =
            \tikzmath{
                \draw[very thick, blue] (.4,1) -- (0.4,4);
                \draw[very thick, blue] (.8,1) -- (0.8,4);
                \draw[very thick, blue] (1.2,1) -- (1.2,4);
                \draw[very thick, blue] (1.6,1) -- (1.6,4);
                \draw[thick, blue] (2,1) -- (2,4);
                \draw[thick, knot, orange, decorate, decoration={snake, segment length=1mm, amplitude=.2mm}] (1.2,1.5) to[out=60,in=270] (2.4,2.5) to[out=90,in=300] (1.2,3.5);
                \roundNbox{fill=white}{(1,1.5)}{.25}{.2}{.2}{$f^1_{\color{red}{s}}$};
                \roundNbox{fill=white}{(1.6,2.5)}{.25}{.3}{.3}{$a$};
                \roundNbox{fill=white}{(1,3.5)}{.25}{.2}{.2}{${b_i^I}^\ast$};
                \roundNbox{fill=white}{(1,3.5)}{.25}{.2}{.2}{$b_i^I$};
            }
            =
            \tikzmath{
                \draw[very thick, blue] (.4,1) -- (0.4,4);
                \draw[very thick, blue] (.8,1) -- (0.8,4);
                \draw[very thick, blue] (1.2,1) -- (1.2,4);
                \draw[very thick, blue] (1.6,1) -- (1.6,4);
                \draw[thick, blue] (2,1) -- (2,4);
                \draw[thick, knot, orange, decorate, decoration={snake, segment length=1mm, amplitude=.2mm}] (1.2,1.5) to[out=60,in=270] (0.9,2.5) to[out=90,in=300] (1.2,3.5);
                \roundNbox{fill=white}{(1,1.5)}{.25}{.2}{.2}{$f^1_{\color{red}{s}}$};
                \roundNbox{fill=white}{(1.6,2.5)}{.25}{.3}{.3}{$a$};
                \roundNbox{fill=white}{(1,3.5)}{.25}{.2}{.2}{${b_i^I}^\ast$};
                \roundNbox{fill=white}{(1,3.5)}{.25}{.2}{.2}{$b_i^I$};
            } \in \caA_{I \cup U} 
            \hookrightarrow 
            \sum_{\color{red}{t} \normalcolor \in \Irr (\caC)}
            \tikzmath{
                \draw[very thick, blue] (.4,1) -- (0.4,4);
                \draw[very thick, blue] (.8,1) -- (0.8,4);
                \draw[very thick, blue] (1.2,1) -- (1.2,4);
                \draw[very thick, blue] (1.6,1) -- (1.6,4);
                \draw[thick, red] (1.6,2.5) to[out=0,in=270] (2.5,4);
                \draw[thick, red] (.4,2.5) to[out=180,in=270] (-.5,4);
                \draw[thick, knot, orange, decorate, decoration={snake, segment length=1mm, amplitude=.2mm}] (1.2,1.5) to[out=60,in=270] (0.9,2.5) to[out=90,in=300] (1.2,3.5);
                \roundNbox{fill=white}{(1,1.5)}{.25}{.2}{.2}{$f^1_{\color{red}{s}}$};
                \roundNbox{fill=white}{(1.4,2.5)}{.25}{.1}{.1}{};
                \roundNbox{fill=white}{(.4,2.5)}{.25}{0}{0}{};
                \roundNbox{fill=white}{(1,3.5)}{.25}{.2}{.2}{${b_i^I}^\ast$};
                \roundNbox{fill=white}{(1,3.5)}{.25}{.2}{.2}{$b_i^I$};
            }
        \end{align}
        Thus,
        \begin{align}
            &\Phi \paren*{a \triangleright \sum_{\color{red}{s} \normalcolor \in \Irr (\caC)} f_{\color{red}{s}}} 
            = \Phi \paren*{\sum_{\color{red}{s} \normalcolor \in \Irr (\caC)}  \sum_i b_i^I \braket{b_i^I}{a f^1_{\color{red}{s}}}_M f^2_{\color{red}{s}}}  \\
            =& \Phi \paren*{
            \sum_{\substack{\color{red}{s, t, u} \normalcolor \in \Irr (\caC) \\ \tikzmath{
                \fill[red] (.3,.3) circle (.05cm);
                \draw[thick, red] (.3,.3) -- (.3,.45);
                \draw[thick, red] (.3,.3) -- (.15,.15);
                \draw[thick, red] (.3,.3) -- (.45,.15);
            } \\ b_i^I}}
            \tikzmath{
                \draw[very thick, blue] (.4,-1) -- (0.4,4);
                \draw[very thick, blue] (.8,-1) -- (0.8,4);
                \draw[very thick, blue] (1.2,-1) -- (1.2,4);
                \draw[very thick, blue] (1.6,-1) -- (1.6,4);
                \draw[thick, red] (1.6,-.5) to[out=45,in=270] (2.5,3);
                \draw[thick, red] (.4,-.5) to[out=135,in=270] (-.5,3);
                \draw[thick, red] (1.6,1.5) to[out=0,in=270] (2.5,3);
                \draw[thick, red] (.4,1.5) to[out=180,in=270] (-.5,3);
                \draw[thick, red] (2.5,3) -- (2.5,4);
                \draw[thick, red] (-.5,3) -- (-.5,4);
                \node at (2.5,3) {$\color{red}{\bullet}$};
                \node at (-.5,3) {$\color{red}{\bullet}$};
                \draw[thick, knot, orange, decorate, decoration={snake, segment length=1mm, amplitude=.2mm}] (1.2,.5) to[out=60,in=270] (0.9,1.5) to[out=90,in=300] (1.2,2.5);
                \draw[thick, knot, orange, decorate, decoration={snake, segment length=1mm, amplitude=.2mm}] (1.2,3.5) to[out=80,in=270] (1.4,4);
                \roundNbox{fill=white}{(1,-.5)}{.25}{.5}{.5}{$f^2_{\color{red}{s}}$};
                \roundNbox{fill=white}{(1,.5)}{.25}{.2}{.2}{$f^1_{\color{red}{s}}$};
                \roundNbox{fill=white}{(1.4,1.5)}{.25}{.1}{.1}{};
                \roundNbox{fill=white}{(.4,1.5)}{.25}{0}{0}{};
                \roundNbox{fill=white}{(1,2.5)}{.25}{.2}{.2}{${b_i^I}^\ast$};
                \roundNbox{fill=white}{(1,3.5)}{.25}{.2}{.2}{$b_i^I$};
            }}  
            =
            \sum_{\substack{\color{red}{s, t, u} \normalcolor \in \Irr (\caC) \\ \tikzmath{
                \fill[red] (.3,.3) circle (.05cm);
                \draw[thick, red] (.3,.3) -- (.3,.45);
                \draw[thick, red] (.3,.3) -- (.15,.15);
                \draw[thick, red] (.3,.3) -- (.45,.15);
            } \\ b_i^I}}
            \tikzmath{
                \draw[very thick, blue] (.4,-1) -- (0.4,4);
                \draw[very thick, blue] (.8,-1) -- (0.8,4);
                \draw[very thick, blue] (1.2,-1) -- (1.2,4);
                \draw[very thick, blue] (1.6,-1) -- (1.6,4);
                \draw[thick, red] (1.6,-.5) to[out=45,in=270] (2.5,3);
                \draw[thick, red] (.4,-.5) to[out=135,in=270] (-.5,3);
                \draw[thick, red] (1.6,1.5) to[out=0,in=270] (2.5,3);
                \draw[thick, red] (.4,1.5) to[out=180,in=270] (-.5,3);
                \draw[thick, red] (2.5,3) -- (2.5,4);
                \draw[thick, red] (-.5,3) -- (-.5,4);
                \node at (2.5,3) {$\color{red}{\bullet}$};
                \node at (-.5,3) {$\color{red}{\bullet}$};
                \draw[thick, knot, orange, decorate, decoration={snake, segment length=1mm, amplitude=.2mm}] (1.2,.5) to[out=60,in=270] (0.9,1.5) to[out=90,in=300] (1.2,2.5);
                \draw[thick, knot, orange, decorate, decoration={snake, segment length=1mm, amplitude=.2mm}] (1.2,3.5) to[out=0,in=270] (3,4);
                \roundNbox{fill=white}{(1,-.5)}{.25}{.5}{.5}{$f^2_{\color{red}{s}}$};
                \roundNbox{fill=white}{(1,.5)}{.25}{.2}{.2}{$f^1_{\color{red}{s}}$};
                \roundNbox{fill=white}{(1.4,1.5)}{.25}{.1}{.1}{};
                \roundNbox{fill=white}{(.4,1.5)}{.25}{0}{0}{};
                \roundNbox{fill=white}{(1,2.5)}{.25}{.2}{.2}{${b_i^I}^\ast$};
                \roundNbox{fill=white}{(1,3.5)}{.25}{.2}{.2}{$b_i^I$};
            } \\ 
            =& 
            \sum_{\substack{\color{red}{s, t, u} \normalcolor \in \Irr (\caC) \\ \tikzmath{
                \fill[red] (.3,.3) circle (.05cm);
                \draw[thick, red] (.3,.3) -- (.3,.45);
                \draw[thick, red] (.3,.3) -- (.15,.15);
                \draw[thick, red] (.3,.3) -- (.45,.15);
            }}}
            \tikzmath{
                \draw[very thick, blue] (.4,-1) -- (0.4,3);
                \draw[very thick, blue] (.8,-1) -- (0.8,3);
                \draw[very thick, blue] (1.2,-1) -- (1.2,3);
                \draw[very thick, blue] (1.6,-1) -- (1.6,3);
                \draw[thick, red] (1.6,-.5) to[out=45,in=270] (2.5,2.5);
                \draw[thick, red] (.4,-.5) to[out=135,in=270] (-.5,2.5);
                \draw[thick, red] (1.6,1.5) to[out=0,in=270] (2.5,2.5);
                \draw[thick, red] (.4,1.5) to[out=180,in=270] (-.5,2.5);
                \draw[thick, red] (2.5,2.5) -- (2.5,3);
                \draw[thick, red] (-.5,2.5) -- (-.5,3);
                \node at (2.5,2.5) {$\color{red}{\bullet}$};
                \node at (-.5,2.5) {$\color{red}{\bullet}$};
                \draw[thick, knot, orange, decorate, decoration={snake, segment length=1mm, amplitude=.2mm}] (1.2,.5) to[out=30,in=270] (3,3);
                \roundNbox{fill=white}{(1,-.5)}{.25}{.5}{.5}{$f^2_{\color{red}{s}}$};
                \roundNbox{fill=white}{(1,.5)}{.25}{.2}{.2}{$f^1_{\color{red}{s}}$};
                \roundNbox{fill=white}{(1.4,1.5)}{.25}{.1}{.1}{};
                \roundNbox{fill=white}{(.4,1.5)}{.25}{0}{0}{};
            }
            = a \Phi \paren*{\sum_{\color{red}{s} \normalcolor \in \Irr (\caC)} f_{\color{red}{s}}}
        \end{align}
    \end{enumerate} 
\end{proof}

The next lemma states that the compactification of a DHR bimodule is compatible with the compactification of a bounded-spread isomorphism. 
We remark that this proof uses purely algebraic description of the DHR bimodule and the compactified bimodule and therefore works for an abstract quasi-local *-algebra, given that $\Cpt (\alpha): \Cpt_{D, T}^G (\caA, \delta) \to \Cpt_{D', T'}^G (\caA, \delta)$ is an isomorphism. 
Again, we take advantage of the stabilization of $\Cpt_{D,T}^{k \ZZ} (\caA, \delta)$ with respect to $T$ in the fusion spin chain case.

\begin{lemma}\label{lem:fusionspinchainalgebra-DHR-Tube-boundedspread}
    Let $\caA = \caA(\caC, X)$ be a fusion spin chain algebra with $\caC$ a unitary fusion category and $X$ a strong tensor generator with $n \ge 1$.
    Let $\alpha: \caA \to \caA$ a $\ZZ$-equivariant bounded-spread isomorphism with spread $s$, $k > \max \{14 n + 4 s, 6 n + 12 s\} + 3$.
    Then, the following commutes up to natural isomorphism
    \begin{equation}\begin{tikzcd}[column sep=huge]
        _\caA \DHR_\caA \arrow[r, "\DHR(\alpha)"] \arrow[d] & _\caA \DHR_\caA \arrow[d] \\ 
        _{\Tube_\caC (X^{\otimes k})} \pHMod_{\Tube_\caC (X^{\otimes k})} \arrow{r}[swap]{(\Tube^k (\alpha^{-1}))^*} & _{\Tube_\caC (X^{\otimes k})} \pHMod_{\Tube_\caC (X^{\otimes k})} 
    \end{tikzcd}\end{equation}
\end{lemma}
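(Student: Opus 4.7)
The approach is to construct an explicit natural isomorphism $\Phi$ between the two composite functors, working bimodule-by-bimodule using the projective basis description from \autoref{lem:DTIcptmod} and the compatibility $\Tube^k(\alpha) \circ j_V = j_{V^{+s}} \circ \alpha$ that is built into the construction of $\Tube^k(\alpha)$ via \autoref{lem:DTcptalgebra-boundedspread}.

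The first step is to understand how localization transforms under $\DHR(\alpha) = (\alpha^{-1})^*$. For $M \in {}_\caA\DHR_\caA$ with a right projective basis $\{b_i^I\}$ localized in an interval $I$ of size $n$, the same collection $\{b_i^I\}$ is a right projective basis for $(\alpha^{-1})^*M$ localized in $I^{+s}$: the projective basis identity is preserved because $\braket{b_i}{m}_{\alpha^{-1}} = \alpha(\braket{b_i}{m})$ cancels against the twisted right action $\triangleleft_{\alpha^{-1}} = \triangleleft\, \alpha^{-1}$, and the localization condition $a \triangleright_{\alpha^{-1}} b_i^I = b_i^I \triangleleft_{\alpha^{-1}} a$ for $a \in \caA_{(I^{+s})^c}$ reduces to the original one since $\alpha^{-1}(a) \in \caA_{I^c}$. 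One also checks that $\alpha(\caA_{(I^{+2s})^c}) \subseteq \caA_{(I^{+s})^c}$ gives the inclusion $((\alpha^{-1})^*M)_{I^{+s}} \subseteq M_{I^{+2s}}$ of underlying sets inside $M$. Next, verify that the bound $k > \max\{14n + 4s, 6n + 12s\} + 3$ suffices to apply \autoref{thm:fusionspinchainalgebra-DTIcptmod} to both $M$ and $(\alpha^{-1})^*M$, and to invoke \autoref{cor:fusionspinchainalgebra-DTcptalgebra-boundedspread} making $\tau := \Tube^k(\alpha^{-1})$ an algebra isomorphism.

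The natural isomorphism is then defined on simple tensors by
\begin{equation*}
\Phi_M \colon \widetilde{((\alpha^{-1})^*M)_{I^{+s}}} \longrightarrow \tau^* \widetilde{M_I}, \qquad m \boxtimes t \longmapsto m \boxtimes \tau(t),
\end{equation*}
where on the target we view $m \in M_{I^{+2s}}$ and use the canonical identification $\widetilde{M_{I^{+2s}}} \cong \widetilde{M_I}$ supplied by \autoref{thm:fusionspinchainalgebra-DTIcptmod}. Compatibility with the right $\caT$-action is immediate since $\tau$ is an algebra homomorphism. For the left $\caT$-action, apply the explicit formula from \autoref{lem:DTIcptmod}: for $a \in \caA_V$ and $g \in k\ZZ$ minimizing $d(gV, I)$, the source action evaluates to $\sum_i b_i^I \boxtimes j(\alpha(\braket{b_i^I}{\phi_g(\alpha^{-1}(a))m}_M))\, t$ using the $\ZZ$-equivariance $\alpha^{-1}\circ \phi_g = \phi_g \circ \alpha^{-1}$ and the twisted inner product $\braket{\cdot}{\cdot}_{\alpha^{-1}} = \alpha\circ\braket{\cdot}{\cdot}$. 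Applying $\Phi_M$ and using $\tau \circ j \circ \alpha = j$ collapses this to $\sum_i b_i^I \boxtimes j(\braket{b_i^I}{\phi_g(\alpha^{-1}(a))m}_M)\tau(t)$, which is precisely $\alpha^{-1}(a) \triangleright (m \boxtimes \tau(t))$, i.e., the $\tau^*$-pulled-back action. The analogous computation handles the $\caT$-valued inner product. Naturality in $M$ is automatic, since for any adjointable bimodule map $f\colon M \to N$ both $\widetilde{\DHR(\alpha)(f)}$ and $\tau^*\widetilde{f}$ act as $f \boxtimes \id$ on underlying sets.

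The main obstacle lies in the left-action verification: one must ensure that the minimizer $g \in k\ZZ$ of $d(gV, I)$ does not jump when $V$ is replaced by $V^{+s}$ (used implicitly when the target is computed via $\tau(a) = \alpha^{-1}(a) \in \caA_{V^{+s}}$), which is precisely what the $k > \max\{14n + 4s, 6n + 12s\} + 3$ bound enforces — this is exactly the same regime in which the compatibility diagrams of \autoref{lem:DTcptalgebra-boundedspread} and \autoref{cor:fusionspinchainalgebra-DTcptalgebra-boundedspread} stabilize. A secondary bookkeeping task is tracking how the centralizer-tensor identifications $\boxtimes_{Z(\caA_{(I^{+s})^c})}$ and $\boxtimes_{Z(\caA_{I^c})}$ relate under the inclusion $Z(\alpha^{-1}(\caA_{(I^{+s})^c})) \supseteq Z(\caA_{I^c})$, which is handled by passing to the common refinement $\widetilde{M_{I^{+2s}}}$ before identifying with $\widetilde{M_I}$.
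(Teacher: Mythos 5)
Your construction follows the same overall strategy as the paper's: pick a localization interval, observe that the twist in the bimodule action cancels against the twist in the inner product so that projective bases carry over, and define the comparison map explicitly and verify it is a right module isometry, a left module map, and natural. Your observation that the original projective basis $\{b_i^I\}$ of $M$ serves as a projective basis of $(\alpha^{-1})^*M$ localized in $I^{+s}$ is a nice dual of the paper's observation that a basis of $(\alpha^{-1})^*M$ localized in $I$ lands in $M_{I^{+s}}$, and the compatibility $\tau\circ j_U\circ\alpha=j_{U^{+s}}$ is indeed the workhorse in the left-action verification, as in the paper.

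However, your specific choice of localization interval opens a numerical gap. You localize $(\alpha^{-1})^*M$ at $I^{+s}$ and then observe $((\alpha^{-1})^*M)_{I^{+s}}\subseteq M_{I^{+2s}}$, forcing the target to be computed as $\widetilde{M_{I^{+2s}}}$ and then identified with $\widetilde{M_I}$. To apply \autoref{thm:fusionspinchainalgebra-DTIcptmod} to the localization interval $I^{+2s}$ (which has diameter $n+4s$ when $\diam(I)=n$) requires $k>12n+2(n+4s)+3=14n+8s+3$. The hypothesis of the lemma is only $k>\max\{14n+4s,\,6n+12s\}+3$, which does not imply $k>14n+8s+3$ when $s<2n$ (e.g.\ $n=s=1$ gives hypothesis $k>21$ but your argument needs $k>25$). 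The fix is exactly the paper's route: localize the twisted module at $I$ itself, using a projective basis of $(\alpha^{-1})^*M$ localized in $I$ and the inclusion $(\,{}^{\alpha^{-1}}M^{\alpha^{-1}})_I\hookrightarrow M_{I^{+s}}$, which only requires $k>12n+2(n+2s)+3=14n+4s+3$ and is covered by the stated bound. Your subsidiary concern about the minimizer $g$ jumping between $V$ and $V^{+s}$ is unfounded: $d(gV^{+s},I^{+2s})=\max(0,\,d(gV,I)-3s)$ is a monotone reparametrization of $d(gV,I)$, and the left-action formula of \autoref{lem:DTIcptmod} is in any case shown there to be independent of the choice of minimizer.
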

\begin{proof}
    Choose some $I \in \frB(\ZZ)$ such that $\diam (I) \ge n$ and $k > 12n + 2 \diam (I^{+s}) + 3$. 
    By \autoref{thm:fusionspinchainalgebra-DTIcptmod}, there are monoidal functors $M \mapsto \widetilde{M_I}$ and $M \mapsto \widetilde{M_{I^{+s}}}$, which are independent of choice of $I$ up to natural isomorphism.
    By \autoref{cor:fusionspinchainalgebra-DTcptalgebra-boundedspread}, $\Tube^k(\alpha)$ is an isomorphism.

    Recall that $\DHR(\alpha) (M) = {}^{\alpha^{-1}} M^{\alpha^{-1}}$ where ${}^{\alpha^{-1}} M^{\alpha^{-1}}$ is $M$ as a vector space with right $\caA$ action by $a \triangleright_{\alpha^{-1}} m := \alpha^{-1} (a) m$, $m \triangleleft_{\alpha^{-1}} a := m \alpha^{-1} (a)$ and $\braket*{m}{m'}_{{}^{\alpha^{-1}} M^{\alpha^{-1}}} := \alpha (\braket*{m}{m'}_M)$.
    The functor $(\Tube^k (\alpha^{-1}))^*$ is defined similarly.

    We prove there is a natural isomorphism $\widetilde{(^{\alpha^{-1}} M^{\alpha^{-1}})_I} \cong ^{\Tube^k (\alpha^{-1})} \widetilde{M_{I^{+s}}} {}^{\Tube^k (\alpha^{-1})}$.
    Observe that we have a $\CC$ vector space inclusion $(^{\alpha^{-1}} M^{\alpha^{-1}})_I \hookrightarrow M_{I^{+s}}$.
    Let $\{b_i^{I, \alpha}\}$ be a projective basis of ${}^{\alpha^{-1}} M^{\alpha^{-1}}$ localized in $I$. 
    Then, $\{b_i^{I, \alpha} \boxtimes 1\}$ is a projective basis of $\widetilde{(^{\alpha^{-1}} M^{\alpha^{-1}})_I}$ as a right $\Tube_\caC (X^{\otimes k})$ module.
    Define 
    \begin{align}
        \Omega_M: \widetilde{(^{\alpha^{-1}} M^{\alpha^{-1}})_I} &\to ^{\Tube^k (\alpha^{-1})} \widetilde{M_{I^{+s}}} {}^{\Tube^k (\alpha^{-1})} \\ 
        b_i^{I, \alpha} \boxtimes 1 &\mapsto b_i^{I, \alpha} \boxtimes 1
    \end{align}
    This is an isometry on the basis and therefore extends to a well-defined injective map by right $\Tube_{\caC} (X^{\otimes k})$ linearity. 
    \begin{align}
        &\braket*{\Omega_M (b_i^{I, \alpha} \boxtimes 1)}{\Omega_M (b_j^{I, \alpha} \boxtimes 1)}_{^{\Tube^k (\alpha^{-1})} \widetilde{M_{I^{+s}}} {}^{\Tube^k (\alpha^{-1})}} \\ 
        =& \Tube^k (\alpha) \paren*{\braket*{b_i^{I, \alpha} \boxtimes 1}{b_j^{I, \alpha} \boxtimes 1}_{\widetilde{M_{I^{+s}}}}} \\
        =& \alpha \paren*{\braket*{b_i^{I, \alpha}}{b_j^{I, \alpha}}_{M_{I^{+s}}}} \\
        =& \braket*{b_i^{I, \alpha}}{b_j^{I, \alpha}}_{^{\alpha^{-1}} M^{\alpha^{-1}}} \\
        =& \braket*{b_i^{I, \alpha} \boxtimes 1}{b_j^{I, \alpha} \boxtimes 1}_{\widetilde{(^{\alpha^{-1}} M^{\alpha^{-1}})_I}}
    \end{align}
    For surjectivity, it suffices to observe that $M_{I^{+s}} = \sum_i b_i^{I, \alpha} \caA_{I^{+s}}$. 
    Indeed, if $\sum_i b_i^{I, \alpha} a_i \in \sum_i b_i^{I, \alpha} \caA_{I^{+s}}$, for any $a \in \caA_{(I^{+s})^{c}}$, $\paren*{\sum_i b_i^{I, \alpha} a_i} a = a \paren*{\sum_i b_i^{I, \alpha} a_i}$ since $\alpha^{-1}(\caA_{I^c}) \supseteq \caA_{(I^{+s})^c} \ni a$ has commuting action on $b_i^{I, \alpha}$.
    Conversely, let $m \in M_{I^{+s}}$. Then, $m = \sum_i b_i^{I, \alpha} \triangleleft_{\alpha^{-1}} \braket*{b_i^{I, \alpha}}{m}_{^{\alpha^{-1}} M^{\alpha^{-1}}} = \sum_i b_i^{I, \alpha}\braket*{b_i^{I, \alpha}}{m}_{M} \in \sum_i b_i^{I, \alpha} \caA_{I^{+s}}$.

    It is also a left $\Tube_\caC (X^{\otimes k})$ module map. 
    For any choice of parameters $D, T$ satisfying the conditions in \autoref{thm:fusionspinchainalgebra-DTIcptmod} and $a \in \caA_U$ with $U \in \frF_D (\ZZ)$, 
    \begin{align}
        \Omega_M (a (b_i^{I, \alpha} \boxtimes t))
        =& \Omega_M \paren*{\sum_j b_j^{I,\alpha} \boxtimes \braket*{b_j^{I,\alpha}}{\phi_{+mk} (a) \triangleright_{\alpha^{-1}} b_i^{I, \alpha}}_{^{\alpha^{-1}} M^{\alpha^{-1}}} t} \\
        =& \Omega_M \paren*{\sum_j b_j^{I,\alpha} \boxtimes \alpha\paren*{\braket*{b_j^{I,\alpha}}{\alpha^{-1}(\phi_{+mk} (a)) b_i^{I, \alpha}}_M} t} \\
        =& \sum_j \paren*{b_j^{I,\alpha} \boxtimes 1} \triangleleft_{\Tube^k (\alpha^{-1})} \paren*{\alpha\paren*{\braket*{b_j^{I,\alpha}}{\phi_{+mk} (\alpha^{-1}(a)) b_i^{I, \alpha}}_M} t} \\
        =& \sum_j \paren*{b_j^{I,\alpha} \boxtimes \braket*{b_j^{I,\alpha}}{\phi_{+mk} (\alpha^{-1}(a)) b_i^{I, \alpha}}_M} \triangleleft_{\Tube^k (\alpha^{-1})} t \\
        =& \alpha^{-1}(a) \paren*{b_i^{I, \alpha} \boxtimes 1} \triangleleft_{\Tube^k (\alpha^{-1})} t \\
        =& a \triangleright_{\alpha^{-1}} \Omega_M \paren*{b_i^{I, \alpha} \boxtimes t}.
    \end{align}
\end{proof}

\subsection{Obstruction to Braided Autoequivalence from QCA}\label{subsec:obstruction}
We relate $\DHR (\alpha)$ on  $\caZ (\caC)$ and $\Tube (\alpha^{-1})^*$ (right action twist) on $\Rep (\Tube_\caC (X^{\otimes}))$.
Since $\caZ (\caC) (I(X^{\otimes k}) \to I(X^{\otimes k})) = \Tube_\caC (X^{\otimes k}) \in \Rep (\Tube_\caC (X^{\otimes k}))$ is fixed by $\Tube (\alpha^{-1})^*$ up to isomorphism, this gives a certain stabilization property of a braided autoequivalence of $\caZ (\caC)$ that arises from a bounded-spread automorphism (quantum cellular automaton) of the underlying fusion spin chain.
It is certainly not a complete criterion; nevertheless, it is a strong negative criterion because it requires stabilization for all sufficiently large $k$.

\begin{theorem}\label{thm:fusionspinchainalgebra-DHR-Picard}
    Let $\caA = \caA(\caC, X)$ be a fusion spin chain algebra with $\caC$ a unitary fusion category and $X$ a strong tensor generator with $n \ge 1$.
    Let $\alpha: \caA \to \caA$ a $\ZZ$-equivariant bounded-spread isomorphism with spread $s$, $k > \max \{14 n + 4 s, 6 n + 12 s\}$.
    Let $\DHR(\alpha): \caZ (\caC) \to \caZ (\caC)$ be the braided tensor autoequivalence via canonical equivalence $\caZ (\caC) \cong _\caA \DHR_\caA$.
    Then, there exists $\caL_{k,\alpha} \in \Inv (\caZ (\caC))$ such that the following commutes up to natural isomorphism.
    \begin{equation}\begin{tikzcd}[column sep=huge]
        \caZ (\caC) \arrow [r, "\DHR(\alpha)"] \arrow{d}[swap]{E:= \caZ (\caC)(I(X^{\otimes k}) \to -)} & \caZ (\caC) \arrow[r, "\caL_{k,\alpha} \otimes -"] & \caZ (\caC) \\ 
        \Rep (\Tube_\caC (X^{\otimes k})) \arrow[r, "\Tube^k (\alpha^{-1})^*"] & \Rep (\Tube_\caC (X^{\otimes k})) \arrow{ur}[swap]{-\otimes I(X^{\otimes k}) =: E^{-1}}
    \end{tikzcd}\end{equation}
\end{theorem}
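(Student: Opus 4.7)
The plan is to combine the compactification of DHR bimodules (Lemma 4.7) with the characterization of $E$ from Theorem 4.5 to transfer the intertwining between $\DHR(\alpha)$ and $\Tube^k(\alpha^{-1})^*$ into a statement on $\caZ(\caC)$, and then to isolate the residual discrepancy as an invertible object. First I would unpack $E$: by Theorem 4.5 it factors as the canonical braided equivalence $\caZ(\caC) \xrightarrow{\sim} {}_{\caA}\DHR_{\caA}$ followed by the monoidal compactification $M \mapsto \widetilde{M_I}$, and by Corollary 2.11 the target $\Rep(\Tube_{\caC}(X^{\otimes k}))$ is monoidally equivalent to $\caZ(\caC)$, under which $E$ corresponds to $I(X^{\otimes k}) \otimes -$ (as $E(1) \cong \Tube_{\caC}(X^{\otimes k})$ is the regular right module, which represents $I(X^{\otimes k})$ in the Morita picture). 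This interpretation realizes $E^{-1} := -\otimes I(X^{\otimes k})$ as the Morita-theoretic quasi-inverse of $E$ on its essential image.

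Next I would invoke Lemma 4.7 to obtain a natural isomorphism of plain functors
\begin{equation*}
    \Tube^k(\alpha^{-1})^* \circ E \;\cong\; E \circ \DHR(\alpha).
\end{equation*}
Applying $E^{-1}$ on the left yields $E^{-1} \circ \Tube^k(\alpha^{-1})^* \circ E \cong (E^{-1} \circ E) \circ \DHR(\alpha)$, so the entire content of the theorem reduces to identifying $E^{-1} \circ E$ with $\caL_{k,\alpha} \otimes -$ for some $\caL_{k,\alpha} \in \caZ(\caC)$. I would set $\caL_{k,\alpha} := E^{-1}(E(1))$ and then verify, using the monoidal structure on $E$ together with the explicit Morita form of $E^{-1}$, that $E^{-1} \circ E$ is indeed naturally isomorphic to tensor with $\caL_{k,\alpha}$.

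Invertibility of $\caL_{k,\alpha}$ would then follow from a comparison of equivalences: $\Tube^k(\alpha^{-1})^*$ is an autoequivalence of $\Rep(\Tube_{\caC}(X^{\otimes k}))$ (since $\Tube^k(\alpha)$ is an algebra isomorphism by Corollary 4.4 for $k$ in the stated range), and $\DHR(\alpha)$ is a braided autoequivalence of $\caZ(\caC)$. Hence both $E^{-1} \circ \Tube^k(\alpha^{-1})^* \circ E$ and $\DHR(\alpha)$ are equivalences of $\caZ(\caC)$, forcing $\caL_{k,\alpha} \otimes -$ to be an equivalence, and so $\caL_{k,\alpha} \in \Inv(\caZ(\caC))$.

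The main obstacle is the identification $E^{-1} \circ E \cong \caL_{k,\alpha} \otimes -$ together with the invertibility of the resulting object. The subtlety is that $\Tube^k(\alpha^{-1})$ is only an algebra automorphism of $\Tube_{\caC}(X^{\otimes k})$ and need not respect the bialgebra structure encoding Day convolution, so $\Tube^k(\alpha^{-1})^*$ is not a priori monoidal on $\Rep(\Tube_{\caC}(X^{\otimes k}))$. The object $\caL_{k,\alpha}$ is precisely the monoidal discrepancy between the tensor structure inherited on $E^{-1} \circ \Tube^k(\alpha^{-1})^* \circ E$ through $E$, $E^{-1}$ and the intrinsic monoidal structure of $\DHR(\alpha)$; this discrepancy lands in $\Inv(\caZ(\caC))$ because unit-object ratios between two equivalent tensor autoequivalences of a modular tensor category are classified by the Picard group.
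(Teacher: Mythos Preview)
There is a genuine gap. You conflate two different functors into $\Rep(\Tube_{\caC}(X^{\otimes k}))$. The functor $E$ in the statement is $Z \mapsto \caZ(\caC)(I(X^{\otimes k}) \to Z)$; this \emph{is} the canonical equivalence of Corollary~2.11, so $E^{-1}\circ E \cong \id$ and in particular $E(1_{\caZ(\caC)}) = \caZ(\caC)(I(X^{\otimes k}) \to 1)$ is a simple module, not the regular module $\Tube$. By contrast, Theorem~4.5 identifies the bimodule compactification $\caZ(\caC) \to {}_{\Tube}\pHMod_{\Tube}$ with $Z \mapsto \caZ(\caC)(I(X^{\otimes k}) \to I(X^{\otimes k}) \otimes Z)$; forgetting the left action gives $E(I(X^{\otimes k}) \otimes Z)$, not $E(Z)$. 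Lemma~4.7 is a statement about this bimodule functor, so what it yields after forgetting is
\[
\Tube^k(\alpha^{-1})^*\bigl(E(I(X^{\otimes k}) \otimes Z)\bigr)\;\cong\;E\bigl(I(X^{\otimes k}) \otimes \DHR(\alpha)(Z)\bigr),
\]
not $\Tube^k(\alpha^{-1})^* \circ E \cong E \circ \DHR(\alpha)$. Your claimed isomorphism, combined with $E^{-1}\circ E \cong \id$, would force $\caL_{k,\alpha}\cong 1$ for every $\alpha$; that is strictly stronger than the theorem and is exactly what the argument is \emph{not} entitled to assume.

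The paper's proof uses Lemma~4.7 in a different, module-categorical way. The bimodule compactification equips $\Rep(\Tube)$ with a right $\caZ(\caC)$-action $R \triangleleft Z := R \boxtimes_{\Tube} \widetilde{M(Z)_I}$, and $E$ satisfies $E(Z_1 \otimes Z_2) \cong E(Z_1)\triangleleft Z_2$. Lemma~4.7 then shows $\Tube^k(\alpha^{-1})^*(R\triangleleft Z) \cong \Tube^k(\alpha^{-1})^*(R)\triangleleft \DHR(\alpha)(Z)$, i.e.\ $\Tube^k(\alpha^{-1})^*$ is a $\caZ(\caC)$-module equivalence after twisting the action by $\DHR(\alpha)$. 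Hence $E^{-1}\circ \Tube^k(\alpha^{-1})^* \circ E \circ \DHR(\alpha^{-1})$ is a right $\caZ(\caC)$-module autoequivalence of $\caZ(\caC)$, and every such functor is $\caL\otimes -$ for a necessarily invertible $\caL$. The invertible object appears precisely because $\Tube^k(\alpha^{-1})^*$ need not fix the simple module $E(1)$, which your argument implicitly assumes.
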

\begin{proof}
    By \autoref{thm:fusionspinchainalgebra-DTIcptmod}, we have a standard right $\caZ (\caC)$ action $\triangleleft$ on $\Rep (\Tube_\caC (X^{\otimes}))$ by $\caZ (\caC) \to _\caA \DHR_\caA \to _{\Tube_\caC (X^{\otimes k})} \pHMod_{\Tube_\caC (X^{\otimes k})}$.
    \begin{align}
        \triangleleft: \Rep (\Tube_\caC (X^{\otimes})) \times \caZ (\caC) \to& \Rep (\Tube_\caC (X^{\otimes})) \\
        (R, Z) \mapsto& R \boxtimes \widetilde{M(Z)_I} \\ 
        (f, g) \mapsto& f \boxtimes \widetilde{M(g)_I} 
    \end{align}
    \begin{equation}
        \tikzmath{
            \begin{scope}[]
                \draw[thick] (0,2) arc (180:0:1);
                \draw[thick, orange, decorate, decoration={snake, segment length=1mm, amplitude=.2mm}] ($ (0,2) + (1,1)$) --node[left]{$R$} ($ (0,2) + (1,2)$);
            \end{scope}

            \draw[thick] (0,0) -- (0,2);
            \draw[thick] (2,0) -- (2,2);
            \draw[very thick, blue] (.4,-.13) -- (0.4,1.87);
            \draw[very thick, blue] (.8,-.15) -- (0.8,1.85);
            \draw[very thick, blue] (1.2,-.15) -- (1.2,1.85);
            \draw[very thick, blue] (1.6,-.13) -- (1.6,1.87);
            \halfDottedEllipse{(0,0)}{1}{.15};
            \halfDottedEllipse[red]{(0,1)}{1}{.15};
            \halfDottedEllipse{(0, 2)}{1}{.15};
            \draw[thick, knot, orange, decorate, decoration={snake, segment length=1mm, amplitude=.2mm}] (1.8,0.8) node[right, xshift=.6cm, yshift=.1cm]{$(Z, \sigma)$} to[out=0,in=270] (3,2);
            \roundNboxEllipse[]{(1,1)}{1}{.15}{-140}{-40}{.5}{};
        }
    \end{equation}
    Let $E := \caZ (\caC)(I(X^{\otimes k}) \to -): \caZ (\caC) \to \Rep (\Tube_\caC (X^{\otimes k}))$ and $E^{-1} := (-) \otimes I(X^{\otimes k}): \Rep (\Tube_\caC (X^{\otimes k})) \to \caZ (\caC)$ be its quasi-inverse.
    Observe that for $Z_1, Z_2 \in \caZ (\caC)$,
    \begin{align} 
        E (Z_1 \otimes Z_2) 
        \cong& \caZ (\caC) (I(X^{\otimes k}) \to Z_1 \otimes Z_2) \\ 
        \cong& \caZ (\caC) (I(X^{\otimes k}) \to Z_1) \boxtimes_{\Tube_\caC (X^{\otimes k})} \caZ (\caC) (I(X^{\otimes k}) \to I(X^{\otimes k}) \otimes Z_2) \\ 
        \cong& E(Z_1) \triangleleft Z_2 
    \end{align}
    Thus, the action $\triangleleft$ is equivalent to the right $\caZ (\caC)$ action on $\caZ (\caC)$ by monoidal product.

    Define $\triangleleft_{\DHR(\alpha)} := \triangleleft \circ (\Cat{id} \times \DHR(\alpha))$ and let $\Rep (\Tube_\caC (X^{\otimes}))^{\DHR(\alpha)}$ denote the right $\caZ (\caC)$ module category with the same underlying category but equipped with action $\triangleleft_{\DHR(\alpha)}$. 
    We prove $\Tube^k (\alpha^{-1})^*: \Rep (\Tube_\caC (X^{\otimes k})) \to \Rep (\Tube_\caC (X^{\otimes k}))^{\DHR(\alpha)}$ is an equivalence of $\caZ(\caC)$ module categories.
    Take $R \in \Rep (\Tube_\caC (X^{\otimes k}))$, $Z \in \caZ (\caC)$.
    Then,
    \begin{align}
        \Tube^k (\alpha^{-1})^* (R \triangleleft Z)
        \cong& \paren*{R \boxtimes \widetilde{M(Z)_I}}^{\Tube^k (\alpha^{-1})} \\
        \cong& R^{\Tube^k (\alpha^{-1})} \boxtimes {}^{\Tube^k (\alpha^{-1})} \widetilde{M(Z)_I} {}^{\Tube^k (\alpha)} \\
        \cong& \Tube^k (\alpha^{-1})^* (R) \boxtimes \widetilde{^{\alpha^{-1}} M(Z)^{\alpha^{-1}}_{I^{+s}}} \\ 
        \cong& \Tube^k (\alpha^{-1})^* (R) \triangleleft \DHR(\alpha) (Z),
    \end{align}
    all the isomorphisms being natural in $R$ and $Z$, and associativity and unit coherence satisfied.

    Consider $E^{-1} \circ \Tube^k (\alpha^{-1})^* \circ E \circ \DHR(\alpha^{-1}): \caZ (\caC) \to \caZ (\caC)$.
    It is an equivalence as $\caZ (\caC)$ module: 
    \begin{align}
        &E^{-1} \circ \Tube^k (\alpha^{-1})^* \circ E \circ \DHR(\alpha^{-1}) (Z_1 \otimes Z_2) \\
        \cong& E^{-1} \circ \Tube^k (\alpha^{-1})^* \circ E (\DHR(\alpha^{-1})(Z_1) \otimes \DHR(\alpha^{-1})(Z_2)) \\ 
        \cong& E^{-1} \circ \Tube^k (\alpha^{-1})^* (E (\DHR(\alpha^{-1})(Z_1)) \triangleleft \DHR(\alpha^{-1})(Z_2)) \\ 
        \cong& E^{-1} (\Tube^k (\alpha^{-1})^* (E (\DHR(\alpha^{-1})(Z_1))) \triangleleft Z_2) \\ 
        \cong& E^{-1} (\Tube^k (\alpha^{-1})^* (E (\DHR(\alpha^{-1})(Z_1)))) \otimes Z_2 
    \end{align}
    Thus, there exists an invertible object $\caL_{\alpha, k} = E^{-1} \circ \Tube^k (\alpha^{-1})^* \circ E \circ \DHR(\alpha^{-1}) (1_{\caZ (\caC)}) \in \Inv (\caZ (\caC))$ such that $E^{-1} \circ \Tube^k (\alpha^{-1})^* \circ E \circ \DHR(\alpha^{-1}) \cong \caL_{\alpha, k} \otimes -$.
\end{proof}

\begin{corollary}\label{cor:fusionspinchainalgebra-obstruction}
    Let $\caA = \caA(\caC, X)$ be a fusion spin chain algebra with $\caC$ a unitary fusion category and $X$ a strong tensor generator with $n \ge 1$.
    Let $\alpha: \caA \to \caA$ a $\ZZ$-equivariant bounded-spread isomorphism with spread $s$.
    Then, for any $k > \max \{14 n + 4 s, 6 n + 12 s\}$, $I(X^{\otimes k}) \in \caZ (\caC)$ is fixed by $\DHR(\alpha)$ up to isomorphism and $\Inv (\caZ (\caC))$ orbit: $\gamma_{\alpha, k}: I(X^{\otimes k}) \stackrel{\sim}{\to} \caL_{k,\alpha} \otimes \DHR(\alpha)(I(X^{\otimes k}))$.
\end{corollary}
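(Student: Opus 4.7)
The plan is to apply the natural isomorphism of \autoref{thm:fusionspinchainalgebra-DHR-Picard} and evaluate both sides at the object $I(X^{\otimes k}) \in \caZ(\caC)$. The key observation is that this particular object is precisely the one to which the equivalence $E = \caZ(\caC)(I(X^{\otimes k}) \to -)$ assigns the right regular module of the Tube algebra.

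First I would compute
\begin{align}
    E(I(X^{\otimes k})) = \caZ(\caC)\bigl(I(X^{\otimes k}) \to I(X^{\otimes k})\bigr) = \End_{\caZ(\caC)}\bigl(I(X^{\otimes k})\bigr) = \Tube_\caC(X^{\otimes k}),
\end{align}
viewed as a right module over itself via pre-composition. Next, I would invoke the standard fact that for any unital algebra automorphism $\varphi \colon A \to A$, the pullback of the right regular module $A_A$ along $\varphi$ is canonically isomorphic to $A_A$ itself, with the isomorphism implemented by $\varphi$ (or its inverse, depending on convention). The hypothesis $k > 6n + 12s$ invokes \autoref{cor:fusionspinchainalgebra-DTcptalgebra-boundedspread} to guarantee that $\Tube^k(\alpha^{-1})$ is an honest algebra isomorphism, so
\begin{align}
    \Tube^k(\alpha^{-1})^*\bigl(\Tube_\caC(X^{\otimes k})\bigr) \cong \Tube_\caC(X^{\otimes k})
\end{align}
as right $\Tube_\caC(X^{\otimes k})$-modules.

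Since $E$ is a tensor equivalence (\autoref{cor:fusioncategory-reptube-drinfeldcenter}), its quasi-inverse $E^{-1}$ takes the regular module back to $I(X^{\otimes k})$. Chaining these three identifications gives $E^{-1} \circ \Tube^k(\alpha^{-1})^* \circ E\bigl(I(X^{\otimes k})\bigr) \cong I(X^{\otimes k})$. The full hypothesis $k > \max\{14n+4s,\, 6n+12s\}$ lets us apply \autoref{thm:fusionspinchainalgebra-DHR-Picard}, which identifies the same composite at $I(X^{\otimes k})$ with $\caL_{k,\alpha} \otimes \DHR(\alpha)(I(X^{\otimes k}))$, yielding the desired isomorphism $\gamma_{\alpha,k}$.

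I expect the only subtle point to be verifying that all identifications are sufficiently natural with respect to the module-category structure set up in the proof of \autoref{thm:fusionspinchainalgebra-DHR-Picard}: $E$ must send $I(X^{\otimes k})$ to the regular module under the normalization used there, and the isomorphism of regular modules implementing $\Tube^k(\alpha^{-1})^*(\mathrm{reg}) \cong \mathrm{reg}$ should transport correctly across the quasi-inverse of $E$. Once this bookkeeping is checked, the corollary is essentially immediate from the theorem.
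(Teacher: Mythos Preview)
Your proposal is correct and is exactly the argument the paper has in mind: the corollary carries no proof in the paper because it is meant to be the immediate specialization of \autoref{thm:fusionspinchainalgebra-DHR-Picard} at the object $I(X^{\otimes k})$, using that $E$ sends $I(X^{\otimes k})$ to the right regular module and that the regular module is fixed up to isomorphism by pullback along the algebra automorphism $\Tube^k(\alpha^{-1})$.
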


Once the unitary fusion category $\caC$, a strong tensor generator $X \in \Ob(\caC)$, a braided autoequivalence of its Drinfeld center $F \in \Aut_{\br} (\caZ (\caC))$ are chosen, the theorem implies that if for all $s$ we can find an integer $k \gg s$ such that there is no invertible element in $\Inv(\caZ (\caC))$, then $F$ is non-isomorphic to any $\DHR (\alpha)$ for a bounded spread automorphism $\alpha: \caA (\caC, X) \to \caA (\caC, X)$. 
Therefore, we obtained an obstruction theory for a bulk TQFT symmetry to be implemented on the lattice model. 

Let us look into specific lattice models built from common fusion categories. 
The representation category of a finite abelian group $A$ is one of the simplest examples. 
It is known that its Drinfeld center is computed by $\Rep (D(A))$, the representation category of the quantum double of $A$, and a nondegenerate bicharacter $\beta: A \times A \to \CC$ defines a braided autoequivalence $F_\beta \in \Aut_\br (\Rep (D(A)))$ \cite{EGNO15}. 
For example, when $A = \ZZ / 2 \ZZ$ and $\beta$ is the unique nondegenerate bicharacter,
\begin{align}
    \beta (x, y) = (-1)^{xy}
\end{align}
this induces $e-m$ swap symmetry of $\ZZ / 2$ gauge theory. 
We prove we cannot implement this type of symmetries on the lattice model built from a generic choice of ``spin'' $X \in \Ob (\Rep (A))$. 

\begin{example}\label{ex:fusionspinchainalgebra-dhr-picard-rep-abelian}
    Let $A$ be a finite abelian group, $\caC = \Rep(\CC[A])$, $X = \bigoplus_{\chi \in \widehat{A}} a_\chi \chi \in \caC$ be a strong tensor generating object with $n \ge 0$ that is not a multiple of the regular representation (i.e., $a_{(\cdot)}: \widehat{A} \to \ZZ_{\ge 0}$ is not constant.) 
    Let $\caA = \caA(\caC, X)$ be the corresponding fusion spin chain algebra.
    Then, any $F_\beta \in \CAut_\br (\caZ (\caC))$ defined by a nondegenerate bicharacter $\beta: A \times A \to \CC$ does not arise as $\DHR (\alpha)$ for any $\ZZ$-equivariant quantum cellular automaton $\alpha: \caA \to \caA$.
    
    This obstruction is stable under adding ancila multiplicity space (i.e., replacing $X$ with $X \otimes \CC^{l} \cong X^{\oplus l}$.)
\end{example}
\begin{proof}    
    The simple objects in $\caZ (\caC) \cong \Rep (D(\CC[A]))$ are labeled by $A \times \widehat{A}$ and $c_{(a, \chi), (b, \psi)} = \chi(b) \id_{(b, \psi) \otimes (a, \chi)}$.
    For $\chi \in \Irr (\caC) \cong \widehat{A}$, we have $I(\chi) = \bigoplus_{a \in A} (a, \chi)$ for all $\chi \in \widehat{A}$. 
    (It follows from $I \dashv F$.)
    It is known that a nondegenerate bicharacter $\beta: A \times A \to \CC$ defines $F_\beta \in \CAut_\br (\caZ (\caC))$ 
    \begin{align}
        F_\beta: \caZ (\caC) &\to \caZ (\caC) \\
        F (a, \chi) &= (\phi_\beta^{-1}(\chi), \phi_\beta (a)) \\ 
        (F_\beta)^2_{(a, \chi), (b, \psi)} &= \psi(a) \id_{F_\beta (a, \chi) \otimes F_\beta (b, \psi)} 
    \end{align}
    where $\phi_\beta: A \to \widehat{A}, g \mapsto \beta(g, \cdot)$ is the musical isomorphism.

    Let $\alpha: \caA \to \caA$ be a $\ZZ$-equivariant bounded-spread isomorphism with spread $s$ and suppose $\DHR (\alpha) \cong F_\beta$. 
    Then, by \autoref{cor:fusionspinchainalgebra-obstruction}, for a sufficiently large $k \gg s, n$, there exists $(g, \lambda) \in \Inv (\caZ (\caC)) \cong A \times \widehat{A}$ such that $(g, \lambda) \otimes I(X^{\otimes k}) \cong F_\beta (I(X^{\otimes k}))$.
    \begin{align}
        X^{\otimes k} \cong& \bigoplus_{\chi \in \widehat{A}} a^{\ast k}_\chi \chi \\ 
        (g, \lambda) \otimes I(X^{\otimes k}) \cong& \bigoplus_{\chi \in \widehat{A}} a^{\ast k}_\chi \bigoplus_{a \in A} (g a, \lambda \chi) \\
        F_\beta (I(X^{\otimes k})) \cong& \bigoplus_{\chi \in \widehat{A}} a^{\ast k}_\chi \bigoplus_{a \in A} (\phi_\beta^{-1}(\chi), \phi_\beta (a)),
    \end{align}
    where $a^{\ast k}_\chi = \sum_{\chi_1 \cdots \chi_k = \chi} a_{\chi_1} \cdots a_{\chi_k}$ is the $k$-fold convolution.
    Comparing the multiplicity of $(b, \psi)$, $a^{\ast k}_{\psi \lambda^{-1}} = a^{\ast k}_{\phi_\beta (b)}$.
    Thus, $a^{\ast k}_{(\cdot)}: \widehat{A} \to \ZZ_{\ge 0}$ is constant on $\widehat{A}$.
    After the Fourier transformation,
    \begin{align}
        \widehat{a}^k_{a} 
        =& \widehat{a^{\ast k}}_a \\
        :=& \sum_{\chi \in \widehat{A}} \overline{\chi(a)} a^{\ast k}_\chi \\
        =& \begin{cases}
            a^{\ast k}_{\mathrm{triv}} |A| & a = 1_A \\ 
            0 & a \neq 1_A
        \end{cases}
    \end{align}
    Thus $a^{\ast 1}_{(\cdot)}: \widehat{A} \to \ZZ_{\ge 0}$ is constant on $\widehat{A}$, contradicting the assumption on $X$.
\end{proof}

We also analyze a nonabelian example. 
Observe that although generalizing this to $\Rep (G)$ for an arbitrary group $G$ is difficult, once a group is fixed one can easily apply a similar procedure. 
The invertible objects in $\caZ (\Rep (G))$ are the invertible Yetter-Drinfeld module over $\CC[G]$, which are exactly labeled by the center $Z(G)$ and $\widehat{G}$.

\begin{example}\label{ex:fusionspinchainalgebra-dhr-picard-rep-nonabelian}
    Let $D_4 = \braket{r, s}{r^4 = s^2 = 1, s r s = r^3}$, $\caC = \Rep (D_4)$ and $[\phi] \in \Out (D_4)$ be the unique nontrivial class represented by, for instance, $\phi: r \mapsto r, s \mapsto r s$.
    Let $\chi_{\epsilon, \delta} \in \Irr (\caC) \cong \widehat{D_4}\cong (\ZZ / 2 \ZZ)^2$ be one dimensional representations with $r \mapsto (-1)^{\epsilon}, s \mapsto (-1)^{\delta}$ and $\rho \in \Irr(\caC)$ be the unique irreducible two dimensional representation. 
    Let $(X, n)$, $X = \bigoplus a_{\epsilon, \delta} \chi_{\epsilon, \delta} \oplus b \rho$ be a strong tensor generating object such that 
    \begin{align}\label{eq:ex-fusionspinchainalgebra-d4-condition}
        a_{(0,0)} \neq a_{(0,1)} \;\text{and}\; a_{(1,0)} \neq a_{(1,1)}
    \end{align}
    Then, $F_\phi \in \CAut_\br (\caZ (\caC))$ induced by $\phi$ does not arise as $\DHR (\alpha)$ for any $\ZZ$-equivariant quantum cellular automaton $\alpha$ on the quasi-local *-algebra $\caA (\caC, X)$.

    This obstruction is stable under adding ancila multiplicity space (i.e., replacing $X$ with $X \otimes \CC^{l} \cong X^{\oplus l}$.)
\end{example}
\begin{proof}
    Recall that simple objects in $\caZ (\caC)$ are labeled by $([r^a s^b], \pi)$, where $[r^a s^b]$ is a conjugacy class of $D_4$ and $\pi$ is an irreducible representation on the centralizer $Z_{D_4} ([r^a s^b])$.
    Concretely, there are 22 simple objects: 
    \begin{itemize}
        \item 5 from $[1] = \{1\}$: $(1, \chi_{\epsilon, \delta}), ((\epsilon, \delta) \in (\ZZ / 2 \ZZ)^2)$ and $(1, \rho)$,
        \item 5 from $[r^2] = \{r^2\}$: $(1, \chi_{\epsilon, \delta}), ((\epsilon, \delta) \in (\ZZ / 2 \ZZ)^2)$ and $(1, \rho)$,
        \item 4 from $[r] = \{r, r^3\}$: $(1, \lambda_j)$, where $\lambda_j: Z_{D_4} (r) = \angles*{r} \to U(1), r \mapsto i^j, j = 0, 1, 2, 3$,
        \item 4 from $[s] = \{s, r^2 s\}$: $(1, \nu_{u,v})$, where $\nu_{u,v}: Z_{D_4} (s) = \{1, s, r^2, r^2 s\} \to U(1), s \mapsto (-1)^u, r^2 \mapsto (-1)^v, (u, v) \in (\ZZ / 2 \ZZ)^2$,
        \item 4 from $[rs] = \{r s, r^3 s\}$: $(1, \nu'_{u',v'})$, where $\nu'_{u',v'}: Z_{D_4} (r s) = \{1, r s, r^2, r^3 s\} \to U(1), r s \mapsto (-1)^{u'}, r^2 \mapsto (-1)^{v'}, (u', v') \in (\ZZ / 2 \ZZ)^2$.
    \end{itemize}

    Recall also that $\caC$ has Tambara-Yamagami type fusion rule:
    \begin{align}
        \chi_{\epsilon, \delta} \otimes \chi_{\epsilon', \delta'} =& \chi_{\epsilon + \epsilon', \delta + \delta'} \\ 
        \chi_{\epsilon, \delta} \otimes \rho =& \rho \otimes \chi_{\epsilon, \delta} = \rho \\ 
        \rho \otimes \rho =& \bigoplus_{(\epsilon, \delta) \in (\ZZ / 2 \ZZ)^2} \chi_{\epsilon, \delta}.
    \end{align}
    One can compute $I: \caC \to \caZ (\caC)$ on simple objects by Frobenius reciprocity. 
    \begin{align}
        I(\chi_{\epsilon, \delta}) =& ([1], \chi_{\epsilon, \delta}) \oplus ([r^2], \chi_{\epsilon, \delta}) \oplus ([r], \lambda_{2 \epsilon}) \oplus ([s], \nu_{0, \delta}) \oplus ([r s], \nu'_{0, \epsilon + \delta}) \\ 
        I(\rho) =& ([1], \rho) \oplus ([r^2], \rho) \oplus (r, \lambda_1 \oplus \lambda_2) \oplus ([s], \nu_{1,0} \oplus \nu_{1,1}) \oplus ([r s], \nu'_{10} \oplus \nu'_{11})
    \end{align}

    $F_\phi$ induced by $\phi: D_4 \to D_4$ is given by
    \begin{align}
        ([r^a s^b], \pi) \mapsto ([\phi(r^a s^b)], \pi \circ \phi^{-1}).
    \end{align}
    The associated tensorator is the identity on the underlying vector space. 
    For instance, $([1], \chi_{\epsilon, \delta}) \mapsto ([1], \chi_{\epsilon, \epsilon+\delta})$, $([r^2], \chi_{\epsilon, \delta}) \mapsto ([r^2], \chi_{\epsilon, \epsilon+\delta})$, and $([1], \rho), ([r^2], \rho)$ are fixed up to isomorphism.

    Let $\alpha: \caA \to \caA$ be a $\ZZ$-equivariant bounded-spread isomorphism with spread $s$ and suppose $\DHR(\alpha) \cong F_\phi$. 
    Then, by \autoref{cor:fusionspinchainalgebra-obstruction}, for a sufficiently large $k \gg s, n$, there exists $([r^a s^b], \psi) \in \Inv (\caZ (\caC))$ such that $([r^a  s^b], \psi) \otimes I(X^{\otimes k}) \cong F_\phi (I(X^{\otimes k}))$. 
    Here, 
    \begin{align}
        \Inv (\caZ (\caC)) = \paren*{([1], \chi_{\epsilon, \delta}), ([1], \rho), ([r^2], \chi_{\epsilon, \delta}), ([r^2], \rho)}
    \end{align}

    Notice that the condition \ref{eq:ex-fusionspinchainalgebra-d4-condition} is stable under taking tensor power.
    If $a_{(0,0)} \neq a_{(0,1)}$, $a_{(1,0)} \neq a_{(1,1)}$, then 
    \begin{align}
        X^{\otimes 2} =& \paren*{a_{(0,0)}^2 + a_{(0,1)}^2 + a_{(1,0)}^2 + a_{(1,1)}^2 + b^2} \chi_{(0,0)} \\ 
        +& \paren*{2 a_{(0,0)} a_{(0,1)} + 2 a_{(1,0)} a_{(1,1)} + b^2} \chi_{(0,1)} \\ 
        +& \paren*{2 a_{(0,0)} a_{(1,0)} + 2 a_{(0,1)} a_{(1,1)} + b^2} \chi_{(1,0)} \\
        +& \paren*{2 a_{(0,0)} a_{(1,1)} + 2 a_{(0,1)} a_{(1,0)} + b^2} \chi_{(1,1)} \\
        +& 2 b \paren*{a_{(0,0)} + a_{(0,1)} + a_{(1,0)} + a_{(1,1)}} \rho. 
    \end{align}
    Comparing the coefficients for $\chi_{(0,0)}, \chi_{(0,1)}$, 
    \begin{align}
        & \paren*{a_{(0,0)}^2 + a_{(0,1)}^2 + a_{(1,0)}^2 + a_{(1,1)}^2} - \paren*{2 a_{(0,0)} a_{(0,1)} + 2 a_{(1,0)} a_{(1,1)}} \\ 
        =& \paren*{a_{(0,0)} - a_{(0,1)}}^2 + \paren*{a_{(1,0)} - a_{(1,1)}}^2 \neq 0,
    \end{align}
    contradicting the condition \ref{eq:ex-fusionspinchainalgebra-d4-condition}.
    Similarly, the coefficients for $\chi_{(1,0)}, \chi_{(1,1)}$ are different.
    Thus, let $a_{\epsilon, \delta}, b$ denote the coefficients of $X^{\otimes k}$ with the condition satisifed. 
    \begin{align}
        I(X^{\otimes k}) =& \paren*{[1], \paren*{\bigoplus a_{\epsilon, \delta} \chi_{\epsilon, \delta}}} \oplus \paren*{[r^2], \paren*{\bigoplus a_{\epsilon, \delta} \chi_{\epsilon, \delta}}} \oplus \cdots \\
        F_\phi (I(X^{\otimes k})) =& \paren*{[1], \paren*{\bigoplus a_{\epsilon, \delta} \chi_{\epsilon, \epsilon + \delta}}} \oplus \paren*{[r^2], \paren*{\bigoplus a_{\epsilon, \delta} \chi_{\epsilon, \epsilon + \delta}}} \oplus \cdots .
    \end{align}
    However,
    \begin{align}
        ([1], \chi_{\epsilon', \delta'}) \otimes I(X^{\otimes k}) =& \paren*{[1], \paren*{\bigoplus a_{\epsilon, \delta} \chi_{\epsilon + \epsilon', \delta + \delta'}}} \oplus \paren*{[r^2], \paren*{\bigoplus a_{\epsilon, \delta} \chi_{\epsilon + \epsilon', \delta + \delta'}}} \oplus \cdots \\
        ([r^2], \chi_{\epsilon', \delta'}) \otimes I(X^{\otimes k}) =& \paren*{[1], \paren*{\bigoplus a_{\epsilon, \delta} \chi_{\epsilon + \epsilon', \delta + \delta'}}} \oplus \paren*{[r^2], \paren*{\bigoplus a_{\epsilon, \delta} \chi_{\epsilon + \epsilon', \delta + \delta'}}} \oplus \cdots .
    \end{align}
    The condition \ref{eq:ex-fusionspinchainalgebra-d4-condition} implies they are non-isomorphic to $F_\phi (I(X^{\otimes k}))$.
\end{proof}

We can generalize the result in another direction, namely to Tambara-Yamagami categories, with a condition on the choice of $A$ and the braided autoequivalence.
\autoref{eq:fusionspinchainalgebra-ty-condition2} below becomes easily testable once the group and the automorphism are specified. 
\begin{example}
    Let $A$ be a finite abelian group, $\chi: A \times A \to \CC^{\times}$ a nondegenerate symmetric bicharacter, $\tau \in \{\pm \sqrt{|A|^{-1}}\}$, $\caC = \TY (A, \chi, \tau)$ be the Tambara-Yamagami category, and $\phi \in \Aut (A, \chi)$ be an isometry that satisfies the following condition:
    \begin{gather}\label{eq:fusionspinchainalgebra-ty-condition2}
        \text{If a function } f: \widehat{A} \to \ZZ \text{ satisfies for any } b \in A \quad \chi_b \cdot f \neq f \circ \phi^* \\ 
        \text{then for any } b \in A \quad \chi_b \cdot f^2 \neq f^2 \circ \phi^*,
    \end{gather}
    where $\chi_b: \widehat{A} \to \CC^{\times}, \psi \mapsto \psi(b)$ and $\phi^*: \widehat{A} \to \widehat{A}, \psi(\cdot) \mapsto \psi(\phi(\cdot))$.
    Let $X = \bigoplus_{a \in A} n_a a \oplus n_m m$ be a strong tensor generating object with $n \ge 0$ such that 
    \begin{align}\label{eq:ex-fusionspinchainalgebra-ty-condition}
        \text{For all } c \in A \quad \chi_c \cdot \widehat{n} \neq \widehat{n} \cdot \phi^*
    \end{align}
    Let $\caA = \caA (\caC, X)$ be the corresponding fusion spin algebra, 
    Then, $F_\phi \in \CAut_{\br} (\caZ (\caC))$ induced by $\phi$ does not arise as $\DHR(\alpha)$ for any $\ZZ$-equivariant quantum cellular automaton $\alpha: \caA \to \caA$.

    This obstruction is stable under adding ancila multiplicity space (i.e., replacing $X$ with $X \otimes \CC^l \cong X^{\oplus l}$.)
\end{example}
\begin{proof}
    Recall the results in \cite{gelaki2009centersgradedfusioncategories}. 
    The simple objects in $\caC$ are labeled by $\{a \in A\} \cup \{m\}$ with fusion rules:
    \begin{align}
        a \otimes b =& (a + b), \\ 
        a \otimes m =& m \otimes a = m, \\
        m \otimes m =& \bigoplus_{a \in A} a
    \end{align}
    and the associators are defined in terms of $\chi, \tau$. 
    The simple objects in $\caZ (\caC)$ is labeled by 
    \begin{enumerate}
        \item $X_{a, \epsilon}$ for $a \in A$, $\epsilon \in \braces*{\pm \sqrt{\chi(a,a)^{-1}}}$.
        \item $Y_{a, b}$ for $a \neq b \in A$.
        \item $Z_{\rho, \Delta}$, where $\rho: A \to \CC^\times$, $\Delta \in \braces*{\pm \sqrt{\tau \prod_{a \in A} \rho(a)^{-1}}}$.
    \end{enumerate}
    The invertible elements are exactly the first family.
    For fusion rules and more detailed analysis, see \cite{gelaki2009centersgradedfusioncategories}.

    \autoref{eq:ex-fusionspinchainalgebra-ty-condition} is equivalent to $n_{(\cdot) - c} \neq n_{\phi^{-1} (\cdot)}$. 
    The assumptions guarantee that this is stable under taking tensor powers. 

    Let $\alpha: \caA \to \caA$ be a $\ZZ$-equivariant bounded-spread automorphism with spread $s$ and suppose $\DHR(\alpha) \cong F_\phi$. 
    Then, by \autoref{cor:fusionspinchainalgebra-obstruction}, for a sufficiently large $k \gg s, n$, there exists $X_{a, \epsilon}$ such that $X_{a, \epsilon} \otimes I(X^{\otimes k} \cong F_{\phi} (I(X^{\otimes k}))$.
    Since the condition \ref{eq:ex-fusionspinchainalgebra-ty-condition} is stable under tensor power, let $n_a$, $n_m$ denote the coefficients of $X^{\otimes k}$ from now. 
    Computing $I$ by reciprocity, 
    \begin{align}
        & X_{a, \epsilon} \otimes I(X^{\otimes k}) \\
        =& X_{a, \epsilon} \otimes \paren*{\bigoplus_{a \in A} n_a \paren*{X_{a, +} \oplus X_{a, -} \oplus \bigoplus_{b \neq a} Y_{a,b}} \oplus n_m \paren*{\bigoplus_{\rho} Z_{\rho, +} \oplus Z_{\rho, -}}} \\ 
        =& \bigoplus_{a \in A} n_{a-c} (X_{a, +} \oplus X_{a, -} \oplus \cdots) \oplus \cdots  \\ 
        & F_\phi (I(X^{\otimes k})) \\ 
        =& \bigoplus_{a \in A} n_{\phi^{-1} (a)} (X_{a, +} \oplus X_{a, -} \oplus \cdots) \oplus \cdots .
    \end{align}
    They are non-isomorphic by assumption.
\end{proof}

More generally, we have a condition on the braided autoequivalence that can be easily checked to test the implementability. 
Indeed, the condition below reduces to linear algebra once we know the fusion ring of $\caZ (\caC)$ and how $I$, $F$ acts on simple objects.
This criterion does not yield a meaningful answer if, for example, $F \in \CAut_\br (\caZ (\caC))$ acts trivially at the object level.

\begin{example}
    Let $\caC$ be a unitary fusion category and let $F \in \CAut_{\br} (\caZ (\caC))$ be a braided autoequivalence such that for all $g \in \Inv (\caZ (\caC))$, there exists $Y \in \Ob (\caC)$ such that $g \otimes F(I(Y)) \not\cong I(Y)$.

    Then, there exists $X \in \Ob (\caC)$ such that $F \not\cong \DHR (\alpha)$ for any quantum cellular automaton $\alpha$ on the quasi-local *-algebra $\caA (\caC, X)$.

    This obstruction is stable under adding ancila multiplicity space (i.e., replacing $X$ with $X \otimes \CC^{l} \cong X^{\oplus l}$.)
\end{example}
\begin{proof}
    Let $S_g := \{Z \in \Ob (\caC) \mid g \otimes F(I(Z)) \cong I(Z)\}$.
    The assumption implies $K_0 (S_g) \subsetneq K_0 (\caC)$, where $K_0$ is the Grothendieck group.
    Turning to vector spaces, $K_0 (S_g) \otimes_\ZZ \QQ \subsetneq K_0 (\caC) \otimes_\ZZ \QQ$ and
    \begin{align}
        \paren*{\bigcup_{g \in \Inv (\caZ (\caC))} K_0 (S_g) \otimes_\ZZ \QQ} \cap {(\QQ^+)}^r \subsetneq \paren*{K_0 (\caC) \otimes_\ZZ \QQ} \cap {(\QQ^+)}^r,
    \end{align}
    where ${(\QQ^+)}^r \subseteq \QQ^r$ is the positive cone.
    Take $[X'] \in K_0 (\caC) \otimes_\ZZ \QQ \setminus \bigcup_{g \in \Inv (\caZ (\caC))} K_0 (S_g) \otimes_\ZZ \QQ$ with nonnegative coefficients and clear denominators to obtain $X \in \caC$.
\end{proof}

\begin{corollary}\label{cor:fusionspinchainalgebra-dhr-picard-2group}
    Let $\caA = \caA(\caC, X)$ be a fusion spin chain algebra with $\caC$ a unitary fusion category and $X$ a strong tensor generator with $n \ge 1$.
    Let $\alpha_{(\cdot)}: G \to \QLstAlg_{\ZZ, \ZZ} (\caA, \caA)$ be a group homomorphism from a finite group and $s < \infty$ be at least the maximum spread of $\{\alpha_g; g \in G\}$, $k > \max \{14 n + 4 s, 6 n + 12 s\}$.
    Then, the 2-group homomorphism $G \to \CAut_{\br} (\caZ (\caC)), g \mapsto \DHR(\alpha_g)$ lifts to 
    \begin{equation}\begin{tikzcd}
        & (\CAut_{\br} (\caZ (\caC)) \ltimes \Inv (\caZ (\caC)))_{I(X^{\otimes k})} \arrow[d, "\Cat{Forget}"] \\ 
        G \arrow{r}[swap]{\DHR(\alpha_{(\cdot)})} \arrow[ru] & \CAut_{\br} (\caZ (\caC))
    \end{tikzcd}\end{equation}
    where $(\CAut_{\br} (\caZ (\caC)) \ltimes \Inv (\caZ (\caC)))_{I(X^{\otimes k})}$ consists of $(F, \caL, \gamma: I (X^{\otimes k}) \stackrel{\sim}{\to} \caL \otimes F(I(X^{\otimes k})))$ and $(\eta, \phi): (F, \caL, \gamma) \to (G, \caM, \gamma')$ with $\eta: F \to G$ a braided tensor natural isomorphism and $\phi: \caL \to \caM$ an isomorphism such that 
    \begin{equation}\begin{tikzcd}
        I(X^{\otimes k}) \arrow[r, "\gamma"] \arrow[rd, "\gamma'"] & \caL \otimes F(I(X^{\otimes k})) \arrow{d}{\phi \otimes \eta_{I(X^{\otimes k})}} \\
        & \caM \otimes G(I(X^{\otimes k})) 
    \end{tikzcd}\end{equation}
\end{corollary}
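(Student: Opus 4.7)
The plan is to construct the lift pointwise using \autoref{cor:fusionspinchainalgebra-obstruction} and then verify multiplicativity up to coherent 2-isomorphism. For each $g \in G$, I set
\[
    \widetilde{\alpha}_g := (\DHR(\alpha_g),\; \caL_{k, \alpha_g},\; \gamma_{\alpha_g, k}),
\]
where $\caL_{k, \alpha_g} \in \Inv(\caZ(\caC))$ and $\gamma_{\alpha_g, k} \colon I(X^{\otimes k}) \xrightarrow{\sim} \caL_{k, \alpha_g} \otimes \DHR(\alpha_g)(I(X^{\otimes k}))$ are the data supplied by \autoref{thm:fusionspinchainalgebra-DHR-Picard}. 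The bound $k > \max\{14n + 4s, 6n + 12s\}$ applies uniformly over $g \in G$ because $s$ dominates every individual spread, so every instance of the theorem is simultaneously valid.

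Next I would verify the 2-group multiplicativity. Since $\DHR(\beta) = (\beta^{-1})^*$ is covariantly functorial on the core of $\QLstAlg_{\ZZ, \ZZ}$ and $\alpha_{(\cdot)}$ is a group homomorphism, $\DHR(\alpha_g) \circ \DHR(\alpha_h) = \DHR(\alpha_{gh})$ on the nose. Analogously, $\beta \mapsto \Tube^k(\beta^{-1})^*$ is covariantly functorial on morphisms of spread at most $s$ (\autoref{cor:fusionspinchainalgebra-DTcptalgebra-boundedspread}, valid for $k$ in our range), so
\[
    \Tube^k(\alpha_{gh}^{-1})^* = \Tube^k(\alpha_g^{-1})^* \circ \Tube^k(\alpha_h^{-1})^*.
\]
Combining this with the natural isomorphism $\eta^\beta \colon E^{-1} \circ \Tube^k(\beta^{-1})^* \circ E \Rightarrow \caL_{k, \beta} \otimes \DHR(\beta)(-)$ from the proof of \autoref{thm:fusionspinchainalgebra-DHR-Picard} and evaluating at $1_{\caZ(\caC)}$, I obtain a canonical isomorphism
\[
    \phi_{g, h} \colon \caL_{k, \alpha_g} \otimes \DHR(\alpha_g)(\caL_{k, \alpha_h}) \xrightarrow{\sim} \caL_{k, \alpha_{gh}}.
\]

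The compatibility of the $\gamma$'s with $\phi_{g,h}$ is then a naturality assertion: the whiskered composite $\eta^{\alpha_g} \ast \DHR(\alpha_g)(\eta^{\alpha_h})$ evaluated at $I(X^{\otimes k})$ produces the composite of $\gamma_{\alpha_g, k}$ followed by $\DHR(\alpha_g)$ applied to $\gamma_{\alpha_h, k}$; this must agree with $\eta^{\alpha_{gh}}$ evaluated at $I(X^{\otimes k})$ after applying $\phi_{g,h} \otimes \id$. This defines a 2-isomorphism $\widetilde{\alpha}_g \cdot \widetilde{\alpha}_h \cong \widetilde{\alpha}_{gh}$ in the fiber $(\CAut_{\br}(\caZ(\caC)) \ltimes \Inv(\caZ(\caC)))_{I(X^{\otimes k})}$.

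The main obstacle is verifying the pentagon coherence of the $\phi_{g,h}$, i.e., that the two ways of associating $\widetilde{\alpha}_g \cdot \widetilde{\alpha}_h \cdot \widetilde{\alpha}_\ell$ and collapsing via $\phi_{g, h\ell} \circ (\id \otimes \DHR(\alpha_g)(\phi_{h, \ell}))$ versus $\phi_{gh, \ell} \circ (\phi_{g, h} \otimes \id)$ produce the same 2-isomorphism into $\widetilde{\alpha}_{gh\ell}$. This reduces to the strict associativity of the two covariant functors $\beta \mapsto \DHR(\beta)$ and $\beta \mapsto \Tube^k(\beta^{-1})^*$ on triple compositions, both of which hold on the nose by the explicit formulas for pullback of bimodules and for $\Tube^k$ as the quotient of a free product (see \autoref{def:DTcptalgebra}), so the pentagon commutes tautologically.
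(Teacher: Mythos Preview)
Your proposal is correct and follows the same strategy as the paper: obtain the pointwise data $(\DHR(\alpha_g), \caL_{k,\alpha_g}, \gamma_{\alpha_g,k})$ from \autoref{thm:fusionspinchainalgebra-DHR-Picard} and package it as a 2-group homomorphism into the stabilizer of $I(X^{\otimes k})$. In fact you are considerably more explicit than the paper, whose proof only recalls the source homomorphism $G \to \CAut_{\br}(\caZ(\caC))$ and writes out the semidirect-product 2-group structure and its action, leaving the actual construction of the lift and the coherence verification to the reader via \autoref{cor:fusionspinchainalgebra-obstruction}.

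One small point deserves a line of justification: your claim that $\Tube^k(\alpha_{gh}^{-1})^* = \Tube^k(\alpha_g^{-1})^* \circ \Tube^k(\alpha_h^{-1})^*$ strictly requires $\Tube^k$ to be functorial on compositions, which in the framework of \autoref{cor:fusionspinchainalgebra-DTcptalgebra-boundedspread} means choosing a three-step parameter chain $(D_1,T_1)\to(D_2,T_2)\to(D_3,T_3)$ with successive $+2s$ gaps, all still satisfying $k > 2T_i + \delta + 2$ from \autoref{thm:fusionspinchainalgebra-DTcptalgebra}. The stated bound on $k$ accommodates this, but the paper does not spell it out either.
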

\begin{proof}
    Recall given $\alpha_{(\cdot)}: G \to \QLstAlg(\caA, \caA)$, we have a strict monoidal functor $G \to \CAut_{\br} (_\caA \DHR_\caA), g \mapsto \DHR(\alpha_g)$. 
    Through braided equivalence $M: \caZ (\caC) \to _\caA \DHR_\caA$ and a quasi-inverse $M^{-1}$, there is a 2-group homomorphism $G \to \CAut_{\br} (\caZ (\caC))$ given by $g \mapsto M^{-1} \circ \DHR(\alpha_g) \circ M$.

    $\CAut_{\br} (\caZ (\caC)) \ltimes \Inv (\caZ (\caC))$ is a 2-group with multiplication $(F, \caL) \otimes (G, \caM) = (F \circ G, \caL \otimes F(\caM))$ and unit $(\Cat{id}_{\caZ (\caC)}, 1_{\caZ (\caC)})$ and associator 
    \begin{align}
        (F, \caL) \otimes ((G, \caM) \otimes (H, \caN)) 
        =& (F \circ G \circ H, \caL \otimes F(\caM \otimes G(\caN))) \\
        \xrightarrow{(\Cat{id}_{F \circ G \circ H}, \alpha_{\caL, \caM, \caN} \circ F^2_{\caM, G(\caN)})}& (F \circ G \circ H, (\caL \otimes F(\caM)) \otimes (F \circ G)(\caN)) \\ 
        =& ((F, \caL) \otimes (G, \caM)) \otimes (H, \caN),
    \end{align}
    where $\alpha$ is the associator of $\caZ (\caC)$ and $F^2_{(\cdot), (\cdot)}: F(\cdot) \otimes F(\cdot) \to F(\cdot \otimes \cdot)$ is the tensorator. 
    The action on $\caZ (\caC)$ is given by $(F, \caL) \triangleright Z = \caL \otimes F(Z)$ with obvious associator and the unit map.    
\end{proof}

\printbibliography

\end{document}